\newtheorem{theorem}{Theorem}[section]
\newtheorem{lemma}[theorem]{Lemma}
\newtheorem{proposition}[theorem]{Proposition}
\newtheorem{definition}[theorem]{Definition}
\newtheorem{corollary}[theorem]{Corollary}
\newtheorem*{remark}{Remark}
\newenvironment{customthm}[1]
  {\innercustomthm}
  {\endinnercustomthm}
\newtheorem{example}{Example}[section]
\definecolor{english}{rgb}{0.0, 0.5, 0.0}
\definecolor{lilac}{RGB}{230,215,255}
\definecolor{salmon}{RGB}{255,55,40}
\definecolor{customblue}{RGB}{53,140,219}
\newcommand{\x}{\mathbf{x}}
\renewenvironment{abstract}
 {\small
  \begin{center}
  \bfseries \abstractname\vspace{-.5em}\vspace{0pt}
  \end{center}
  \list{}{
    \listparindent 1.5em
    \setlength{\leftmargin}{8mm}
    \itemindent    \listparindent
    \setlength{\rightmargin}{\leftmargin}
  }
  \item\relax}
 {\endlist}
\author{
  Luca D'Amico-Wong\\
  Harvard University\\
  \texttt{ldamicowong@g.harvard.edu}
  \and 
  Yannai A. Gonczarowski\\
  Harvard University\\
  \texttt{yannai@gonch.name}
  \and 
  Gary Qiurui Ma\\
  Harvard Univeristy\\
  \texttt{qiurui\char`_ma@g.harvard.edu}
  \and
  David C. Parkes \\
  Harvard University \\
  \texttt{parkes@eecs.harvard.edu}
}
\title{Disrupting Bipartite Trading Networks:\\ Matching for Revenue Maximization\thanks{The authors thank Sid Banerjee for helpful comments and discussions. Gonczarowski gratefully acknowledges research support by the National Science Foundation (NSF-BSF grant No.\ 2343922), Harvard FAS Inequality in America Initiative, and Harvard FAS Dean’s Competitive Fund for Promising Scholarship.}}
\date{}
\begin{document}
\maketitle

\begin{abstract}
We model the role of an online platform disrupting a market with unit-demand buyers and unit-supply sellers. Each seller can transact with a subset of the buyers whom she already knows, as well as with any additional buyers to whom she is introduced by the platform. Given these constraints on trade, prices and transactions are induced by a competitive equilibrium. The platform's revenue is proportional to the total price of all trades between platform-introduced buyers and sellers.

In general, we show that the platform's revenue-maximization problem is computationally intractable. We provide structural results for revenue-optimal matchings and isolate special cases in which the platform can efficiently compute them. Furthermore, in a market where the maximum increase in social welfare that the platform can create is $\Delta W$, we prove that the platform can attain revenue $\Omega(\Delta W/\log(\min\{n,m\}))$, where $n$ and $m$ are the numbers of buyers and sellers, respectively. When $\Delta W$ is large compared to welfare without the platform, this gives a polynomial-time algorithm that guarantees a logarithmic approximation of the optimal welfare as revenue. We also show that even when the platform optimizes for revenue, the social welfare is at least an $O(\log(\min\{n,m\}))$-approximation to the optimal welfare. Finally, we prove significantly stronger bounds for revenue and social welfare in homogeneous-goods markets. 
\end{abstract}

\section{Introduction}
\label{sec:intro}

Online platforms, such as Amazon and Uber Eats, have an increasingly important role to play in the modern economy. As discussed in~\citet{WangMELTZP23}, such platforms bring together buyers and sellers, erase the physical and geographic barriers preventing trade, and generate value in the form of new transaction opportunities. The significance of these platforms was perhaps most evident during the COVID-19 pandemic. As entire populations were put under lockdown, consumers flocked to these platforms, and Uber Eats saw a surge in both supply and demand \citep{raj2020covid} while Amazon's stock price nearly doubled between March 13th and late August of 2020. \citep{icsik2021impact,tiananalysis}.

However, the growing market power of these platforms has also been associated with unwanted strategic practices, such as a platform favoring specific merchants over other, less profitable ones to maximize its own revenue. In 2019, the European Union Commission filed an antitrust lawsuit against Amazon, alleging that the latter had directed buyers to third-party sellers who paid it hefty delivery and storage fees, obscuring better deals elsewhere~\citep{veljanovski2022algorithmic}. While platforms play a crucial role in disrupting markets and increasing social welfare, a platform may also benefit by strategically matching buyers and sellers to further its own gain. In response to these concerns, in this paper we present a theoretical framework for studying the platform's revenue-maximization problem and the 
impact of the platform's strategic behavior on social welfare. 

We adapt a model from~\citet{Alon2023platform} to our setting. We consider a market of \emph{unit-demand} buyers and \emph{unit-supply} sellers that is mediated by an online platform. Each buyer 
can transact with a set of sellers whom she already knows, and in addition with any 
sellers that might be introduced to her by the platform.
Given these constraints on trade, prices and transactions are induced by a competitive (Walrasian) equilibrium.
 The platform can choose to introduce a (possibly distinct) set of sellers to each buyer, and its revenue is proportional to the total prices of all trades between agents that it introduces.\footnote{For simplicity, we do not model this revenue as explicitly coming from the seller or the buyer.} The platform thus seeks to match buyers to sellers in a way that maximizes its revenue. We define the \emph{social welfare} as the sum of transacting buyers' valuations\footnote{We assume that sellers have no costs for trading.} and 
the \emph{optimal (social) welfare} as the social welfare when all sellers are introduced to each buyer. 

We study the computational complexity of the platform's revenue-maximization problem. While we show that the problem is generally intractable, we give structural results for revenue-optimal matchings, identifying specific market settings where the platform can maximize revenue in polynomial time. We also analyze the relationship between revenue optimization and social welfare: We lower bound the platform's revenue as a function of the maximum increase in social welfare that the platform can create, and we upper bound the loss in welfare that results from the platform's self-interest. Leveraging these results, we give a polynomial-time algorithm that guarantees a logarithmic approximation of the optimal revenue in markets where the platform can substantially increase social welfare. We also give tighter bounds for both revenue and social welfare in markets with \textit{homogeneous goods}, where each buyer values all sellers' items equally, but with different buyers having potentially different values.

\subsection{Results and Techniques}
\paragraph{Hardness of the Platform's Revenue-Maximization Problem.}

In Section~\ref{sec:hardness}, we study the computational complexity of maximizing the platform's revenue, to which we refer as the \emph{platform's problem}. 
We assume that the market clears according to {\em maximum} competitive equilibrium prices (i.e., the prices in the seller-optimal competitive equilibrium). In the event that multiple competitive allocations exist, we assume that the platform can break ties in its favor (selecting the equilibrium that maximizes its revenue). This reflects a setting
 in which the platform has sufficient market power to set prices and direct trades.
Intuitively, the platform's problem is challenging in that the platform must simultaneously consider the prices of potential trades, their feasibility in a competitive equilibrium, and the externalities they impose on the prices of other trades.

In a market with $n$ buyers and $m$ sellers, denote the set of sellers that buyer $b_i$ originally knows by $N(i)$, the size of this set as buyer $i$'s \emph{degree}, and buyer $i$'s value for seller $j$'s item by $v_{ij}$. For the platform's problem (with input size $\mathrm{poly}(n, m)$), we establish the following complexity result.

\begin{customthm}{1}[Informal Version of Theorems~\ref{thm:nphard} and \ref{thm:vc_apx_proof}]\label{thm:informal_hardness}
    The platform's revenue-maximization problem is NP-hard, even when every buyer has degree at most two ($|N(i)| \leq 2$ for all $i$) and values all \emph{desired} items equally (i.e., $v_{ij} \in \{0, v_i\}$ for all $j$). If we do not restrict to instances where the buyer has degree at most two, then the problem is APX-hard.
\end{customthm}

The proof of Theorem \ref{thm:informal_hardness} leverages techniques previously used in the profit-maximizing envy-free pricing literature, where we reduce from a version of $3$-SAT in \citet{chen2014envy} to prove NP-hardness and reduce from a version of minimum-vertex cover in \citet{guruswami2005profit} to prove APX-hardness. In contrast to the envy-free pricing problem, where there are no constraints on trade and unsold items can command high prices, our setting exhibits two main differences.
First, buyers have limited access to sellers, and access is mediated by the platform. When no buyer knows any seller, the platform can facilitate trades to achieve the welfare-optimal matching, extracting the optimal welfare as revenue. When buyers know some subsets of sellers, the existing trading possibilities both constrain the platform's revenue per trade and introduce externalities on the prices of trades that the platform facilitates. This is different from envy-free pricing where any buyer can envy any other buyers. When adapting the reductions from \citet{guruswami2005profit} and \citet{chen2014envy}, we introduce extra sellers to the market and carefully select the set of buyers and sellers who can trade outside the platform in order to mimic the envy structure present in the original reductions. Second, while competitive equilibrium prices are envy-free in this setting, competitive equilibria additionally require that unsold items have price $0$. In our proofs, we deal with this latter difference by introducing extra ``dummy'' buyers to transact with unsold items. 

\paragraph{Tractable Special Cases of the Platform's Problem.} 
Theorem~\ref{thm:informal_hardness} shows hardness, even when each buyer has degree at most two and values all desired items equally. We show that the restrictions in \cref{thm:informal_hardness} on the degree and values not being even stronger is not simply an artifact of our proof; rather, this
 corresponds to the frontier of intractability. In Section~\ref{sec:poly-special_case}, we introduce two market settings at the boundary of the
 hardness result of Theorem~\ref{thm:informal_hardness}  where the revenue-optimal matching can be found in polynomial time. 
The first market setting tightens the degree restriction to be at most $1$ and additionally assumes goods are homogeneous, that is, $v_{ij}=v_i$ for all $j$.
The second setting restricts the class of valuations to markets with homogeneous goods and buyers, that is, all buyers value all items equally at some constant positive amount (e.g., $v_{ij}=1$ for all $i,j$); the second setting also places a further restriction on the graph structure, which is already satisfied by the reduction of \cref{thm:informal_hardness}. These two cases demonstrate that starting from the setting of Theorem \ref{thm:informal_hardness}, imposing even slightly more structure on the underlying network and/or the valuations is enough to make the platform's problem tractable.

\begin{customthm}{2}[Informal Version of Theorems~\ref{thm: poly_AMOS} and \ref{thm:poly_identity}]\label{thm:informal_amos_iden}
    The platform's revenue-maximization problem can be solved in polynomial time (in $n$ and $m$) in the following two markets: (i) Homogeneous-goods markets with buyer degree at most $1$; (ii) Markets with homogeneous goods and buyers, in which each buyer has degree at most $2$ and there are sparse connections, so that for any pair of sellers, at most one buyer knows them both.
\end{customthm}

In the homogeneous-goods setting, we give a structural characterization result that shows that the revenue-optimal transactions introduced by the platform connect groups of buyers who know the same seller into \emph{chains} and \emph{cycles} of bounded length. Ranking all buyer groups by the largest buyer value in each group, we show that buyer groups that form a cycle are adjacent in rank. This characterization leads to a  dynamic programming algorithm to pair buyer groups into chains and cycles for maximum revenue, which is at the heart of the proof of Part (i) of \cref{thm:informal_amos_iden}. In markets with homogeneous goods and buyers, we show that finding the revenue-optimal matching is related to finding the maximum set of buyers with non-positive \textit{graph surplus}. In graph theory, the surplus of a vertex set is used in finding maximum matchings \citep{lovasz2009matching}, and is defined as the size of the neighborhood of the vertex set minus the size of the vertex set itself. We prove Part (ii) of \cref{thm:informal_amos_iden} by showing that this buyer set can be found in polynomial time under the structural assumptions of markets in the second special case. 

\paragraph{Bounds between Revenue Optimization and Social Welfare.} After studying the computational aspects of the platform's problem, we shift our focus in Sections~\ref{sec:general_markets} and~\ref{sec:hom_markets} to understanding the relationship between revenue optimization and welfare. On the one hand, when the platform has the potential to increase social welfare substantially, we ask whether such improvements come with a corresponding guarantee for the platform's revenue.
 Conversely, we investigate the impact on social welfare when the platform acts in its own self-interest and optimizes for revenue.
The theorems for each of the two directions are as follows:

\begin{customthm}{3}[Informal Version of Theorem~\ref{thm:gen_welfare_conversion}]
\label{thm:informal_gen_welfare_conversion}
    In any market, if there exists a set of transactions that the platform can add that increases social welfare by $\Delta W$, then the platform can add some subset of this set to generate $\Omega(\Delta W/\log(\min\{n,m\})$ in revenue. This subset can be found from the original set in polynomial time.\footnote{Note that the set of platform edges that maximizes the (increase in) social welfare can be found in polynomial time.} Furthermore, this bound is tight: 
There exists a market where the platform can increase welfare by $\Delta W$ and optimal revenue is $O(\Delta W/\log(\min\{n,m\})$. 
\end{customthm}

\begin{customthm}{4}[Informal Version of Theorem~\ref{thm:poa_upper_bound}]\label{thm:inform_poa}
   In any market, the social welfare is at least an $\Omega(1/\log(\min\{n,m\}))$-fraction of the optimal welfare when the platform maximizes its revenue.
\end{customthm}

Theorem~\ref{thm:informal_gen_welfare_conversion} illustrates a method for the platform to translate the potential welfare increase into revenue. In markets with a substantial welfare gap ($\Delta W$ is large), this
further suggests a polynomial-time approach to guaranteeing as revenue a logarithmic fraction of optimal welfare and,
 consequently, of
 optimal revenue. Combined with Theorem~\ref{thm:informal_hardness}, this result suggests that although the exact revenue maximization problem is hard in general markets, one can nevertheless extract a substantial fraction of optimal revenue in a computationally efficient manner. From another perspective, by giving a revenue guarantee in markets with high inefficiencies (i.e., large $\Delta W$), this theorem can be seen as establishing a motivation for revenue-interested platforms to disrupt such markets, providing a possible explanation for their real-life tendency to do so. In addition, and with an eye toward markets where efficiency is most lacking, we are less interested in settings where $\Delta W$ is small, even when
 optimal revenue can significantly exceed $\Delta W$. Theorem~\ref{thm:inform_poa} discusses the other side of the relationship between welfare and revenue: for all markets, revenue maximization comes with modest welfare guarantees --- namely, revenue maximization cannot harm overall welfare by more than a logarithmic factor.

\paragraph{High Revenue and Social Welfare in Homogeneous-Goods Markets.} 
We also explore the special case of homogeneous-goods markets; recall that these are markets where each buyer values all sellers' items equally but with different buyers potentially having different values. In these markets, optimal welfare is achieved when the buyers with the largest values transact. 
We show in Section~\ref{sec:hom_markets} 
that platform self-interest in this setting is perfectly
aligned with welfare maximization;
i.e., whenever the platform maximizes its own revenue, the optimal welfare is attained. 
Additionally, the platform can obtain revenue at least as large as the welfare gap $\Delta W$ in polynomial time. This implies a polynomial-time procedure offering a constant-factor approximation for revenue when $\Delta W$ is large.

\begin{customthm}{5}[Informal Version of Theorems~\ref{thm:hom_conversion} and \ref{thm:hom_poa}]
\label{custom_thm:hom}
   In homogeneous-goods markets, if there exists a set of transactions that the platform can add that increase welfare by $\Delta W$, there is a polynomial-time algorithm to guarantee platform's revenue of at least $\Delta W$. Furthermore, if the platform's revenue is optimal, then social welfare equals the optimal welfare. 
\end{customthm}

The proof of Theorem \ref{custom_thm:hom} utilizes the structural notion of \textit{opportunity paths}~\citep{kranton2000competition}
 to characterize the competitive price of a trade in homogeneous-goods markets. We show that in such markets, the buyers with the highest values must trade in the revenue-optimal matching, implying that the revenue-optimal matching attains the optimal welfare. We also show that the platform can match buyers to sellers according to the optimal welfare matching while obtaining the full increase in welfare ($\Delta W$)
 as revenue. Unlike in \cref{thm:informal_gen_welfare_conversion} for the general setting, this construction does not restrict itself to a subset of these transactions, offering both optimal welfare and good revenue guarantees.

Overall, our work intertwines computational complexity results with economic findings for the platform's revenue-maximization problem. Despite the computational hurdles that present a barrier to exact maximization in general settings, we develop ways to extract a substantial fraction of the optimal revenue in polynomial time in markets with a large welfare gap, and we further show that social welfare is reasonably well-aligned with revenue maximization, providing welfare guarantees for the revenue-optimal matching in fully general markets.

\subsection{Related Work}
Two works share particularly close models and research questions to our own. \citet{banerjee2017segmenting} model a platform controlling visibility between buyers and sellers of different types in a bipartite market. They study how the platform controls the visibility in order to maximize revenue and welfare. In a similar setting but with full visibility, \citet{birge2021optimal} ask how the platform optimally chooses to set transaction fees. The primary difference of both of these papers from our work is that buyers and sellers in these papers are assumed to have zero utility (equivalently, cannot trade) outside the platform. In contrast, we emphasize the existence of these outside trading opportunities, which play a crucial role both in our modeling and the complexity of the problem. These outside opportunities allow us to model markets that exist prior to platform disruption, rather than focusing only on markets that are initiated by the platform. Outside opportunities also mediate the complexity of the platform's revenue maximization. Without them, the platform's problem can solved in polynomial time in our setting; as more outside trading opportunities are introduced, the externalities they impose significantly complicate the question. We remark that \citet{banerjee2017segmenting} and \citet{birge2021optimal} are more general in other aspects; notably, each node there does not represents a single buyer or seller as in our paper but instead a continuum of buyers or sellers, each represented by a particular demand or supply curve.

Another closely related work is that of \citet{Alon2023platform}, who similarly model a platform that joins a market with existing  trading opportunities. Their focus is on the platform strategically setting a transaction fee, whereas we analyze the platform's matching strategy.

Besides these three papers, our work aligns with the literature that studies the formation of economic markets modeled by networks. \citet{kranton2001theory,kranton2000competition} and \citet{elliott2015inefficiencies} examine a unit-demand, unit-supply market where buyers and sellers strategically invest in costly opportunities to trade. They analyze the effect of the strategic behavior on social welfare. For infinitely divisible goods, \citet{even2007network} analyze all possible outcomes of the market structure when agents strategize to invest in trading opportunities, and \citet{kakade2004economic} examine a Fisher market on a graph. Like these studies, we build on the foundational research on competitive equilibria \citep{kelso1982job, gul1999walrasian}. However, complementing these works where market participants form trading links in a decentralized manner, we focus on a centralized, revenue-maximizing platform that facilitates trades.

Our work relates to the literature at large studying a platform's role in a market. Platforms have been modeled as intermediaries stocking goods, setting prices, and reselling to downstream market participants on a fixed network \citep[e.g.][]{blume2007trading,condorelli2017bilateral,manea2018intermediation,kotowski2019trading}; as guiding search and discovery processes of different market segments \citep[e.g.][]{immorlica2021designing,hu2022dynamic,halaburda2018competing,huttenlocher2023matching}; and as strategic entities that maximize revenue through pricing or matching tools \citep{ke2022information,WangMELTZP23}. Our work differs crucially in the ways in which prices are set; in these works, all prices are set by the platform, whereas we assume that prices are beyond the platform's control, instead induced by market forces modeled by a competitive equilibrium.

For our computational complexity hardness results, we use similar methods to those used in the {\em envy-free pricing} literature \citep{guruswami2005profit,chen2014envy}. There, the goal is to find a revenue-maximizing set of prices (and corresponding allocation) such that each agent receives their most desired good. While we draw on similar techniques, our problem is distinct in that the platform does not directly set prices and is limited in its ability to influence the prices due to buyers' and sellers' existing connections without the platform. This makes it challenging to translate approximation algorithms from this literature to our setting.

\section{Model \& Preliminaries}

Following \citet{Alon2023platform} and \citet{kranton2000competition}, we model a buyer-seller network as a bipartite graph $G = (B, S, E)$. Later, we significantly deviate from these models by introducing a centralized platform that directs trades between buyers and sellers. The network is composed of a set of $n$ unit-demand buyers $B = \{b_1, \dots, b_n\}$ and $m$ unit-supply sellers $S = \{s_1, \dots, s_m\}$. The set of edges $E$ represents the feasible transaction opportunities available to buyers and sellers – that is, buyer $b_i$ and seller $s_j$ can transact if and only if $(b_i, s_j) \in E$. Each buyer $b_i$ has a valuation $v_{ij} \geq 0$ for seller $s_j$'s good.

Given a graph $G$ and valuation profile $\textbf{v}$, the market clears according to a competitive equilibrium, defined by the tuple $(\textbf{x}, \textbf{p})$, where $\textbf{x} \in \{0, 1\}^{n \times m}$ represents the allocation and $\textbf{p} \in \mathbb{R}^m$ corresponds to the item prices (where $\textbf{p} \geq 0$). Formally, $x_{ij} = 1$ if and only if buyer $b_i$ receives seller $s_j$'s item, for which they pay price $p_j$.

\begin{definition}[Competitive Equilibrium on a Buyer-Seller Network]
    Given a graph $G = (B, S, E)$ and valuations $\textbf{v}$, a competitive equilibrium is defined by an allocation, price pair $(\textbf{x}, \textbf{p})$ satisfying:
    \begin{itemize}
        \item Transactions are feasible: $x_{ij} = 0$ for all $(b_i, s_j) \not \in E$.
        \item All items are allocated at most once: $\sum_i x_{ij} \leq 1$ for $1 \leq j \leq m$.
        \item Each buyer receives at most one good: $\sum_j x_{ij} \leq 1$ for $1 \leq i \leq n$.\footnote{While unit-demand buyers could receive more than one good in a Walrasian equilibrium, it is WLOG to assume that they receive only their favorite item in the bundle.}
        \item Each buyer receives their favorite item: $\sum_j x_{ij} \cdot (v_{ij} - p_j) = \max(0, \max_j v_{ij} - p_j)$.
        \item All unsold items have price zero: $\sum_i x_{ij} = 0 \implies p_j = 0$.
    \end{itemize}
\end{definition}

Let $\mathrm{Eq}(G,\mathbf{v})$ denote the set of all competitive equilibria for a graph $G$ and valuations $\mathbf{v}$. \citet{kelso1982job} proved that this set is non-empty when buyers have unit-demand valuations. Furthermore, the First Welfare Theorem states that any competitive equilibrium allocation $\textbf{x}$ maximizes social welfare; in other words, $\textbf{x}$ corresponds to a maximum weight matching on the graph $G$, where edge weights are given by valuations.

\begin{theorem}[First Welfare Theorem]
    If $(\textbf{x}, \textbf{p})$ is a competitive equilibrium, then $\textbf{x}$ maximizes welfare, respecting the constraints on transactions given by edges.
    \begin{align*}
        \sum_{i,j} x_{ij} \cdot v_{ij} = W(G) 
    \end{align*}
    where $W(G)$ denotes the value of the maximum weight matching on $G$.
\end{theorem}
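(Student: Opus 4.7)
The plan is to prove welfare maximality by comparing $\mathbf{x}$ against an arbitrary feasible allocation $\mathbf{x}'$ (meaning any $\{0,1\}$-matrix satisfying the edge-feasibility and unit-demand/unit-supply constraints from the first three bullets of the competitive equilibrium definition) and showing that $\mathbf{x}$ achieves weakly greater total value. The core idea is the standard one behind any First Welfare Theorem: use the envy-free, utility-maximizing property of each buyer at the equilibrium prices $\mathbf{p}$ to obtain a per-buyer comparison, aggregate into a welfare comparison, and cancel the price terms using the market-clearing condition that unsold items carry zero price.

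Concretely, the first step is to write, for each buyer $b_i$, the inequality
\[ \sum_{j} x_{ij}(v_{ij}-p_j) \;=\; \max\!\Big(0,\max_j v_{ij}-p_j\Big) \;\geq\; \sum_{j} x'_{ij}(v_{ij}-p_j), \]
where the equality is exactly the envy-freeness bullet and the inequality holds because under $\mathbf{x}'$ buyer $b_i$ receives at most one item, so her net value is either $0$ (if unmatched) or $v_{i,j^*}-p_{j^*}$ for some single $j^*$, neither of which can exceed her maximum net utility under $\mathbf{p}$. Summing over $i$ and rearranging yields
\[ \sum_{i,j} x_{ij} v_{ij} - \sum_j p_j \sum_i x_{ij} \;\geq\; \sum_{i,j} x'_{ij} v_{ij} - \sum_j p_j \sum_i x'_{ij}. \]

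The second step is to simplify both price terms. On the left, the zero-price-for-unsold bullet makes $p_j \sum_i x_{ij}=p_j$ for every $j$ (it equals $p_j$ when $j$ is sold under $\mathbf{x}$, and equals $0=p_j$ otherwise), so $\sum_j p_j \sum_i x_{ij}=\sum_j p_j$. On the right, unit-supply feasibility gives $\sum_i x'_{ij}\leq 1$, and combined with $p_j\geq 0$ yields $\sum_j p_j \sum_i x'_{ij}\leq \sum_j p_j$. Substituting these two bounds cancels the price mass on both sides and produces $\sum_{i,j} x_{ij} v_{ij}\geq \sum_{i,j}x'_{ij} v_{ij}$. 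Choosing $\mathbf{x}'$ to be (the indicator of) a maximum-weight matching in $G$ with edge weights $v_{ij}$ gives $\sum_{i,j}x_{ij}v_{ij}\geq W(G)$, and the reverse inequality is immediate since $\mathbf{x}$ is itself a feasible matching. The only real subtlety, and the closest thing here to an obstacle, is uniformly accounting for unmatched buyers and unsold items across the two allocations; the $\max(0,\cdot)$ in the envy-freeness condition handles the former and the zero-price clearing condition handles the latter, so both sides of the per-buyer and aggregate comparisons are well-defined without any case analysis.
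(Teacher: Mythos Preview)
The paper does not supply its own proof of the First Welfare Theorem; it states the result as a classical fact (in the tradition of \citet{kelso1982job} and \citet{gul1999walrasian}) and moves on. So there is no paper argument to compare against.

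Your argument is correct and is exactly the standard proof. The per-buyer inequality is justified because any feasible $\mathbf{x}'$ has $x'_{ij}=0$ whenever $(b_i,s_j)\notin E$, so $\sum_j x'_{ij}(v_{ij}-p_j)$ is either $0$ or a single term $v_{ij^*}-p_{j^*}$ with $(b_i,s_{j^*})\in E$, both dominated by $\max(0,\max_j v_{ij}-p_j)$. The aggregation and price-cancellation step is also fine: after substituting $\sum_j p_j\sum_i x_{ij}=\sum_j p_j$ on the left and bounding the right by $\sum_j p_j\sum_i x'_{ij}\le \sum_j p_j$, the difference $\sum_j p_j\bigl(1-\sum_i x'_{ij}\bigr)\ge 0$ is what actually gets discarded, giving the welfare inequality. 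Nothing is missing.
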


For a market $G=(B,S,E)$ with valuations $\textbf{v}$, we use $W(G) \coloneqq W(B,S,\textbf{v},E)$ to denote the 
{\em social welfare} of this market. This corresponds to the value of the maximum weight matching on $G$. Letting $C_{B,S}$ denote the edges of the \textit{complete bipartite graph}, the \textit{optimal welfare} is denoted as $W^\star(G) \coloneqq W(B, S,\textbf{v},C_{B,S})$. This represents the social welfare when all buyers and sellers can freely transact. When it is clear from the context, we omit $G$ and directly use $W^\star$ to denote optimal welfare.

While the First Welfare Theorem characterizes the set of allocations belonging to a competitive equilibrium, it says nothing about the item prices. In general, there are many sets of prices that could form a competitive equilibrium, with these price vectors forming a lattice \citep{gul1999walrasian}. Throughout this paper, we use the maximum competitive prices, which have a natural interpretation as the seller's contribution to the social welfare.
\begin{theorem}[\citet{gul1999walrasian}]
    Seller $s_j$'s maximum competitive price is given by
    \begin{align*}
        p_j = W(G) - W(G \setminus \{s_j\})
    \end{align*}
    where $G \setminus \{s_j\}$ represents the graph $G$ after removing seller $s_j$ and all incident edges.
\end{theorem}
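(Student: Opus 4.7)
The statement bundles two claims: (i) every CE $(\mathbf{x},\mathbf{p})$ satisfies $p_j \leq W(G)-W(G\setminus\{s_j\})$, and (ii) some CE attains equality. I would prove them separately.

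For (i), I would use LP duality on the maximum-weight bipartite matching program. Given any CE $(\mathbf{x},\mathbf{p})$, define $u_i := \max\bigl\{0,\max_{j'}(v_{ij'}-p_{j'})\bigr\}$. The CE conditions are precisely the complementary-slackness conditions pairing the assignment LP (whose optimum equals $W(G)$ by the First Welfare Theorem) with its dual $\min \sum_i u_i + \sum_j p_j$ subject to $u_i+p_j \geq v_{ij}$ for $(b_i,s_j)\in E$ and $u_i,p_j\geq 0$. Hence $(u,\mathbf{p})$ is dual-optimal with $\sum_i u_i + \sum_j p_j = W(G)$. Now delete $s_j$: the truncated vector $(u,(p_{j'})_{j'\neq j})$ remains feasible for the dual of the market on $G\setminus\{s_j\}$, so weak duality gives
\[
W(G\setminus\{s_j\}) \leq \sum_i u_i + \sum_{j'\neq j}p_{j'} = W(G)-p_j,
\]
which is the desired upper bound.

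For (ii), I would exhibit the full VCG price vector $p_j^\star := W(G)-W(G\setminus\{s_j\})$ together with any welfare-maximizing matching $\mathbf{x}^\star$ as a CE. Three conditions require checking: unsold items have price zero, each matched buyer is envy-free, and each unmatched buyer has no profitable deviation. The unsold-item condition is immediate, since if $s_j$ can be omitted from some maximum matching then $W(G\setminus\{s_j\}) = W(G)$, and nonnegativity of $p_j^\star$ follows from the monotonicity of $W(\cdot)$. For envy-freeness at a matched pair $(b_i,s_j)\in \mathbf{x}^\star$ against any alternative $s_{j'}$, I would argue via a local swap in $\mathbf{x}^\star$: rerouting $b_i$ onto $s_{j'}$ (possibly displacing the current partner of $s_{j'}$ and leaving $s_j$ in $G\setminus\{s_j\}$, or symmetrically $G\setminus\{s_{j'}\}$) cannot strictly improve welfare, and this ``exchange'' inequality, combined with the marginal-value definition of $p_j^\star$ and $p_{j'}^\star$, rearranges into $v_{ij}-p_j^\star \geq v_{ij'}-p_{j'}^\star$. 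The unmatched-buyer case is a similar swap against $\mathbf{x}^\star$ restricted to $G\setminus\{s_{j'}\}$.

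\textbf{Main obstacle.} The delicate step is part (ii): the per-seller marginal-contribution prices must be simultaneously compatible with \emph{one} common welfare-maximizing allocation. The key input is submodularity of $W(\cdot)$ as a function of the seller set, which for unit-demand valuations is the gross-substitutes property specialized to a matroid market. Without this submodularity, the marginal prices of different sellers need not jointly support any CE, and the theorem fails; with it, the matching-exchange inequalities used above all hold and the verification reduces to bookkeeping.
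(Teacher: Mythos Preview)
The paper does not supply a proof of this statement; it is quoted in the preliminaries as a known result of Gul and Stacchetti (1999), so there is no in-paper argument to compare your proposal against.

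Assessing your plan on its own merits: part~(i) via LP duality is correct and is the standard route. For part~(ii), your identification of submodularity of $W(\cdot)$ as the decisive structural input is right, but the single local swap you describe does not by itself yield the envy-freeness inequality. Rerouting $b_i$ from $s_j$ to $s_{j'}$ in $\mathbf{x}^\star$ and invoking optimality of $\mathbf{x}^\star$ gives only $v_{ij'}-v_{ij}\le v_{i'j'}$ (with $b_{i'}$ the displaced partner of $s_{j'}$), whereas what you need is $W(G\setminus\{s_j\})-W(G\setminus\{s_{j'}\})\ge v_{ij'}-v_{ij}$; since $W(G\setminus\{s_{j'}\})$ need not equal $W(G)-v_{i'j'}$, the former does not rearrange into the latter. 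The gap is closed precisely by the submodularity you flag in your ``main obstacle'' paragraph, concretely the inequality
\[
W(G)+W(G\setminus\{b_i,s_{j'}\})\ \ge\ W(G\setminus\{b_i\})+W(G\setminus\{s_{j'}\}),
\]
which is proved by an alternating-path decomposition of the union of two optimal matchings rather than by a one-step swap. So your plan does lead to a correct proof, but the phrase ``rearranges into'' in your envy-freeness check hides exactly the place where the real work---and the submodularity---must be invoked explicitly.
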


\subsection{Introducing the Platform}

Now, we model the platform's role in our buyer-seller network. Let $G_w = (B, S, E_w)$ denote the \textit{world graph} without the platform. The \textit{world edges} $E_w$ represent existing transaction opportunities between buyers and sellers; i.e., these buyers and sellers can already transact without the platform, perhaps in-person or through some other non-platform related channels. The platform possesses \textit{full information} about the world graph and the valuation profile $\textbf{v}$. It then chooses a set of \textit{platform edges} $E_p$ to add between buyers and sellers who are not already connected by world edges ($E_p \cap E_w = \emptyset$). We refer to the graph with platform edges as the \textit{platform graph}: $G_p = G_w\cup E_p = (B, S, E_w \cup E_p)$.

With these new edges in the platform graph, the market clears according to a competitive equilibrium with maximum competitive prices. The platform charges a fixed percentage, which we assume to be exogenous, of the total price of all trades on the platform edges.
 We will use $\mathrm{Rev}(E_p)$ to denote the sum of the prices associated with trades that are completed on
 platform edges $E_p$ and $\mathrm{Rev}^\star$ to denote the maximum possible revenue. In reality, the platform's revenue is a fixed percentage of this  quantity. For simplicity, we  refer to the sum of prices along transacting platform edges as platform ``revenue,'' noting that maximizing the two quantities is equivalent.
 
To reflect the platform's market power, we assume it can break ties between multiple equilibria, in addition to selecting for the maximum competitive equilibrium prices. The goal of the platform is to add a set of platform edges (and potentially break ties between competitive equilibria) to maximize its revenue.
\begin{definition}[The Platform's Problem]
    Given a world graph $G_w=(B,S,E_w)$ and valuation profile $\textbf{v}$, the platform seeks to introduce a set of platform edges $E_p$ that maximizes revenue. The optimal revenue is given by
    \begin{align*}
        \mathrm{Rev}^\star = \max_{E_p} \max_{(\textbf{x}, \textbf{p}) \in \mathrm{Eq}(G_p, \textbf{v})} \sum_{(b_i,s_j)\in E_p} p_j \cdot x_{ij}.
    \end{align*}
\end{definition}

\begin{example}
  To illustrate the complexity of the platform's problem, consider the simple market in Figure~\ref{fig:all_platform_edge_bad} where there are no world edges and buyers have valuations $v_{1,1}=1,v_{i,i-1}= v_{i, i} = i$ for $i=2,\ldots n$, where $v_{ij}$ is buyer $b_i$'s valuation for seller $s_j$'s item. The naive strategy of adding all possible platform edges performs very poorly, obtaining revenue $n$ as every transaction occurs at a price of $W(G_p) - W(G_p\setminus\{s_i\}) = 1$. The revenue optimal matching would draw a single platform edge connecting each $b_i$ to $s_i$, obtaining revenue $n(n+1)/2$. The platform must strike a balance between facilitating as many transactions as possible and not adding too many edges and creating unwanted competition, driving transaction prices down.
\end{example}

\begin{figure}[t]
    \centering
    \begin{tikzpicture}[scale=1.3]
        \foreach \i/\label in {1/$b_1$, 2.5/$b_2$, 4/$b_3$, 6/$b_n$}
            \node[draw, shape=rectangle, minimum size=0.6cm] (\label) at (\i, 2) {\label};
            
        \foreach \i/\label in {1/$s_1$, 2.5/$s_2$, 4/$s_3$, 6/$s_n$}
            \node[draw, shape=circle, minimum size=0.6cm] (\label) at (\i, 0) {\label};

        \foreach \x/\y/\w/\pos/\loc in {$b_1$/$s_1$/1/midway/right, $b_2$/$s_1$/2/midway/below right,$b_2$/$s_2$/2/midway/right, $b_3$/$s_2$/3/midway/below right, $b_3$/$s_3$/3/midway/right, $b_n$/$s_n$/n/midway/right}
            \draw[line width=1.5pt, customblue, dashed] (\x) -- node[\pos, \loc, font=\footnotesize] {\w} (\y);
     
        \node at (5, 2) {$\ldots$};
        \node at (5, 0) {$\ldots$};

        \node[left] at (0.5, 2)  {Buyers};
        \node[left] at (0.5, 0)  {Sellers};
    
    \end{tikzpicture}
        \caption{A market where adding all platform edges is arbitrarily bad for revenue. There are no world edges. All nonzero valuations are indicated by dashed blue lines. Buyer $b_1$ has value $1$ for seller $s_1$, Buyer $b_i$ has value $i$ for sellers $s_{i-1}$ and $s_{i}$ for $i=2,\ldots,n$. All other valuations are zero.} \label{fig:all_platform_edge_bad}
\end{figure}
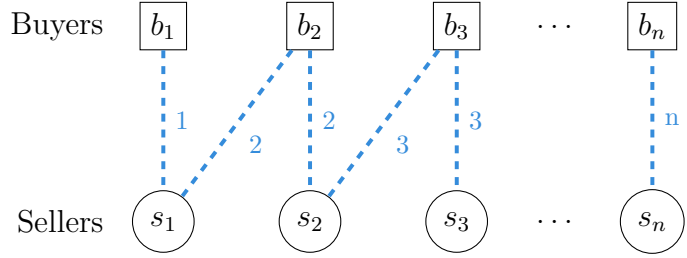 

\subsection{Competitive Prices in Homogeneous-Goods Markets}
To facilitate proofs and more easily identify competitive equilibrium prices in homogeneous-goods markets, we borrow insights from \citet{kranton2000competition}. They introduce the concept of \textit{opportunity paths} to delineate a buyer's direct and indirect competitors and to calculate a buyer's forgone trading opportunity.

\begin{definition}[Opportunity Path \citep{kranton2000competition}]\label{def:oppo_path}
    For a market $G=(B,S,E)$ and an allocation $\textbf{x}$, a buyer $b_i$ is connected to another buyer $b_j$ through opportunity path $$ b_i \mbox{ --- } s_1 \to b_1 \mbox{ --- } s_2 \ldots\ldots b_{j-1} \mbox{ --- } s_j \to b_j$$ where an undirected edge $b_p \mbox{ --- } s_q$ means buyer $b_p$ can trade with seller $s_q$ through existing edges but does not: $(b_p,s_q)\in E, x_{pq}=0$; and a directed edge $s_q \to b_p$ means seller $s_q$ sells to buyer $b_p$: $x_{pq}=1.$
\end{definition}
As an example, consider the market in Figure~\ref{fig:all_platform_edge_bad} when all platform edges are added. Buyer $b_i$ buys from seller $s_i$ and knows $s_{i-1}$. Thus, $b_n$ has opportunity path $b_n \mbox{ --- } s_{n-1} \to b_{n-1} \ldots b_2 \mbox{ --- } s_1 \to b_1$, alternating between transacting and non-transacting edges. Just like alternating or augmenting paths in graph theory, an opportunity path alternates between active edges where transactions occur, and inactive edges where transactions could take place but do not. The ``opportunity" refers to a buyer's outside trading options and uniquely determines the price of an transaction as follows.  
\begin{theorem}[Opportunity Path Theorem \citep{kranton2000competition}]\label{thm:oppo_path}
    For a maximum price competitive equilibrium $(\textbf{x}, \textbf{p})$ in a homogeneous-goods market $G=(B,S,E)$ where $x_{ij}=1$, seller $s_j$'s price is equal to the lowest valuation of any buyer connected to $b_i$ through an opportunity path. If $b_i$ is connected to a seller that does not sell through an opportunity path, $p_j = 0$.
\end{theorem}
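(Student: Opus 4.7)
The plan is to apply the characterization $p_j = W(G) - W(G \setminus \{s_j\})$ of maximum competitive prices and then to exhibit matchings on $G \setminus \{s_j\}$ whose welfare realizes each direction of the desired equality. Let $v^\star$ denote the smallest valuation of any buyer reachable from $b_i$ along an opportunity path, with the convention $v^\star = 0$ if some opportunity path from $b_i$ terminates at an unsold seller. The argument hinges on one feature of homogeneity: since $v_{ij}=v_i$ for every seller to which $b_i$ is connected, rerouting a matched buyer from one of its incident sellers to another leaves total welfare unchanged.

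For the upper bound $p_j \leq v^\star$, I fix an opportunity path $b_i \text{ --- } s_1 \to b_1 \text{ --- } s_2 \to b_2 \cdots \text{ --- } s_k \to b_k$ that attains the minimum $v^\star$, and construct a matching $\mathbf{x}'$ on $G \setminus \{s_j\}$ by deleting the transactions $(b_i,s_j),(b_1,s_1),\ldots,(b_k,s_k)$ from $\mathbf{x}$ and inserting $(b_i,s_1),(b_1,s_2),\ldots,(b_{k-1},s_k)$. Each inserted edge corresponds to a ``\,---\,'' step of the opportunity path, so it lies in $E$, making $\mathbf{x}'$ feasible on $G\setminus\{s_j\}$. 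By homogeneity, each rerouted buyer contributes the same value as before, and the only net change is the loss of $b_k$'s transaction, giving $W(G\setminus\{s_j\}) \geq W(\mathbf{x}') = W(G) - v^\star$. If instead the chosen opportunity path ends at an unsold seller, no buyer is dropped and the same rerouting yields $W(G\setminus\{s_j\}) \geq W(G)$, so $p_j \leq 0 = v^\star$.

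For the reverse inequality $p_j \geq v^\star$, let $M^\star$ be a maximum-weight matching on $G\setminus\{s_j\}$ and consider the symmetric difference $\mathbf{x} \triangle M^\star$, which decomposes into vertex-disjoint alternating paths and cycles. Since $s_j$ is incident only to its $\mathbf{x}$-edge in this multigraph, its connected component is an alternating path beginning $s_j \text{ --- } b_i \text{ --- } s'_1 \text{ --- } b'_1 \cdots$, whose consecutive edges alternate between $\mathbf{x}$ and $M^\star$. Dropping the initial step at $s_j$, every $M^\star$-edge on this path lies in $E\setminus \mathbf{x}$ (a feasible non-transacting edge) while every $\mathbf{x}$-edge is an active transaction, so the remainder traces out an opportunity path emanating from $b_i$ in the sense of Definition~\ref{def:oppo_path}. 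It must terminate either at a buyer $b'_k$ unmatched under $M^\star$, or at a seller $s'_k$ unmatched under $\mathbf{x}$ (i.e., an unsold seller). By homogeneity, all intermediate buyers contribute identically to $\mathbf{x}$ and to $M^\star$, so $W(G) - W(M^\star)$ equals buyer $b'_k$'s valuation, which is at least $v^\star$, in the first case, and equals $0 = v^\star$ in the second; either way $p_j \geq v^\star$.

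The main obstacle is verifying that the alternating path extracted from the symmetric difference coincides exactly with an opportunity path per Definition~\ref{def:oppo_path}, which requires careful bookkeeping of which edges lie in $E$, in $\mathbf{x}$, and in $M^\star$, and of the two possible ways the path can terminate. Once this correspondence is in hand, homogeneity collapses the welfare difference to a single terminal contribution and both inequalities fall out immediately.
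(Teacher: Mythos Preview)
The paper does not prove this theorem; it is quoted from \citet{kranton2000competition} as a preliminary tool, so there is no in-paper proof to compare against. Your argument via $p_j = W(G)-W(G\setminus\{s_j\})$ is the natural one and is essentially correct.

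There is one small gap in the lower-bound direction. When you take the symmetric difference $\mathbf{x}\triangle M^\star$ and isolate the alternating path $P$ through $s_j$, you conclude that $W(G)-W(M^\star)$ equals the terminal buyer's valuation on $P$. But $W(G)-W(M^\star)$ is a sum over \emph{all} components of the symmetric difference, not just $P$; you need to argue that every other component contributes zero. This follows because any component $C\neq P$ lies entirely in $G\setminus\{s_j\}$, so both $\mathbf{x}\cap C$ and $M^\star\cap C$ are feasible there (and in $G$); if one had strictly larger weight, swapping along $C$ would improve either $M^\star$ on $G\setminus\{s_j\}$ or $\mathbf{x}$ on $G$, contradicting optimality. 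Once this is said, your homogeneity argument on $P$ gives exactly the terminal valuation and the bound $p_j\geq v^\star$ goes through. The upper bound via rerouting along an extremal opportunity path is clean as written.
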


\section{Hardness of the Platform's Problem}
\label{sec:hardness}

We begin by showing the platform's problem is computationally intractable, even under relatively significant restrictions on both the underlying structure of the world graph and the valuation structure of the buyers.

\begin{theorem}
\label{thm:nphard}
    The decision version of the platform's problem is NP-hard, even when buyers have degree at most two and value all desired goods equally ($v_{ij} \in \{0, v_i\}$).
\end{theorem}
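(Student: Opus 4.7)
The plan is to reduce from the restricted 3-SAT variant used by \citet{chen2014envy} to prove hardness of unit-demand envy-free pricing. Given an instance $\varphi$ of this 3-SAT problem, I construct in polynomial time a world graph $G_w = (B, S, E_w)$ together with valuations satisfying $v_{ij} \in \{0, v_i\}$ and buyer degree at most $2$, and a threshold $T$, such that the platform can achieve revenue at least $T$ in $G_w$ if and only if $\varphi$ is satisfiable. The backbone of the construction is to introduce, for each variable $x_i$, a \emph{variable gadget} consisting of two sellers $s_i^\top, s_i^\bot$ wired together by a short world-edge ``switch'' whose revenue-maximizing platform response selects exactly one of the two sellers to be matched by a platform edge, thereby encoding a truth assignment. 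For each clause $C_j$, a \emph{clause buyer} $b_j$ with value $v_j$ has world edges into the variable gadgets of the literals appearing in $C_j$, and the platform can profitably pair $b_j$ to one of those sellers via a platform edge only when the literal is set to true. Summing the contributions across variable and clause gadgets, $\varphi$ is satisfiable iff all clause buyers can be served at full price, yielding revenue exactly $T$.

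Two standard features of envy-free pricing reductions need to be adapted to the competitive-equilibrium setting. First, because unsold items must be priced at $0$ in any competitive equilibrium, I augment the construction with \emph{dummy buyers} connected via world edges to any seller whose target competitive price should be positive even when no clause buyer picks her up; their valuations are tuned to pin down the intended $p_j$ without contributing to the objective (dummy trades never run along platform edges). Second, while envy-free prices are a local quantity, the maximum competitive price equals the global marginal contribution $W(G_p)-W(G_p\setminus\{s_j\})$; I therefore design each variable gadget to be \emph{locally priced}, meaning that removing any single seller alters the maximum matching only inside her own gadget. The degree-$2$ and uniform-value restrictions are enforced throughout by splitting any natural higher-degree role into chains of auxiliary degree-$2$ buyers linked by carefully weighted world paths, mirroring the ``splitting'' technique used in the envy-free pricing literature.

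The main obstacle is to verify that, over all choices of platform edge sets $E_p$ \emph{and} all tie-breaking choices among competitive equilibria of $G_p$, the maximum attainable revenue equals $T$ exactly when $\varphi$ has a satisfying assignment. The difficulty is entirely driven by the externalities that platform edges impose on the competitive prices of world-edge trades, a phenomenon absent from the original envy-free pricing reduction; I expect to handle these by a case analysis on how the platform interacts with each variable gadget, showing via the marginal-contribution formula that any ``inconsistent'' platform response — either adding too many edges within a gadget (which depresses prices) or pairing a clause buyer with a sold-out literal seller (which forces a cascading rematch) — strictly reduces revenue below $T$. The remaining direction, that any satisfying assignment yields revenue $T$, is then witnessed by the natural platform edge set induced by the assignment, whose equilibrium prices can be read off directly from the locally-priced gadget structure.
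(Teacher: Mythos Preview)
Your high-level strategy---reduce from the \citet{chen2014envy} 3-SAT variant and patch the ``unsold items must have price zero'' constraint with dummy buyers---matches the paper's. But the placement of world edges in your sketch is essentially inverted relative to the paper's construction, and this matters. In the paper, clause buyers and dummy buyers carry \emph{no} world edges (degree zero); they all transact via platform edges and all contribute to revenue (the dummy contribution $H(2tq-k)$ is the dominant term in the threshold). The only buyers with world edges are auxiliary ``variable buyers'' $A_{i,j},T_{i,j}$, each of degree exactly two, whose world edges create opportunity paths to clause buyers; these variable buyers transact via platform edges and pay either $Z$ or $Z+1$ depending on which literal seller the clause buyer was matched to. Satisfiability is encoded in whether the platform can route clause buyers so that, for every variable, one side's $A_{i,j}$ (or $T_{i,j}$) buyers all pay $Z+1$ rather than $Z$. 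So the degree-two restriction falls out for free---no splitting chains needed---and the ``locally priced'' property you want is obtained not by structural isolation but by computing prices directly through Theorem~\ref{thm:oppo_path}.

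Your sketch instead puts world edges on clause buyers (forcing degree three before splitting) and on dummy buyers (whose trades then contribute zero to revenue). This is not obviously unworkable, but it is a different construction, and the proposal does not give enough detail to check it: you have not written down the gadget, the valuations, the threshold, or the case analysis you allude to. The sentence ``I expect to handle these by a case analysis'' is where the actual content of the proof would live. Before committing to the world-edges-on-clause-buyers design, I would try the paper's layout (world edges only on the $A_{i,j},T_{i,j}$ buyers), which sidesteps both the splitting and the dummy-revenue accounting you are worried about.
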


Similar to the hardness proof for revenue-maximizing envy-free pricing in \citet{chen2014envy}, 
Theorem~\ref{thm:nphard} is the result of a reduction from a variant of 3-SAT and Lemma~\ref{lem:reduction_proof}. In contrast to their proof, where an item's price can be high despite the item being unsold, we design our reduction with world edges to provide buyers with outside trading opportunities and hence limit the price they pay. We also introduce ``dummy'' buyers to absorb any items that are not sold – ensuring that these unsold items do not introduce unwanted externalities in the form of low prices. 

Clauses and variables in the 3-CNF are associated with buyers, and literals are associated with sellers. Based
on the value of literals satisfying the assignment, prices that buyers with world edges to corresponding literal sellers pay change, reflected in the optimal revenue; the reduction is designed such that a valid assignment to the 3-CNF exists if and only if the optimal revenue exceeds a given threshold. 

\paragraph{Reduction from a Variant of 3-SAT}
Consider a modified version of 3-SAT, where each variable $x_i$ appears positively ($x_i$) and negatively $(\bar{x}_i)$ an equal number of times. Given a 3-CNF $\varphi$, we add clauses $d_i = (x_i \vee \bar{x}_i)$ for each variable $x_i$. Additionally, we pad $\varphi$ with these clauses such that each variable $x_i$ appears exactly $2t$ times for some $t > 0$.  Denote this modified 3-CNF by $\varphi' = (c_1 \wedge \dots \wedge c_k)$. Let $q$ be the number of unique variables, and let $Z = q \cdot k \cdot t$. 

Given this 3-CNF $\varphi'$, we construct a corresponding instance of the platform's problem as follows. There are $k + 2(t-1)q + (2tq - k) = (4t-2)q$ buyers in total and an equal number of items. We describe the construction below: 

\begin{itemize}
    \item For each variable $x_i$, we add $2t$ items $\alpha_{i,j}, \tau_{i,j}$ for $j \in \{0, \dots, t-1\}$. The $\alpha_{i,j}$ represent the positive instances of $x_i$ while the $\tau_{i,j}$ represent the negative instances. WLOG, let $\alpha_{i,0}, \tau_{i,0}$ be the items that correspond to clauses $d_i = (x_i \vee \bar{x}_i)$. 
    \item For each clause $c_i$ in $\varphi'$, we add a buyer $U_i$ that has value $Z$ for each item that represents the corresponding literals in clause $c_i$. 
    \item For each variable $x_i$, we add $2(t-1)$ buyers $A_{i,j}, T_{i,j}$, $j \in \{1, \dots, t-1\}$. We also add $2(t-1)$ items $\gamma_{i,j}, \delta_{i,j}$, $j \in \{1, \dots, t-1\}$. Each buyer $A_{i,j}$ has value $N+1$ for $\alpha_{i,0}, \alpha_{i,j}, \gamma_{i,j}$, and each buyer $T_{i,j}$ has value $Z+1$ for $\tau_{i,0}, \tau_{i,j}, \delta_{i,j}$. For each of these buyers, we add world edges to the $\alpha_{i,0},\alpha_{i,j},\tau_{i,0},\tau_{i,j}$ they have positive value for.
    \item Finally, we add $2tq - k$ dummy buyers who have value $H \geq Z+1$ for all $\alpha_{i,j}, \tau_{i,j}$.
\end{itemize}
All buyers have value $0$ for all sellers not mentioned above, and no other world edges are present besides those explicitly mentioned. We prove this is a valid reduction.
\begin{restatable}{lemma}{SATReduction}
\label{lem:reduction_proof}
    There is a valid assignment to the original CNF if and only if the optimal revenue for the platform is at least $D \coloneqq kZ + q(t-1)(2Z+1) + H(2tq - k)$. 
\end{restatable}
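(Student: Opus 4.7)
The plan is to prove both directions of the iff via explicit constructions and repeated application of the max-competitive-price formula $p_s = W(G_p) - W(G_p \setminus \{s\})$. The reduction creates a correspondence between satisfying assignments and revenue-maximizing allocations by making the competitive price of each $\gamma_{i,j}$ or $\delta_{i,j}$ drop by $1$ precisely when a clause is routed ``inconsistently'' with respect to the assignment induced by where the $U_{d_i}$'s sit.

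For the forward direction, given a satisfying assignment $\sigma$, I would construct $E_p$ with \emph{minimal} edges to keep prices as high as possible: (i) a single platform edge from each clause buyer $U_\ell$ to an item corresponding to a literal in $c_\ell$ satisfied by $\sigma$ (for $U_{d_i}$, to $\alpha_{i,0}$ if $\sigma(x_i)=T$ and to $\tau_{i,0}$ otherwise); (ii) one edge from each $A_{i,j}$ to $\gamma_{i,j}$ and each $T_{i,j}$ to $\delta_{i,j}$; and (iii) one edge from each dummy to a distinct literal not chosen by any clause. The uniquely welfare-maximizing allocation places every buyer on its single platform-accessible item. The welfare-difference formula then yields $p_{\alpha_{i,j}} = H$ for dummy-held literals (the dummy has no alternative, so removing the item drops welfare by $H$) and $p = Z$ for clause-held literals. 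For the aux pair at $(i,j)$, assuming WLOG $\sigma(x_i) = T$: since $U_{d_i}$ sits on $\alpha_{i,0}$ at $Z$, removing $\gamma_{i,j}$ lets $A_{i,j}$ shift to $\alpha_{i,0}$ via its world edge, displacing $U_{d_i}$ at welfare cost $Z$, so $p_{\gamma_{i,j}} = Z$; meanwhile $\tau_{i,0}$ and $\tau_{i,j}$ are both dummy-held at value $H \geq Z+1$, so $T_{i,j}$ cannot profitably relocate and $p_{\delta_{i,j}} = Z+1$. Summing over all platform trades yields revenue $(2tq-k)H + kZ + q(t-1)(2Z+1) = D$.

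For the reverse direction, I would prove the matching upper bound $\mathrm{Rev}^\star \leq D$, with equality forcing a satisfying assignment. For any platform strategy, dummy revenue is at most $(2tq-k)H$ and clause revenue is at most $kZ$. The key case analysis is per variable $i$: I would first show the max-revenue strategy matches $U_{d_i}$ to $\alpha_{i,0}$ or $\tau_{i,0}$ (leaving it unmatched saves at most $t-1$ in aux revenue but loses $Z = qkt > t-1$ in clause revenue), and then \emph{define} $\sigma(x_i) := T$ iff $U_{d_i} \to \alpha_{i,0}$. By the welfare-difference formula, for each pair $(A_{i,j}, T_{i,j})$, $p_{\gamma_{i,j}} = Z$ whenever $\alpha_{i,0}$ or $\alpha_{i,j}$ is clause-held (the displaceable clause recovers welfare at cost $Z < Z+1$) and $Z+1$ otherwise, and analogously for $\delta$. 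Under $\sigma(x_i)=T$, the $\alpha$-side always contributes $(t-1)Z$ (since $\alpha_{i,0}$ is clause-held), while the $\tau$-side contributes $(t-1)(Z+1)$ minus the number of $\tau_{i,j}$ for $j \geq 1$ that are clause-held, each of which represents an \emph{inconsistency} with $\sigma$ (a clause matched to $\bar x_i$'s literal item despite $\sigma(x_i)=T$). Aggregating across $i$, total aux revenue is at most $q(t-1)(2Z+1) - (\text{total inconsistencies})$, so $\mathrm{Rev} \leq D - (\text{inconsistencies})$. Revenue $\geq D$ thus forces zero inconsistencies, meaning every clause is matched to a literal true under $\sigma$, so $\sigma$ satisfies $\varphi'$.

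The main obstacle is the careful case analysis of $W(G_p \setminus \{s\})$, which must control potential chain reactions (for instance, removing $\gamma_{i,j}$ lets $A_{i,j}$ displace $U_{d_i}$, who could in turn take $\tau_{i,0}$ via a platform edge and displace a dummy). I would handle this by noting that any chain propagating through a dummy costs $H \geq Z+1$ in welfare, strictly worse than the ``local'' alternative of unmatching at cost at most $Z+1$; this pins $p_{\gamma_{i,j}}$ to $Z$ (if a displaceable clause provides a shortcut) or $Z+1$ (otherwise). A secondary subtlety is showing that no alternate platform strategy beats $D$: adding extra dummy edges drops dummy prices from $H$ to $Z$ via welfare-preserving swaps through a clause, and routing aux buyers through world edges rather than platform edges forfeits $\gamma_{i,j}$ or $\delta_{i,j}$ revenue; together with the $Z > t-1$ inequality, this pins the optimum to the minimal-edge strategy with a consistent routing, whose achievability is precisely the satisfiability of $\varphi$.
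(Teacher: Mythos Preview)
Your plan tracks the paper's proof closely in both directions: the forward construction (one platform edge per clause buyer to a satisfied literal, aux buyers to $\gamma/\delta$, dummies to leftover literals) and the per-buyer price accounting are the same, and your reverse-direction ``inconsistency counting'' is exactly the paper's accounting that the $q(t-1)$ aux buyers on the true side pay at most $Z$ while those on the false side must each pay exactly $Z{+}1$. The only real difference in tooling is that the paper computes prices via opportunity paths (its Theorem~2.4) rather than the raw formula $p_s = W(G_p) - W(G_p \setminus \{s\})$; in this instance the two coincide, and your chain-reaction remark (any displacement through a dummy costs $H \geq Z{+}1$) is the welfare-difference translation of ``the opportunity path stops at a buyer of value $\geq Z{+}1$.''

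One small gap to patch: your reverse direction only argues that each $U_{d_i}$ is matched, and then concludes ``every clause is matched to a literal true under $\sigma$.'' But $I=0$ alone says nothing about \emph{unmatched} clause buyers, so you still need that \emph{all} $U_\ell$ are matched in any strategy achieving revenue $\geq D$. The paper handles this up front with the global bound: if some $U_\ell$ is unmatched then revenue is at most $(k-1)Z + 2q(t-1)(Z+1) + H(2tq-k) < D$, since $q(t-1) < Z = qkt$. Inserting this one line before defining $\sigma$ closes the argument.
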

The proof argument shares similar logic with that of \citet{chen2014envy} and is given in Appendix~\ref{app:sat_np_proof}. The main difference with our proof arises in proving the ``if" direction. In our construction, buyers $A_{i,j}$ and $T_{i,j}$ always transact with $\gamma_{i,j}$ and $\delta_{i,j}$ through platform edges. At the same time, we use world edges to connect these buyers to $U_i$, forming opportunity paths. Buyer $U_i$ corresponds to the literal $x_i$. In this construction, $A_{i,j}$ and $T_{i,j}$ pay $Z$ or $Z+1$, depending on the true value of the literals in the CNF. This change is reflected in the optimal revenue and helps to establish the ``if" direction.

Theorem~\ref{thm:nphard} shows that even when introducing a small number of existing interdependencies between buyers and sellers through world edges, computing the revenue-optimal matching  becomes intractable. By further relaxing the constraint on the degree of the buyers, we prove the platform's problem is APX-hard. In other words, unless $P = NP$, the platform's problem does not admit a polynomial-time approximation scheme.
\begin{restatable}{theorem}{VertexCoverReduction}
\label{thm:vc_apx_proof}
    The decision version of the platform's problem is APX-hard, even when buyers value all desired goods equally ($v_{ij} \in \{0, v_i\}$). 
\end{restatable}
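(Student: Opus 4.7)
The plan is to extend the reduction technique used for Theorem~\ref{thm:nphard} by reducing instead from an APX-hard variant of vertex cover (e.g., minimum vertex cover on bounded-degree graphs), following \citet{guruswami2005profit}'s reduction from vertex cover to revenue-maximizing envy-free pricing. Given a graph $H=(V,E_H)$ with $|V|=n_H$ and $|E_H|=m_H$, I would construct a platform instance in which each vertex $v \in V$ yields a seller $s_v$ and each edge $e=(u,v)\in E_H$ yields an ``edge-buyer'' $b_e$ whose valuations are $v_{b_e, s_u} = v_{b_e, s_v} = v_0$ (for some fixed positive $v_0$) and zero otherwise, so that each buyer values all desired items equally as required by the statement. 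The platform's decisions about which platform edges to add from each $b_e$ to $\{s_u, s_v\}$ will encode which vertices are selected into the cover.

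To adapt the envy-free-pricing reduction to our setting, two additional ingredients are needed, mirroring the construction for Theorem~\ref{thm:nphard}. First, because unsold items must have price $0$, I would introduce a dummy buyer for each seller, connected to that seller by a world edge and with carefully calibrated valuation, so that every seller has a ``backstop'' trading opportunity ensuring its competitive price is pinned down independently of the platform's choices elsewhere. Second, to reproduce the envy-free-pricing structure in which each $b_e$ has outside options that cap the price it pays, I would add world edges between edge-buyers and appropriately chosen low-value sellers (or encode these outside options directly through the dummy-buyer valuations). The target correspondence is that a vertex cover $C \subseteq V$ of size $k$ maps to a platform-edge set whose maximum revenue has the form $\alpha \cdot m_H - \beta \cdot k + \gamma$ for problem constants $\alpha, \beta, \gamma > 0$ that are independent of $|C|$, and conversely, any platform solution with that revenue yields a cover of size at most $k$. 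This additive relationship between cover size and revenue is what preserves constant-factor gaps, so that APX-hardness of bounded-degree vertex cover transfers to the platform's problem.

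The main obstacle I anticipate is verifying the competitive-equilibrium price structure so that the reduction behaves exactly as intended. In particular, I need to ensure that whenever $b_e$ transacts with $s_u$ (or $s_v$) via a platform edge, the maximum competitive price $W(G_p) - W(G_p \setminus \{s_u\})$ equals a fixed, pre-specified quantity that depends only on whether $u$ is included in the purported cover, and not on cascading externalities arising from other platform edges or from competition among neighboring edge-buyers. The interplay between the ``unsold items have price $0$'' rule, the maximum-competitive-price formula, and the platform's tie-breaking power makes calibrating the dummy-buyer valuations (and the low-value world edges) the most delicate step; this is where the envy-free-pricing template does not transfer directly. Once that calibration is in place, correctness and gap-preservation should follow by a case analysis on the platform's chosen platform-edge configuration, and APX-hardness follows from the known APX-hardness of bounded-degree vertex cover.
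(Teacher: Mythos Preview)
Your high-level plan (reduce from bounded-degree vertex cover, aiming for revenue of the form $\alpha\cdot m_H - \beta\cdot k + \gamma$) matches the paper's, but the concrete construction you sketch has a structural gap that would make it fail as stated.

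The key issue is unit supply. You create one seller $s_v$ per vertex and have every edge-buyer $b_e$ with $e=(u,v)$ transact with $s_u$ or $s_v$ via a platform edge. But a vertex of degree $d$ is incident to $d$ edges, and in a vertex cover all of those edges may be covered by this single vertex; with only one copy of $s_v$, at most one of these $d$ edge-buyers can actually transact with it. Hence the intended encoding ``$b_e$ transacts with a cover vertex'' cannot be realized for all edges simultaneously, and the revenue formula you target will not hold. In the Guruswami--Hartline--Karlin--Kempe--Kenyon--McSherry envy-free pricing reduction this is a non-issue because supply is unlimited; here it is fatal. A second, related problem is that you place the dummy buyers on \emph{world} edges to pin down prices, but then the platform earns no revenue from those transactions, and if their valuations are high they simply crowd out the edge-buyers.

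The paper's construction addresses both points differently. For each vertex $v$ it creates $\deg(v)$ ``literal'' items $\alpha_{v,1},\dots,\alpha_{v,\deg(v)}$ (so every incident edge-buyer can get its own copy) \emph{plus} a separate item $S_v$. Crucially, it also creates a \emph{vertex buyer} $B_v$ with value $2$ for $S_v$ and for all $\alpha_{v,\cdot}$, and the only world edges are from $B_v$ to the $\alpha_{v,\cdot}$. The price variation that encodes cover membership lives on the vertex buyers, not the edge buyers: if some edge-buyer $B_{(u,v)}$ transacts (via a platform edge) with an $\alpha_{v,i}$, then $B_v$ has an opportunity path to a value-$1$ buyer and pays $1$ instead of $2$. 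Dummy buyers have high value for all $\alpha$ items and transact via \emph{platform} edges with the unused $\alpha$'s, contributing $H$ each to revenue and ensuring nothing is unsold. This yields optimal revenue exactly $2|V| + (H+1)|E| - q$, from which the APX-hardness transfer is immediate. If you want to salvage your approach, the minimal fix is to replicate each vertex's seller $\deg(v)$ times and introduce per-vertex buyers whose price depends on whether any incident edge-buyer grabs one of those replicas; at that point you will have essentially reconstructed the paper's gadget.
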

Like \citet{guruswami2005profit} for maximum revenue envy-free pricing, our proof reduces from minimum vertex cover on graphs with degree at most $K$. For an instance of the vertex cover problem, we construct a market with buyers that correspond to vertices and edges. The edge buyers always buy at price $1$, but a vertex buyer buys at price $1$ or $2$ depending on whether the vertex belongs to the minimum vertex cover. To achieve the different prices, we use world edges to build opportunity paths between the vertex buyers and edge buyers, so that the number of the vertex buyers paying price $1$ is at least the size of the minimum vertex cover. Finally we add extra ``dummy'' buyers to make sure no seller has price zero. The full reduction and proof is presented in Appendix \ref{app:vc_apx_proof}.

\paragraph{Differences from Envy-Free Pricing.} While there are connections between the platform's problem and profit-maximizing envy-free pricing \citep{guruswami2005profit, balcan2008item}, positive results there are less easily portable to our setting for two major reasons. First, the platform does not directly set prices; rather, prices are determined as a function of the edges that the platform chooses to add. Second, envy-free pricing allows unsold items to have high prices, effectively eliminating them from the market, where they do not affect prices of other trades. In contrast, a competitive equilibrium requires that unsold items clear at price zero. We bypass this difference in our proofs of hardness by adding dummy buyers so all sellers transact. However, this presents a major barrier when adapting approximation algorithms from the envy-free pricing literature. Since the platform cannot directly set prices, to remove the effect of a seller $s_j$ on other trades, there needs to be a buyer with high valuation for $s_j$, but if no such buyer exists, then there is nothing the platform can do. In contrast, in envy-free pricing, one could simply set $p_j$ to be high enough such that the item is irrelevant.

\section{Tractable Special Cases}\label{sec:poly-special_case}
While the previous section shows the platform's problem is hard in general, here we demonstrate there are structural properties of the markets that reduce the complexity of the problem. In Section~\ref{sec:structural_prop}, we characterize properties that hold for the platform's optimal matching in general unit-supply, unit-demand markets. In Sections~\ref{sec:special-case-homogeneous} and \ref{sec:special-case-identical}, we present two classes of markets where the platform's problem can be solved in polynomial time. These markets represent slight restrictions of the settings that are NP-hard (Theorem \ref{thm:nphard}), showing that our proof of NP-hardness is in this sense  ``tight.''

\subsection{Structural Properties of Revenue Maximum Matchings}\label{sec:structural_prop}

In spite of the hardness of the platform's problem, we are able to prove some structural properties that hold for the platform's optimal matching strategy more generally. 
Both lemmas below will play a role when finding revenue-optimal matchings in special markets in Section~\ref{sec:special-case-homogeneous} and \ref{sec:special-case-identical}, as well as the proof of hardness in Appendix~\ref{app:vc_apx_proof}.

First, we show that the platform has no incentive to add more than one edge incident on any one buyer/seller. Thus, we can think of the platform as recommending a single transaction to a given buyer or seller; any further recommendations are extraneous.
\begin{restatable}{lemma}{atmostone}
\label{lem:at_most_one_edge}
In any general market, there exists a revenue-optimal matching that adds at most one edge to each buyer and each seller. Furthermore, in this matching, all platform edges transact at positive prices in the competitive equilibrium.
\end{restatable}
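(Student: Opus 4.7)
The plan is to start from an arbitrary revenue-optimal configuration---a platform-edge set $E_p$ and a revenue-maximizing competitive equilibrium $(\mathbf{x},\mathbf{p})$ on $G_p = G_w \cup E_p$---and transform it in two phases into a configuration having both claimed properties, all the while showing that revenue does not decrease. The key monotonicity fact I rely on throughout is: removing edges from a graph can only weakly increase the max-price $p_s = W(G) - W(G \setminus \{s\})$ of every seller $s$, since the subtracted term $W(G \setminus \{s\})$ weakly decreases, while as long as the chosen allocation $\mathbf{x}$ still lies in the reduced graph, the leading term $W(G)$ is unchanged.

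In Phase 1, I discard from $E_p$ every edge not used in the allocation $\mathbf{x}$, obtaining $E_p^{(1)}$. Because $\mathbf{x}$ is a matching, $E_p^{(1)}$ automatically has at most one edge incident on each buyer and each seller. Setting $G^{(1)} = G_w \cup E_p^{(1)}$, the inclusions $\mathbf{x} \subseteq G^{(1)} \subseteq G_p$ force $W(G^{(1)}) = W(G_p)$, and for every seller $s$ we have $G^{(1)}\setminus\{s\} \subseteq G_p\setminus\{s\}$, hence $W(G^{(1)}\setminus\{s\}) \leq W(G_p\setminus\{s\})$. The max-price formula then yields new prices $p'_s \geq p_s$. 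Since $\mathbf{x}$ is still a maximum-weight matching, it remains a valid CE allocation under $\mathbf{p}'$, and the revenue from the transacting platform edges of $E_p^{(1)}$ weakly increases.

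In Phase 2, starting from $(E_p^{(1)}, (\mathbf{x},\mathbf{p}'))$, I iteratively remove any platform edge $(b_i,s_j) \in E_p^{(1)}$ transacting at price $p'_{s_j} = 0$. The identity $p'_{s_j} = 0 \iff W(G^{(1)}) = W(G^{(1)} \setminus \{s_j\})$ yields a max-weight matching $\mathbf{x}''$ of $G^{(1)}$ that avoids $s_j$ entirely, and alternating-path exchange along the unique component of $\mathbf{x} \triangle \mathbf{x}''$ containing $s_j$ produces a max-weight matching $\widetilde{\mathbf{x}}$ of $G^{(1)} - (b_i,s_j)$ that avoids $(b_i,s_j)$ and agrees with $\mathbf{x}$ outside that component. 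The same monotonicity as in Phase 1 shows that the max prices in $G^{(1)} - (b_i,s_j)$ weakly dominate those in $G^{(1)}$, so every positive-price platform edge of $\mathbf{x}$ that survives in $\widetilde{\mathbf{x}}$ contributes at least as much revenue as before, while $(b_i,s_j)$ previously contributed zero.

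The main obstacle is the exchange step in Phase 2: the alternating path along which we swap could traverse some \emph{other} platform edge of $\mathbf{x}$ carrying positive price and replace it with an edge from $\mathbf{x}''$ that is a world edge, destroying a positive-price platform trade. My plan to handle this is to choose $\mathbf{x}''$, among all max matchings of $G^{(1)}$ avoiding $s_j$, to maximize overlap with the positive-price platform edges of $\mathbf{x}$; any residual conflict corresponds to an alternating cycle of equal weight in $\mathbf{x} \triangle \mathbf{x}''$ along which a further local swap restores the platform edge without changing total matching weight. Iterating Phase 2 thus eliminates every zero-price platform transaction, leaving a revenue-optimal $E_p^{\star}$ with at most one platform edge per buyer and per seller and with every platform edge transacting at a strictly positive price.
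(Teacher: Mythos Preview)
Your Phase~1 is correct and is exactly the paper's argument: deleting any platform edge not used by the allocation keeps $\mathbf{x}$ feasible, keeps $W$ unchanged, and can only raise each max price $p_s=W(G)-W(G\setminus\{s\})$. This already yields a revenue-optimal $E_p$ that is a matching with every edge transacting.

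The problem is Phase~2. You correctly observe---going further than the paper, which simply writes ``the same argument applies for transactions with price zero''---that once you delete a \emph{transacting} zero-price edge $(b_i,s_j)$ you must switch to a new allocation, and that the alternating path $P$ in $\mathbf{x}\triangle\mathbf{x}''$ through $s_j$ may carry other, positive-price platform edges which the swap replaces by world edges. Your proposed fix does not close this gap: choosing $\mathbf{x}''$ to maximize overlap with $E_p^{+}$ only rules out conflicts lying on \emph{cycle} components of $\mathbf{x}\triangle\mathbf{x}''$; a positive-price platform edge sitting on the \emph{path} component $P$ cannot be restored by any ``alternating-cycle swap,'' because $P$ is a path, and flipping any sub-segment of $P$ that would reinstate the edge either re-introduces $(b_i,s_j)$ or fails to be a matching.

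This is not merely a missing detail---the step actually fails. Take $n=2$, $m=3$, world edges $\{(b_1,s_2),(b_2,s_3)\}$, and values $v_{1,1}=1,\ v_{1,2}=2,\ v_{2,2}=2,\ v_{2,3}=1$ (all others $0$). With $E_p=\{(b_1,s_1),(b_2,s_2)\}$ the max-price CE has $p_1=0,\ p_2=1$, the platform selects $\mathbf{x}=\{(b_1,s_1),(b_2,s_2)\}$, and revenue is $1$; one checks this is $\mathrm{Rev}^\star$. Removing the zero-price edge $(b_1,s_1)$ leaves $E_p'=\{(b_2,s_2)\}$; now the \emph{unique} max-weight matching is $\{(b_1,s_2),(b_2,s_3)\}$, the platform edge $(b_2,s_2)$ no longer transacts, and revenue drops to $0$. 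Here the symmetric difference $\mathbf{x}\triangle\mathbf{x}''$ is the single path $s_1\,b_1\,s_2\,b_2\,s_3$, the positive-price edge $(b_2,s_2)$ sits on it, and no choice of $\mathbf{x}''$ or subsequent swap avoids the loss. So your Phase~2 cannot succeed as written; the paper's one-line ``same argument'' glosses over exactly this difficulty, and the ``positive prices'' clause of the lemma appears not to hold in general (though the paper only ever uses the ``at most one edge, all transacting'' part, which your Phase~1 does establish).
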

\begin{proof}
    Let $E_t$ denote the set of edges where transactions take place in the competitive equilibrium. Assume that the platform adds two or more edges to a buyer or seller in $E_p$, or an edge that does not transact. Since buyers are unit-supply and sellers are unit-demand, there exists an edge $e_{ij}=(b_i, s_j) \in E_p$ such that $e_{ij} \notin E_t$. Removing $e_{ij}$ does not change seller $s_j$'s price, because 
    \begin{eqnarray*}
        W(G_p \setminus \{e_{ij}\}) & = &W(G_p)\\
        W(G_p\setminus\{s_j, e_{ij}\}) & = & W(G_p \setminus \{s_j\})
    \end{eqnarray*}
     However, removing $e_{ij}$ weakly increases other sellers' $s'_{j}$ prices, because 
     \begin{eqnarray*}
         W(G_p\setminus\{s'_j, e_{ij}\})\leq W(G_p \setminus \{s'_j\})
     \end{eqnarray*}
    Thus the platform weakly prefers to drop all edges $e_{ij}$ that are not transacting in the final competitive equilibrium. The same argument applies for transactions with price zero.
\end{proof}

Intuitively, any non-transacting edges that the platform adds only increase the level of competition between different sellers. By removing these edges, sellers are able to charge higher prices, and in turn, the platform's revenue from the transactions it facilitates increases. So when searching for the revenue-optimal matching, the platform has no strict incentives to add platform edges that do not transact.

Next, we show that the platform has an incentive to make sure that as many buyers and sellers transact as possible. This lends further support to the hypothesis that the platform's incentive to maximize revenue can come hand-in-hand with social welfare maximization.
\begin{restatable}{lemma}{allTransact}
\label{lem:all_transact}
    In any general market, there exists a revenue-optimal matching where either all sellers sell, or all buyers buy, or buyers and sellers who do not trade have value zero for each other.
\end{restatable}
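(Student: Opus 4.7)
The plan is to argue by contradiction. Fix any revenue-optimal matching $M^*$ with platform edges $E_p^*$ on the platform graph $G_p^*$; by Lemma~\ref{lem:at_most_one_edge}, every edge of $E_p^*$ transacts at a positive price in the induced equilibrium, so the endpoints of each platform edge are matched in $M^*$. Suppose that in $M^*$ neither all sellers sell nor all buyers buy, and --- toward contradiction --- that there still exists a non-transacting buyer $b_i$ and a non-transacting seller $s_j$ with $v_{ij} > 0$. Because $b_i$ and $s_j$ are both unmatched in the max-weight matching $M^*$, the edge $(b_i, s_j)$ cannot already lie in $E_w \cup E_p^*$: otherwise $M^* \cup \{(b_i, s_j)\}$ would be a strictly heavier matching in $G_p^*$. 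Hence $(b_i, s_j)$ is a valid new platform edge that I can add.

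The construction simply adds this edge, yielding $E_p' = E_p^* \cup \{(b_i, s_j)\}$ and $G_p' = G_p^* \cup \{(b_i, s_j)\}$. I would first show $W(G_p') = W(G_p^*) + v_{ij}$: the augmented matching $M^* \cup \{(b_i, s_j)\}$ is valid in $G_p'$ with weight $W(G_p^*) + v_{ij}$, while any matching of $G_p'$ using the new edge decomposes as that edge plus a matching of $G_p^* \setminus \{b_i, s_j\}$ of weight at most $W(G_p^*)$. Next I would compute the new maximum competitive prices. For $s_j$ itself, removing $s_j$ from $G_p'$ deletes the new edge, and $W(G_p^* \setminus \{s_j\}) = W(G_p^*)$ because $s_j$ is unmatched in $M^*$; hence $p_j(G_p') = v_{ij} > 0$. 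For any other seller $s_k$, the general bound $W(G_p' \setminus \{s_k\}) \leq W(G_p^* \setminus \{s_k\}) + v_{ij}$ (adding one edge of weight $v_{ij}$ can raise welfare by at most $v_{ij}$), combined with $W(G_p') = W(G_p^*) + v_{ij}$, gives $p_k(G_p') \geq p_k(G_p^*)$.

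To finish, the platform may break ties and select $M^* \cup \{(b_i, s_j)\}$ as the realized max-weight matching in $G_p'$, preserving every transaction of $M^*$; each old platform-edge price is weakly larger than before, and the new platform edge transacts at price $v_{ij} > 0$. The total revenue therefore strictly exceeds that of $M^*$, contradicting optimality. The main conceptual subtlety is the price monotonicity step for $s_k \neq s_j$: one might expect adding an edge to depress prices through increased competition, but because both endpoints of the new edge were previously unmatched, the extra welfare $v_{ij}$ is realized in both $G_p'$ and $G_p' \setminus \{s_k\}$, so seller $s_k$'s marginal contribution --- and hence its maximum competitive price --- weakly grows rather than shrinks.
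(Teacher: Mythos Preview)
Your proposal is correct and follows essentially the same approach as the paper: pick an untraded pair $(b_i,s_j)$ with $v_{ij}>0$, add the edge, show the new transaction yields price $v_{ij}$, and verify that every other seller's maximum competitive price weakly increases, contradicting revenue optimality. The only cosmetic difference is that the paper argues the price-monotonicity step for $s_k\neq s_j$ by a two-case analysis (whether or not the new edge is used in the max-weight matching of $G_p'\setminus\{s_k\}$), whereas you invoke the cleaner blanket bound $W(G_p'\setminus\{s_k\})\le W(G_p^*\setminus\{s_k\})+v_{ij}$ directly; both yield the same inequality.
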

\begin{proof}
    We consider the case when there are weakly more buyers than sellers, though the opposite case follows similarly. Take any equilibrium where a seller $s_j$ does not transact. As there are at least as many buyers as sellers, there must be a buyer $b_i$ who also does not transact. For the sake of contradiction, assume that $v_{ij} > 0$. Note that there cannot exist an edge connecting $b_i$ to $s_j$, or else this would violate the First Welfare Theorem. Consider connecting the two via a new platform edge $e_{ij}=(b_i, s_j)$. 

     From this new transaction, we obtain revenue
    \begin{align*}
        W(G_p \cup \{e_{ij}\}) - W(G_p \cup \{e_{ij}\} \setminus \{s_j\}) = W(G_p \cup \{e_{ij}\}) - W(G_p) = v_{ij} > 0
    \end{align*}
    which is strictly positive.
    
    Now, consider any other seller $s'_j$, and let $G_p$ be the graph before adding edge $e_{ij}$. We have that
     \begin{align*}
         p_{s'_j}(G_p) &= W(G_p)-W(G_p \setminus \{s'_j\})\\
         p_{s'_j}(G_p\cup\{e_{ij}\}) &= W(G_p)+v_{ij}-W(G_p\cup\{e_{ij}\} \setminus\{s'_j\})
     \end{align*}
     We will show that $p_{s'_j}(G_p\cup\{e_{ij}\})\geq p_{s'_j}(G_p)$. It suffices to prove that
    \begin{align*}
        v_{ij} \geq W(G_p\cup\{e_{ij}\} \setminus\{s'_j\}) - W(G_p \setminus \{s'_j\})
    \end{align*}
    If the max weight matching in $G_p\cup\{e_{ij}\} \setminus\{s'_j\}$ does not use the new edge $e_{ij}$, then the right hand side is equal to zero and we are done. Thus, suppose that $e_{ij}$ is in the max weight matching. Then
    \begin{align*}
       &W(G_p\cup\{e_{ij}\} \setminus\{s'_j\})=W(G_p \setminus \{b_i, s_j, s'_j\})+v_{ij}\leq W(G_p \setminus \{s'_j\})+v_{ij}
    \end{align*}
    and this is precisely the inequality we want to prove. It follows that any other seller $s'_j$'s price weakly increases, so by matching $s_j$ to $b_i$, the platform's revenue strictly increases.
\end{proof}

\subsection{Single World Seller, Homogeneous-Goods Markets}\label{sec:special-case-homogeneous}

We begin by considering the class of homogeneous-goods markets ($v_{ij} = v_i$),
and in addition requiring that each buyer has at most one incident world edge. We call these markets SWSH (single world seller, homogeneous-goods) markets. This structure allows us to nicely decompose the world graph into what we will call {\em seller subgraphs}, {\em dangling buyers}, and {\em dangling sellers}.

\begin{definition}[Seller Subgraphs]
     In SWSH markets, take any seller $s_j$ who knows at least one buyer via a world edge. The {\em seller subgraph} $S_j$ is the collection of buyers and edges $S_j:=\{s_j\}\cup \{b_i|(b_i,s_j)\in E_w\} \cup \{(b_i, s_j) \in E_w\}$ adjacent to seller $s_j$ via world edges. We will use $v(S_j):=\max_{(b_i,s_j)\in E_w}v_i$ to denote the value of the largest buyer connected to $s_j$.
\end{definition}

There can be multiple buyers in a seller subgraph. All buyers and sellers are either part of a seller subgraph or have no incident world edges. The buyers and sellers with no incident world edges are termed \emph{dangling buyers} and \emph{dangling sellers}. By definition, platform edges cannot be part of any seller subgraphs. In a platform graph $G_p$, platform edges connect seller subgraphs into \emph{cycles} and \emph{chains}. We define these structures and show that they are closely related to the platform's revenue.

\begin{definition}[Cycles]
    For a SWSH market $G=(B,S,E)$ and an allocation $\textbf{x}$, a cycle of $k$ seller subgraphs is defined by 
    $$ S_1 \to S_2 \to S_3 \ldots S_k \to S_1,$$
    where $S_i \to S_j$ means that in allocation $\textbf{x}$, seller $s_i$ sells to a buyer $b_j\in S_j$ in seller subgraph $S_j$.
\end{definition}

\begin{definition}[Chains]
    For a SWSH market $G=(B,S,E)$ and an allocation $\textbf{x}$, a chain of $k$ seller subgraphs is defined by 
    $$ (s_0) \to S_1 \to S_2 \ldots S_k \to (b_k),$$
    where $S_i \to S_j$ means in allocation $\textbf{x}$, where $s_i$ buys from a buyer $b_j\in S_j$ in seller subgraph $S_j$. For the {\em starting subgraph} $S_1$, there can be a dangling seller $s_0$ selling to a buyer in $S_1$. For the {\em terminal subgraph} $S_k$, $s_k$ either does not sell, sells to a buyer $b_k$ who is dangling or $b_k\in S_k$, or belongs to a cycle that does not contain subgraphs $S_1,\ldots S_k$. A chain must include a terminal subgraph.
\end{definition}
We use {\em $k$-cycle} and {\em $k$-chain} to denote cycles and chains consisting of $k\geq 1$ seller subgraphs, or of length $k$. We use {\em $0$-chain} to denote a dangling seller $s_0$ selling to a buyer in a chain or cycle. A $1$-cycle is a seller subgraph $S_i$ where the seller $s_i$ transacts through world edges with a buyer $b_i\in S_i$. All sellers in a cycle transact within the cycle, while one seller $s_k\in S_k$ in the terminal subgraph of a chain can sell outside of the chain. This makes it possible for a chain to be attached to a cycle through $s_k$. 
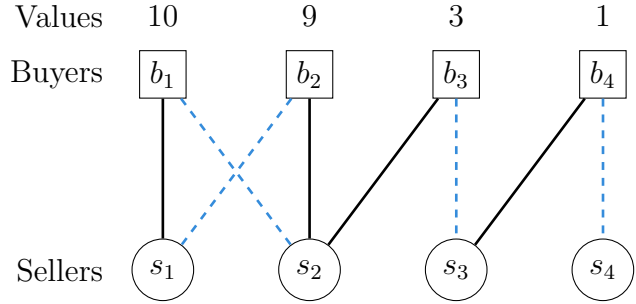
\begin{figure}[hbtp]
    \centering
    \begin{tikzpicture}[scale=1.3]
        \foreach \i/\label in {1/$b_1$, 2.5/$b_2$, 4/$b_3$, 5.5/$b_4$}
            \node[draw, shape=rectangle, minimum size=0.6cm] (\label) at (\i, 2) {\label};
            
        \foreach \i/\label in {1/$s_1$, 2.5/$s_2$, 4/$s_3$, 5.5/$s_4$}
            \node[draw, shape=circle, minimum size=0.6cm] (\label) at (\i, 0) {\label};

        \foreach \x/\y in {$b_1$/$s_2$, $b_2$/$s_1$, $b_3$/$s_3$, $b_4$/$s_4$}
            \draw[line width=1pt, customblue, dashed] (\x) -- (\y);

        \foreach \x/\y in {$b_1$/$s_1$, $b_2$/$s_2$, $b_3$/$s_2$, $b_4$/$s_3$}
            \draw[line width=1pt, black, solid] (\x) -- (\y);

        \node at (1, 2.6)  {10};
        \node at (2.5, 2.6)  {9};
        \node at (4, 2.6)  {3};
        \node at (5.5, 2.6)  {1};

        \node[left] at (0.5, 2.6)  {Values};
        \node[left] at (0.5, 2)  {Buyers};
        \node[left] at (0.5, 0)  {Sellers};
    
    \end{tikzpicture}
        \caption{An example of a SWSH market. World edges are depicted in black, and the optimal set of platform edges is in blue. In revenue optimal matching, $s_1$ sells to $b_2$, $s_2$ sells to $b_1$, $s_3$ sells to $b_3$ and $s_4$ sells to $b_4$. 
        } \label{fig:hom_char_example}
\end{figure}

\begin{example} \label{example:cycle_chain_illustrate}
To illustrate the definitions, consider the example market presented in Figure~\ref{fig:hom_char_example}. The set of optimal platform edges decomposes seller subgraphs into cycles and chains as follows:
    \begin{align*}
       \underbrace{s_4 \; {\to}}_{\text{0-Chain}}\;  \underbrace{S_3\; {\to} }_{\text{$1$-Chain}} \; \underbrace{S_2 \to S_1 \to S_2}_{\text{$2$-Cycle}}
    \end{align*}
\end{example}

The definitions above and the structure of SWSH markets directly yield the following proposition.
\begin{proposition}\label{prop:cycles_chains}
    In SWSH markets, any set of transacting platform edges uniquely connects seller subgraphs into chains and cycles.
\end{proposition}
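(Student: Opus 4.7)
The plan is to encode the transacting edges by a deterministic outgoing map on seller subgraphs and read off the chain/cycle decomposition directly from the orbit structure of this map.

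First, for each seller subgraph $S_i$ I would classify its unique seller $s_i$'s transaction into four mutually exclusive cases: (a) $s_i$ does not sell; (b) $s_i$ sells via its world edge to a buyer inside $S_i$; (c) $s_i$ sells to a dangling buyer; or (d) $s_i$ sells via a transacting platform edge to a buyer outside $S_i$. In case (d), the SWSH assumption that each buyer has at most one incident world edge guarantees the recipient lies in a \emph{unique} seller subgraph $S_j$, so the outgoing arrow $S_i \to S_j$ is well-defined and forced by the allocation. In cases (a)--(c), $S_i$ emits no outgoing arrow: case (b) constitutes a $1$-cycle, while (a) and (c) match the first two terminal conditions in the chain definition. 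Transacting dangling sellers are handled analogously: each such $s_0$ points uniquely either to the seller subgraph containing its buyer (giving an arrow $s_0 \to S_1$) or to a dangling buyer, in which case the edge stands alone.

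Second, I would trace the orbits of this partial functional graph. Starting from any seller subgraph and iterating the outgoing map yields a finite sequence that must either (i) reach a subgraph with no outgoing arrow, producing a chain whose terminal subgraph satisfies one of the first three cases of the chain definition, or (ii) revisit a previously visited subgraph, in which case the repeated portion is a cycle per Definition~4.3, and any non-empty prefix leading into it is a chain whose terminal subgraph satisfies the fourth case of the chain definition (its seller sells into a cycle disjoint from the chain's subgraphs). Prepending a transacting dangling seller that points into a subgraph on this structure yields the corresponding $0$-chain.

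Uniqueness is then immediate since the outgoing map is fully determined by who each seller sells to in the allocation; hence both the closed orbits (cycles) and the maximal paths feeding into them or into terminal subgraphs (chains) are determined by the set of transacting platform edges. The remaining work is bookkeeping: verifying that every terminal behaviour matches the chain definition and every closed orbit matches the cycle definition, which is exactly the exhaustive case analysis above. I do not anticipate a substantive obstacle --- the proposition is essentially the classical statement that a partial function on a finite set decomposes canonically into cycles together with paths feeding into them or into sinks, translated into the SWSH terminology. The only part requiring care is the SWSH-specific step that guarantees determinism of the outgoing map (the single-world-edge assumption), which is what rules out the ambiguity that could otherwise arise when a buyer sits in multiple subgraphs.
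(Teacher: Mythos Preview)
Your proposal is correct and is essentially the natural formalization of what the paper intends. Note, however, that the paper does not actually give a proof of this proposition: it simply states that ``the definitions above and the structure of SWSH markets directly yield the following proposition,'' treating it as an immediate consequence of how seller subgraphs, chains, and cycles are defined together with the SWSH constraint that each buyer has at most one world edge. Your argument via the out-degree-$\leq 1$ functional graph on seller subgraphs is exactly the right way to make this precise, and the care you take in identifying the SWSH single-world-edge assumption as the source of determinism is the key point the paper leaves implicit.
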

Lemma~\ref{lem:at_most_one_edge} says there exists a revenue optimal matching where all platform edges transact. By Proposition~\ref{prop:cycles_chains}, these platform edges uniquely define a set of cycles and chains. Thus, finding the revenue-optimal matching is equivalent to finding the chains and cycles with the highest revenue. For ease of illustration, in this section we consider a market where the number of buyers $n$ equals the number of sellers $m$ \footnote{Our proof in Appendix~\ref{app:swsh_reduction} works for all cases of $n$ and $m$.}, and all buyers and sellers transact according to Lemma~\ref{lem:all_transact}. In addition, from here forward, we will suppress any mention of 0-chains and deal only with chains and cycles of positive length, leaving implicit the fact that any buyer who does not transact with sellers as part of a cycle or a chain will be matched to a dangling seller.

Recall that Theorem~\ref{thm:oppo_path} states that the price of a trade in a homogeneous-goods market is determined by the lowest buyer valuation on a buyer's opportunity path. This is precisely what the cycle and chain structure captures. Consider the $0$-chain and the $1$-chain in Example~\ref{example:cycle_chain_illustrate}. Buyer $b_4$'s opportunity path goes through the chain, reaching all buyers. As she has the lowest valuation on this chain, she pays price $1$. Buyer $b_3$ also has the lowest valuation on her own opportunity path, so she pays price $3$. Similar logic applies to the cycle. 
    
However, if $b_2$ and $b_3$ were to swap the sellers that they trade with, both $b_2$ and $b_1$ would have $v_3=3$ as the lowest valuation on their opportunity paths, and they would thus pay $3$ instead of $9$. As one can design the set of cycles and chains to go through only the buyer with the largest valuation in a given seller subgraph, these highest value buyers (i.e., $v(S_j)$ for $S_j$) are crucial for revenue maximization in SWSH markets.

We are now ready to present our results. Assume there are $\ell\leq m$ seller subgraphs in a SWSH market. We first sort all seller subgraphs by their valuations $v(S_1) \geq v(S_2) \geq \cdots \geq v(S_\ell)$. This induces a partial order over all seller subgraphs. We call a $k$-cycle \emph{contiguous} if it comprises of seller subgraph $S_i, S_{i+1}, \ldots S_{i+k-1}$ for some $i$ – that is, no intermediate seller subgraph is skipped. We call a chain \emph{contiguous} if there exists an index $i$ such that the chain is of form $S_\ell \to S_{\ell-1} \to \ldots \to S_{i-1} \to S_i$ and has terminal subgraph $S_i$. Our algorithm to compute a revenue-optimal matching relies on the following structural result, whose proof we give through a series of lemmas in Appendix~\ref{app:swsh_char}.
\begin{restatable}{theorem}{SWSHChar}
\label{thm:opt_char}
    In any SWSH market, there exists a set of revenue-optimal platform edges that, through the largest buyers in each seller subgraph,  connects seller subgraphs into contiguous cycles of length at most $3$, and at most one contiguous chain.
\end{restatable}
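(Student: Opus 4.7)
The plan is to begin with an arbitrary revenue-optimal set of platform edges and apply a sequence of local exchange arguments, each weakly preserving revenue, until the matching has the claimed structure. By Proposition~\ref{prop:cycles_chains}, the transacting platform edges induce a decomposition of the seller subgraphs into cycles and chains, and each modification acts on this decomposition. Throughout, revenue is computed via Theorem~\ref{thm:oppo_path}: each seller's price equals the lowest buyer valuation reachable through an opportunity path, which—once trades are routed through the largest buyer of each participating seller subgraph—reduces to the minimum $v(S_j)$ along the cycle or chain.

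First I would reduce to the case where every platform trade in $S_j$ uses the largest buyer $b_j^{\ast}$: if instead some trade used $b \in S_j$ with $v(b) < v(S_j)$, then $v(b)$ would sit on neighboring trades' opportunity paths and depress their minima, so redirecting the platform edge from $b$ to $b_j^{\ast}$ weakly raises every affected price. Once this holds, a $k$-cycle on subgraphs with sorted values $u_1 \geq \cdots \geq u_k$ contributes revenue $k \cdot u_k$. Next I would enforce that cycles have length at most $3$: any $k$-cycle with $k \geq 4$ can be split into a $2$-cycle on the subgraphs of values $u_1, u_2$ and a $(k-2)$-cycle on the rest, giving revenue $2 u_2 + (k-2) u_k \geq k u_k$, so iterating terminates with all cycles of length at most $3$. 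An analogous argument shows the platform can assume at most one chain: two disjoint chains can be combined by routing a new platform edge from the free terminal seller of one chain into the starting subgraph of the other, adding a trade at nonnegative price while weakly preserving existing ones.

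The final and most delicate step is contiguity. After sorting the participating seller subgraphs by $v(S_j)$, I would view the decomposition as a partition of ranks into groups (one per cycle, plus possibly the chain). If the partition is not contiguous---two groups interleave in the sorted order---then an uncrossing swap exchanges elements between these groups so each becomes a contiguous block. Using the group-size-weighted minimum formula for revenue derived above, together with $v(S_c) \geq v(S_d)$ whenever $c < d$, one verifies that the swap weakly increases the total revenue $\sum_{g} |g| \cdot \min_{S \in g} v(S)$.

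The main obstacle will be precisely this final step. While the $k$-cycle split and the chain merge are purely local operations on adjacent structures, contiguity is a global property of the whole partition, and the uncrossing analysis splits into cases depending on whether the swapped elements attain their groups' minima, on the relative sizes of the interleaved groups, and on the role of any $1$-cycles (whose revenue additionally depends on the second-largest buyer inside the seller subgraph rather than just on $v(S_j)$). Handling these cases carefully via a potential-function argument that decreases with each uncrossing swap terminates the procedure and delivers the contiguous structure claimed.
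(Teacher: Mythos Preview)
Your outline follows essentially the same local-exchange strategy as the paper: route trades through the largest buyer of each subgraph, split long cycles, merge chains, then argue contiguity. The first three steps match the paper's lemmas closely, and your splitting inequality $2u_2+(k-2)u_k\ge ku_k$ is exactly the computation underlying the paper's ``split into $2$'s and $3$'s'' step.

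The place where your plan diverges, and where there is a real gap, is the contiguity step. Your uncrossing argument rests on the objective $\sum_g |g|\cdot \min_{S\in g} v(S)$, but this formula is only valid for cycles of length $\ge 2$. A $1$-cycle is a world-edge trade and contributes revenue $0$, not $v(S_j)$; and a chain's revenue is \emph{not} determined by its own minimum $v(S)$ but by the second-highest buyer of the subgraph in the cycle to which the chain attaches (this is what actually lies at the end of the chain buyers' opportunity paths). Because of this, a single swap that ``uncrosses'' a chain group against a cycle group, or that involves a $1$-cycle, can change revenue in ways your potential does not track, and it is not clear the swap is always weakly improving. Your parenthetical about $1$-cycles misidentifies the role of the second-largest buyer: it governs the \emph{chain's} revenue through the attachment point, not the $1$-cycle's own revenue.

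The paper avoids this by decoupling the chain from the cycle analysis before arguing contiguity: it first proves that if any $S_i$ lies on the chain then every smaller $S_j$ does too, and that no subgraph larger than the smallest member of the attached cycle is on the chain. Only after the chain has been pushed to the tail of the sorted order does the paper argue cycle contiguity, and even then not by a pure uncrossing swap but by case analysis that sometimes merges cycles, sometimes peels a subgraph off into the chain, and explicitly compares against the second-highest buyer of the neighboring subgraph. If you want to salvage the uncrossing viewpoint, you would first need the paper's two chain lemmas to isolate the chain, and then your objective $\sum_g |g|\cdot \min_{S\in g} v(S)$ becomes correct on the remaining cycle groups of length $\ge 2$; the residual $1$-cycle cases still need the ad hoc treatment the paper gives them.
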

 
At a high level, the proof first shows that any cycle of length more than three can be split up into smaller cycles yielding weakly more revenue. Then if a chain or a cycle is not contiguous, we can further divide and/or join them to weakly increase revenue. We now use dynamic programming to leverage this characterization result and efficiently search over the possible optimal configurations as described in Theorem~\ref{thm:opt_char}.
\begin{theorem}\label{thm: poly_AMOS}
    For SWSH markets where $n=m$, there exists a polynomial-time algorithm to find the set of revenue-optimal platform edges.
\end{theorem}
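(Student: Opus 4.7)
The plan is to leverage the structural characterization of \Cref{thm:opt_char} to reduce the search space to a polynomial collection of candidate configurations, and to select the best one via a one-dimensional dynamic program over the sorted seller subgraphs. After sorting the seller subgraphs so that $v(S_1) \geq v(S_2) \geq \cdots \geq v(S_\ell)$, \Cref{thm:opt_char} guarantees that some revenue-optimal configuration has the following shape: a prefix $S_1, \ldots, S_r$ of the sorted list is partitioned into contiguous blocks of size $1$, $2$, or $3$, each forming a cycle through the largest buyers of its subgraphs, and a suffix $S_{r'}, \ldots, S_\ell$ (with $r' \in \{r, r+1\}$) constitutes at most one contiguous chain; the flag $r' \in \{r, r+1\}$ distinguishes whether the chain's terminal seller $s_{r'}$ is idle or instead participates in the final cycle of the prefix (in which case the chain attaches to the second-largest buyer in $S_{r'}$, as in Example~\ref{example:cycle_chain_illustrate}). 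Since each such configuration is determined by the cut index $r$, the flag, and a partition of $\{1,\ldots,r\}$ into contiguous blocks of size at most $3$, the search space is polynomially large.

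To exploit this, I would first precompute, for every admissible cycle (indexed by its ending position $j$ and length $k \in \{1, 2, 3\}$) and every admissible chain (indexed by its starting position $r'$, with a flag for whether the terminal seller is idle or in the preceding cycle), the corresponding revenue. Each such revenue depends only on a contiguous window of sorted subgraphs: in an SWSH market each buyer has at most one world edge, so the opportunity path emanating from any transacting buyer is forced to follow the cycle or chain and cannot branch out. \Cref{thm:oppo_path} therefore yields the price of every transaction in polynomial time as the minimum valuation along that path. I would then run the dynamic program
\begin{align*}
    f(j) \;=\; \max_{k \in \{1, 2, 3\},\, k \le j} \bigl\{\, f(j - k) + R_{\text{cyc}}(j - k + 1, j) \,\bigr\}, \qquad f(0) = 0,
\end{align*}
which computes the best revenue from covering $S_1, \ldots, S_j$ entirely by contiguous cycles of length at most $3$. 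The overall optimum is then
\begin{align*}
    \max\!\Bigl\{\, f(\ell),\; \max_{1 \le r' \le \ell}\bigl[\, f(r' - 1) + R_{\text{chain}}^{\text{idle}}(r') \,\bigr],\; \max_{1 \le r' \le \ell}\bigl[\, f(r') + R_{\text{chain}}^{\text{cyc}}(r') \,\bigr] \,\Bigr\},
\end{align*}
corresponding respectively to (a) no chain, (b) a chain with idle terminal seller, and (c) a chain whose terminal seller joins the preceding cycle. Standard backtracking recovers the platform edges, and by \Cref{lem:all_transact} any remaining unmatched buyers and sellers can be paired with dangling counterparts via extra platform edges that, by \Cref{lem:at_most_one_edge} and \Cref{thm:oppo_path}, do not perturb the revenues already computed.

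The main obstacle is verifying that the revenue of each candidate cycle or chain is indeed a function only of the subgraphs it visits, so that the precomputation and the DP decomposition are valid. This localization property relies crucially on the SWSH restriction that each buyer has at most one world edge, which prevents opportunity paths from branching out of a block into other parts of the configuration, so that \Cref{thm:oppo_path} can be applied locally. Once this is established, correctness of the DP is immediate, and the total running time is polynomial in $n$ and $m$: there are $O(\ell) = O(m)$ DP states with $O(1)$ transitions each, and both each transition and every block-revenue precomputation take polynomial time.
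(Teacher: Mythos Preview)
Your overall strategy---sort the seller subgraphs, run a one-dimensional DP over contiguous blocks of size at most $3$ for the cycle part, and then enumerate the possible suffix chains---is exactly the paper's approach. The paper likewise fixes a terminal index $k$, runs the DP on $S_1,\ldots,S_{k-1}$, adds the chain revenue $R[k]$, and maximizes over $k$; since the chain is always a suffix, your single DP pass with $f(\cdot)$ is equivalent.

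Where your write-up goes wrong is in the treatment of the chain's attachment. A contiguous chain has terminal subgraph $S_{r'}$ and its terminal \emph{seller} $s_{r'}$ may sell to a buyer in some \emph{cycle} subgraph $S_c$ with $c<r'$; that buyer is a non-largest buyer of $S_c$, not of $S_{r'}$ (in Example~\ref{example:cycle_chain_illustrate} the chain $S_3$ attaches to $b_3\in S_2$, not to a buyer of $S_3$). Consequently your ``flag $r'\in\{r,r+1\}$'' and the term $f(r')+R^{\text{cyc}}_{\text{chain}}(r')$ double-count $S_{r'}$ and misplace the attachment. More importantly, the locality claim you rely on---that the chain's revenue depends only on the subgraphs it visits---is false: every opportunity path from a chain buyer runs \emph{toward} the terminal, through $s_{r'}$, and into the attachment buyer $b'$, so each chain price is $\min\bigl(v(S_{j-1}),v(b')\bigr)$ and the total chain revenue depends on which $b'$ you pick. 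The paper handles this by, for each candidate terminal $S_k$, explicitly scanning \emph{all} buyers in $G'$ (the prefix) as possible attachment points for $s_k$ when computing $R[k]$, which costs $O(n^2)$ per $k$ and gives the stated $O(n^3)$ bound. Once you replace your two hard-coded chain cases by this enumeration over attachment buyers, the rest of your argument goes through.
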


\begin{proof} 
Fix a seller subgraph $S_k$ as the terminal subgraph for the potential chain. $S_k$ uniquely identifies a contiguous chain. There are $\ell\leq m$ such possible chains. Consider the induced graph $G'$ obtained by removing all the subgraphs in the chain from the original graph. 

Let $\mathrm{DP}[i]$ denote the maximum obtainable revenue from connecting the largest $i$ subgraphs in $G'$ via cycles. Let $\mathrm{Rev}(S_{i,j})$ denote the maximum obtainable revenue from the contiguous cycle formed by subgraphs $S_i, S_{i+1},\ldots, S_{j}$. When $j=i$, $\mathrm{Rev}(S_{i,i})=0$. As contiguous cycles have length at most three, $\mathrm{DP}[i]$ can be calculated through dynamic programming. Filling in the base cases for $DP[1], DP[2], DP[3]$, we have the following recurrence for $i=4,...,k-1$:
\begin{align*}
    &DP[i] = \max (DP[i-3] + \mathrm{Rev}(S_{i-2,i}),  DP[i-2] + \mathrm{Rev}(S_{i-1,i}), DP[i-1] + \mathrm{Rev}(S_{i,i}))
\end{align*}
Let $\mathrm{R}[k]$ be the maximum obtainable revenue from the potential chain that is identified by $S_k$. Since this chain is contiguous, $\mathrm{R}[k]$ can be calculated in $O(n^2)$ time, by comparing the revenue when connecting the chain to all possible buyers in $G'$ through $s_k$. It follows that the optimal revenue when there is a contiguous chain identified by $S_k$ is given by $DP[k-1] + R[k]$. Repeating this process for all $\ell$ possible chains, the optimal revenue is given by $\max_{k=1}^{\ell} DP[k-1] + R[k]$. Finding the overall maximum takes $O(n^3)$ time.

Note the above procedure exhaustively searches all configurations of contiguous cycles of length at most 3 and at most one contiguous chain. The procedure thus finds the optimal configuration in terms of revenue. By Theorem~\ref{thm:opt_char}, there exists an optimal solution that satisfies these properties.
\end{proof}

While Theorem~\ref{thm: poly_AMOS} gives a polynomial-time algorithm for the case in which $n = m$, we show that the more general case reduces to this case by carefully discarding some buyers and sellers from the graph. We present the proof of Theorem~\ref{thm:swsh_reduction} in Appendix~\ref{app:swsh_reduction}.

\begin{restatable}{theorem}{SWSHReduction}
\label{thm:swsh_reduction}
    There exists a polynomial-time algorithm to maximize platform's revenue in general SWSH markets.
\end{restatable}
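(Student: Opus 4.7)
The plan is to extend the dynamic-programming algorithm of Theorem~\ref{thm: poly_AMOS} to general SWSH markets by reducing to an enriched balanced instance. By Lemma~\ref{lem:all_transact}, I may assume without loss of generality that $n \geq m$ and that all sellers transact in a revenue-optimal matching (the case $n < m$ is handled symmetrically by reversing the roles of buyers and sellers, and the residual case where non-trading pairs have zero value can be accommodated by first deleting those zero-valued edges). Then exactly $m$ of the $n$ buyers transact, and $n - m$ are idle.

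The key structural observation is that, in SWSH markets, idle buyers do not affect competitive equilibrium prices. By Theorem~\ref{thm:oppo_path}, prices depend only on opportunity paths, and every buyer appearing on such a path must transact. Since each buyer in SWSH has at most one world edge, and an idle buyer has no platform edge by Lemma~\ref{lem:at_most_one_edge}, an idle buyer cannot appear anywhere on any opportunity path. Hence removing the $n - m$ idle buyers from the graph preserves the optimal revenue, leaving a balanced subinstance to which Theorem~\ref{thm: poly_AMOS} could in principle be applied --- if only we knew which buyers to drop.

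The remaining task is to identify which $n - m$ buyers to remove without knowing the optimum in advance. I would partition buyers into \emph{primary} buyers (the largest buyer in each seller subgraph, at most $\ell \leq m$ of them) and \emph{secondary} buyers (non-largest buyers in seller subgraphs, together with all dangling buyers). By Theorem~\ref{thm:opt_char}, primary buyers participate in the cycles and chains among subgraphs; the remaining $m - \ell$ transactions then consist of dangling sellers paired with a subset of secondary buyers, plus possibly one secondary buyer as a chain terminal. I would then extend the DP of Theorem~\ref{thm: poly_AMOS} so that, jointly with enumerating the terminal subgraph of the potential chain, it also enumerates the assignment of dangling sellers to secondary buyers: for each fixed cycle/chain configuration, the price paid for pairing a dangling seller $s_d$ with a secondary buyer $b_i$ is determined by $b_i$'s opportunity path through her subgraph into the configuration, and can be computed in polynomial time. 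Treating these as edge weights gives a maximum-weight bipartite matching instance between dangling sellers and secondary buyers, which is polynomial-time solvable. The overall algorithm enumerates over polynomially many such configurations, runs the primary DP and solves a bipartite matching in each, and returns the best.

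The main obstacle will be to formally justify this decomposition by extending Theorem~\ref{thm:opt_char} to imbalanced markets. Concretely, one must show, via local exchange arguments analogous to those underlying Theorem~\ref{thm:opt_char}, that every revenue-optimal matching in a general SWSH market can be transformed into one in which primary buyers participate in contiguous cycles of length at most $3$ and at most one contiguous chain, while every other transaction consists of a dangling seller paired with a secondary buyer (or a secondary buyer serving as the chain terminal). The subtlety is that pairing a non-largest secondary buyer $b_i \in S_j$ with a dangling seller creates an opportunity path from $b_i$ into whatever structure contains $S_j$, so the revenue contribution of the secondary-buyer matching is coupled with the primary cycle/chain configuration; one must verify that this coupling is captured entirely by the outer enumeration so that the inner bipartite matching remains well-defined. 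Once the extended structural theorem is in hand, the polynomial runtime follows directly from the enumeration and matching steps above.
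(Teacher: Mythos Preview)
Your observation that idle buyers can be deleted without affecting prices is correct, and is exactly the right starting point. But you then write ``if only we knew which buyers to drop'' --- and in fact you \emph{do} know. In a homogeneous-goods market, any competitive-equilibrium allocation maximizes welfare, and welfare is maximized precisely when the $m$ highest-valued buyers transact. Hence the $n-m$ idle buyers are simply the $n-m$ lowest-valued ones (up to ties). The paper proves exactly this via a short exchange argument (swap the lowest-valued transacting buyer with the highest-valued idle buyer; all opportunity-path minima weakly increase), and then applies Theorem~\ref{thm: poly_AMOS} to the balanced subinstance on the top $m$ buyers. No extension of the DP or of Theorem~\ref{thm:opt_char} is needed.

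Your proposed machinery --- partitioning into primary/secondary buyers and running an inner bipartite matching inside an enlarged DP --- is therefore unnecessary, and also somewhat fragile: your ``primary'' buyer in a seller subgraph is defined as the largest-value buyer \emph{in that subgraph}, but such a buyer need not be among the global top $m$ and hence need not transact at all; in the paper's reduction that seller simply becomes dangling in the restricted instance. The ``main obstacle'' you flag (extending the structural theorem to imbalanced markets) is avoided entirely once you drop the right buyers up front.

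Finally, the case $n<m$ is not handled ``symmetrically by reversing the roles of buyers and sellers'' --- buyers and sellers are not interchangeable here. The paper instead observes that in SWSH markets each buyer has at most one world edge, so at most $n$ sellers have any world edge and there are at least $m-n$ dangling sellers; one shows (via opportunity paths ending at a non-transacting seller) that $m-n$ of these dangling sellers can be discarded without loss, again reducing to the balanced case.
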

\begin{proof}[Proof Sketch]
    When $n > m$, we show that we can discard all buyers besides those with the $m$ highest valuations without affecting optimality. Similarly, when $m > n$, we show that we can get rid of $m - n$ dangling sellers without affecting optimal revenue. Thus, we are left with the case in which $m = n$, for which Theorem \ref{thm: poly_AMOS} gives us the desired result.
\end{proof}

\subsection{Sparse Homogeneous-Goods-and-Buyers Markets}\label{sec:special-case-identical}
While Theorem~\ref{thm:nphard} only states hardness for the case where buyers have degree at most two, one can check that the reduction yields a world graph that satisfies an additional property:
    {\em For any pair of sellers, at most one buyer knows them both.}
We call the world graphs that satisfy this additional property ``sparse graphs''. By considering these sparse graphs with buyers of degree at most 2, and restricting the class of valuations from $v_{ij} \in \{0, v_i\}$ to $v_{ij} = c$ for some constant $c>0$, we show that the platform's problem becomes tractable. We call these markets SHGB (Sparse Homogeneous-Goods-and-Buyers) markets. We begin by showing that finding the optimal set of platform edges in SHGB markets reduces to a graph-theoretic problem involving sets of buyers with non-positive {\em surplus}. Proofs for this section are intricate and are all presented in Appendix~\ref{app:hv_identity}. 
\begin{definition}[Surplus]
    In a bipartite market $G=(B,S,E_w)$, consider a set of buyers $B_v$. Let $N(B_v)$ be the set of all sellers who are adjacent to at least one buyer in $B_v$ via a world edge. The surplus of $B_v$ is defined as $|N(B_v)| - |B_v|$\footnote{For a set $B_v$, we use $|B_v|$ to denote its cardinality.}.
\end{definition}
\begin{restatable}{lemma}{IdentityChar}
\label{lemma:identity_char}
    In SHGB markets, there exists a set of platform edges generating revenue $x$ if and only if there exists a set of buyers $B_v$ of non-positive surplus such that
    $$\min\{|B_v|, |S|\} - |N(B_v)| + k_v = x,$$
    where $k_v$ is the cardinality of the maximum matching between $B_v$ and $N(B_v)$ using platform edges.
\end{restatable}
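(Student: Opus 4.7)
The plan is to prove both directions of the biconditional by leveraging the following characterization: in an SHGB market with uniform valuation $c$, a platform edge $(b,s) \in E_p \cap M^*$ transacts at price $c$ if and only if $s$ is \emph{essential} in $G_p$, meaning removing $s$ strictly decreases the maximum matching cardinality of $G_p$. This follows from the maximum-competitive-price formula $p_j = W(G_p) - W(G_p \setminus \{s_j\})$, which under uniform valuations evaluates to either $c$ or $0$. Consequently, revenue equals the number of essential sellers of $G_p$ matched via platform edges in the platform's chosen max matching $M^*$.

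For the $(\Leftarrow)$ direction, given $B_v$ with non-positive surplus and formula value $x$, I would construct $E_p$ in two stages. First, select a maximum non-world matching $M_k$ of size $k_v$ between $B_v$ and $N(B_v)$. Second, for $\min\{|B_v|, |S|\} - |N(B_v)|$ of the buyers in $B_v$ not covered by $M_k$, add platform edges to distinct sellers in $S \setminus N(B_v)$; call this set $M_{\mathrm{out}}$. Setting $E_p = M_k \cup M_{\mathrm{out}}$ gives $|E_p| = x$. The crucial observation is that the bipartite graph on $B_v \times N(B_v)$ combining world and potential non-world edges is complete (every buyer-seller pair is joined by either a world or a non-world edge), so $G_p$ admits a max matching saturating $B_v$'s reachable set (of size $\min\{|B_v|, |S|\}$) that contains all $x$ platform edges of $E_p$ together with $|N(B_v)| - k_v$ suitably chosen world edges within $N(B_v)$. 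Each seller matched in $B_v$'s reach is essential: removing any such seller reduces $B_v$'s reach by one, which via Hall's theorem drops the max matching of $B_v$ by one, a drop that cannot be compensated by matchings outside $B_v$. Hence all $x$ platform edges transact at price $c$, yielding revenue $x$.

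For the $(\Rightarrow)$ direction, given $E_p$ with revenue $x$ realized by a max matching $M^*$, construct $B_v$ starting from $B_P$---the set of $x$ buyers matched via platform edges at price $c$---and augment with carefully chosen world-matched buyers. Each world-matched buyer $b$ in $M^*$ whose world-neighborhood satisfies $N_{G_w}(b) \subseteq N(B_v)$ can be added, increasing $|B_v|$ by one while leaving $|N(B_v)|$ and $k_v$ unchanged, thereby raising the formula value by one. The hard part will be this $(\Rightarrow)$ direction: nailing down the exact augmenting set so that the formula evaluates to $x$ \emph{exactly}, rather than merely bounding it. Naive choices such as $B_P$ alone, or the set of all buyers matched to essential sellers, can either undercount or overcount in concrete examples. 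The sparsity assumption (any pair of sellers is known by at most one buyer) is crucial in controlling $k_v$ as $B_v$ grows, and a careful analysis---matching each essential seller to exactly one term of the formula ($\min\{|B_v|, |S|\} - |N(B_v)|$ for essential platform-matched sellers outside $N(B_v)$, and $k_v$ for those inside)---is where the bulk of the technical work lies.
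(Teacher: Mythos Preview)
Your price characterization via essential sellers is correct and equivalent to the paper's opportunity-path argument, and your $(\Leftarrow)$ construction is essentially the same as the paper's (their Corollary that picks $k_v$ buyers from the complement matching and $\min\{|B_v|,|S|\}-|N(B_v)|$ additional buyers to send outside $N(B_v)$). One caution: the claim that the remaining $|N(B_v)|-k_v$ sellers inside $N(B_v)$ can be saturated by \emph{world} edges to the remaining buyers of $B_v$ is not automatic from the completeness of the world-plus-complement bipartite graph, since $G_p$ only contains the specific complement edges you added; you need a Hall argument on the seller side (or to choose which buyers go to $M_{\mathrm{out}}$ with some care). The paper handles this implicitly through its deficiency bookkeeping rather than by an explicit world-edge completion.

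The real gap is the one you yourself flag in the $(\Rightarrow)$ direction: you do not have a construction of $B_v$. The paper's key move here is not an incremental augmentation starting from $B_P$ as you sketch, but a single closure step: take $B_v$ to be the set of \emph{all buyers reachable by opportunity (alternating) paths from the paying buyers $B_k$}, together with $B_k$ itself. Because every such path ends at a transacting seller (else the price would be $0$), every seller in $N(B_v)$ is matched to a buyer in $B_v$, which immediately gives $|B_v|\ge |N(B_v)|$ and, after removing the $k'$ sellers in $N(B_v)\cap S_k$, deficiency at least $k-k'\ge k-k_v$. This alternating-path closure (the same idea behind K\H{o}nig's theorem and the Dulmage--Mendelsohn decomposition) is exactly the ``missing idea'' that pins the formula value to $x$ rather than merely bounding it; your proposed strategy of adding world-matched buyers one at a time whenever $N_{G_w}(b)\subseteq N(B_v)$ is a partial step toward this closure but does not guarantee you reach it, nor does it give the deficiency count directly.
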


For a set of buyers $B_v$ with non-positive surplus, the proof minimizes the number of buyers that connect to an unsold item through opportunity paths. Revenue $k_v$ is generated from buyers that trade with sellers in $N(B_v)$, and revenue $\min\{|B_v|,|S|\} - |N(B_v)|$ is attained from the remaining buyers who do not transact with $N(B_v)$.
We then prove that except for a few cases that can be checked in polynomial time, $k_v=|N(B_v)|$. This further simplifies the platform's problem. 

\begin{restatable}{lemma}{MaxCard}
\label{lemma:max_cardinality}
    In SHGB markets, the platform can find the set of revenue-optimal platform edges by finding the maximum set of buyers with non-positive surplus.
\end{restatable}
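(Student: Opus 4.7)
The plan is to simplify the revenue expression from Lemma~\ref{lemma:identity_char}, namely $\min\{|B_v|,|S|\} - |N(B_v)| + k_v$, by arguing that for essentially all buyer sets $B_v$ of non-positive surplus one has $k_v = |N(B_v)|$. Whenever this holds, the revenue collapses to $\min\{|B_v|,|S|\}$, which is monotone non-decreasing in $|B_v|$, so the platform's optimization reduces to choosing $B_v$ to be a maximum-cardinality buyer set of non-positive surplus. A small list of degenerate configurations, for which the simplification can fail, would be enumerated and handled directly in polynomial time, matching the hint given in the excerpt just before the statement.

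To show $k_v = |N(B_v)|$ for a generic set $B_v$, I would invoke Hall's theorem on the bipartite graph whose edges are the \emph{non-world} pairs between $N(B_v)$ and $B_v$, since these are precisely the edges the platform is free to install. For each $T \subseteq N(B_v)$, one checks that $T$ has at least $|T|$ platform-available neighbors in $B_v$. The non-positive surplus condition gives $|N(B_v)| \leq |B_v|$, and the two SHGB structural restrictions --- buyer degree at most two and at most one buyer knowing any pair of sellers --- tightly bound how many buyers in $B_v$ can be blocked from a given seller by pre-existing world edges. The sparsity condition, in particular, forces the blocking pattern between buyers and any single seller to be extremely local, so almost every buyer in $B_v$ remains available to match a seller in $T$ via a new platform edge.

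The main obstacle will be isolating the ``few cases'' in which Hall's condition genuinely fails. Under the SHGB restrictions these correspond to small, highly structured subgraphs --- intuitively, configurations where every buyer in $B_v$ adjacent to a seller $s \in N(B_v)$ already reaches $s$ through a world edge, saturating its degree budget of two, and where the sparsity rule then forbids enough lateral room to re-route. I would classify these configurations by local structure and for each one either (i) show that it is revenue-dominated by a modified $B_v$ on which the simplification $k_v=|N(B_v)|$ does hold, or (ii) observe that it is constant-sized and can be checked by direct enumeration over $\mathrm{poly}(n,m)$ candidate $B_v$.

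Once these obstructions are dispatched, the platform's problem reduces to the purely combinatorial task of finding a maximum-cardinality buyer set of non-positive surplus in a bipartite graph. This can be solved in polynomial time by standard matching/deficiency techniques (for example, by repeatedly enlarging a candidate $B_v$ using Hall-violating witnesses detected via augmenting-path or Dulmage--Mendelsohn arguments), which is precisely the content claimed by Lemma~\ref{lemma:max_cardinality}.
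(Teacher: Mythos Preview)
Your high-level plan is the same as the paper's: show $k_v=|N(B_v)|$ holds for ``generic'' non-positive-surplus sets, so that revenue collapses to $\min\{|B_v|,|S|\}$, and handle the remaining sets separately. Your proposed use of Hall's theorem on the complement graph is equivalent to the paper's max-flow/min-cut argument (Lemma~\ref{lemma:full_matching}), which pins down the non-generic sets precisely as those with $|B_v|<3$ or with at most two degree-two buyers.

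The gap is in how you dispatch the exceptional sets. You describe them as ``constant-sized'' and propose to enumerate them directly. They are not: a set $B_v$ with at most two degree-two buyers may still contain arbitrarily many degree-zero and degree-one buyers, so the number of candidate $B_v$ is a priori exponential. The paper closes this with an additional lemma you do not mention (Lemma~\ref{lem:all_degree}): any optimal $B_v$ may be assumed to contain \emph{all} degree-zero and degree-one buyers. Only after this normalization does the enumeration become polynomial --- one fixes the full low-degree part and then brute-forces over the $O\!\left(\binom{n}{2}\right)$ choices of which (at most two) degree-two buyers to add. Without this step, neither your option~(i) (domination by a set on which $k_v=|N(B_v)|$ holds) nor option~(ii) (constant size) goes through, and the reduction to ``find the maximum-cardinality non-positive-surplus set'' is not yet justified.
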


With this reduction in hand, we give a polynomial-time algorithm to find the maximum set of buyers with non-positive surplus. To do this, we rely on the graph structure in SHGB markets, identifying the buyers and sellers in these markets with edges and vertices of a general graph. Our problem then translates to finding the set of edges in this graph whose induced subgraph contains no more than a certain number of vertices. This is achievable using standard graph algorithms.
\begin{restatable}{theorem}{PolyIdentity}
\label{thm:poly_identity}
    There exists a polynomial-time algorithm to solve the platform's problem
 in SHGB markets.
\end{restatable}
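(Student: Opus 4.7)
By Lemma~\ref{lemma:max_cardinality}, finding the revenue-optimal set of platform edges reduces to computing a maximum-cardinality buyer set $B_v \subseteq B$ with non-positive surplus $|N(B_v)| \leq |B_v|$; once such a $B_v$ is known, the corresponding platform edges are recovered in polynomial time via Lemma~\ref{lemma:identity_char}, together with the polynomial-time check for the ``few cases'' noted in the paragraph following that lemma in which $k_v < |N(B_v)|$. My strategy is to encode this set-selection problem as an edge-selection problem in a simple auxiliary graph and then solve it by a connected-component decomposition.

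First I would dispose of low-degree buyers: every degree-$0$ buyer strictly decreases surplus when added, and every degree-$1$ buyer weakly decreases surplus, so both types should always be included. Writing $B^{(d)}$ for the buyers of degree $d$, $V^{(1)} \subseteq S$ for the sellers incident to at least one buyer in $B^{(1)}$, and $\alpha := |B^{(0)}| + |B^{(1)}| - |V^{(1)}|$, the remaining task is to pick a subset of $B^{(2)}$. Build $H$ with vertex set $S$ and one edge between $s$ and $s'$ for each degree-$2$ buyer knowing exactly $\{s, s'\}$; the sparsity assumption makes $H$ simple. A choice $B_v^{(2)} \subseteq B^{(2)}$ corresponds to an edge set $F \subseteq E(H)$, and the surplus condition rewrites as $|V(F) \setminus V^{(1)}| - |F| \leq \alpha$. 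The goal is to maximize $|F|$ under this constraint.

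Next I would show that both the objective and the constraint decompose across the connected components of $H$. For a component $C$ with $n_C$ vertices (of which $f_C$ lie in $V^{(1)}$) and $e_C$ edges, full inclusion of $C$ contributes $(n_C - f_C) - e_C$ to the left-hand side: this is at most $-f_C \leq 0$ if $C$ is cyclic (using $e_C \geq n_C$), and equals $1 - f_C$ if $C$ is a tree. The central structural claim to establish is that partial inclusion of a cyclic component is never strictly better than full inclusion: for every $F_C \subseteq E(C)$ the cyclomatic inequality gives $|F_C| - |V(F_C)| \leq c(C) - 1$, so adding the missing edges of $E(C) \setminus F_C$ weakly decreases the left-hand side while strictly increasing $|F|$. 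An analogous argument handles tree components, since every nonempty forest subgraph contributes $+1$ per connected piece, so either full inclusion or complete exclusion dominates any partial choice.

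The algorithm is then immediate: include every cyclic component of $H$ and every tree component with $f_C \geq 1$ (each contributes non-positively to surplus), then greedily include the remaining tree components ($f_C = 0$) in decreasing order of $|E(T)|$ until the surplus budget $\alpha$ is exhausted. Since each such tree costs exactly $+1$ in surplus for $|V(T)| - 1$ additional edges, greedy-by-size is optimal; connected-component decomposition, cycle detection, and the greedy sort are all standard polynomial-time graph operations. The main obstacle will be cleanly establishing the structural claim about cyclic components, which reduces to the cyclomatic inequality but requires careful free-vertex bookkeeping around $V^{(1)}$; a secondary subtlety is verifying the polynomial-time enumeration of the ``few cases'' with $k_v < |N(B_v)|$ alluded to after Lemma~\ref{lemma:identity_char}.
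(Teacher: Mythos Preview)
Your approach is correct and essentially the same as the paper's: reduce via Lemma~\ref{lemma:max_cardinality}, identify degree-$2$ buyers with edges of an auxiliary simple graph on sellers, argue that the optimal edge set is a union of full connected components, include all cyclic components, and greedily add the largest trees---the only cosmetic difference being that the paper first \emph{iteratively} peels degree-$0$/$1$ buyers (removing their sellers and repeating until only degree-$2$ buyers remain) before invoking the component argument of Lemma~\ref{lemma:induced_edges}, whereas you handle this in one shot by tracking covered sellers through $V^{(1)}$ and the counts $f_C$ (your ``include all trees with $f_C\ge 1$'' accomplishes exactly what the peeling does). One small imprecision worth fixing: the budget available for adding $f_C=0$ trees is not $\alpha$ but $\alpha$ plus the slack from the already-included cyclic components and $f_C\ge 1$ trees, each of which may contribute strictly negatively to the surplus.
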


\section{Guarantees for General Markets}\label{sec:general_markets}
Having established the hardness of the platform's problem in the general case and shown some special cases that can be solved in polynomial time, we  turn to analyzing the relationship between revenue optimization and social welfare.

In Section~\ref{sec:welfare_into_revenue}, we provide a lower bound for revenue based on the potential welfare increase the platform can generate. This shows when the platform can effectively disrupt the market and improve overall welfare, the platform's revenue is guaranteed to be large. Specifically, we show in Section~\ref{sec:log_rev_guarantee} that there is a polynomial-time algorithm to extract a logarithmic approximation of the optimal welfare as revenue in such markets. In Section~\ref{sec:PRM}, we study the other direction of this relationship. We prove that in markets where the platform optimizes for revenue, social welfare is at least an $\Omega(1/\log(\min\{n,m\}))$-fraction of the optimal welfare. These results explain the sense in which revenue maximization can go hand-in-hand with welfare considerations in a platform economy.

\subsection{Converting Potential Welfare Increase to Revenue}\label{sec:welfare_into_revenue}

There are two primary ways that the platform can make revenue. The first way is by ``monopolizing'' the existing welfare of the world graph. 

\begin{example}\label{exam:mono_welfare}
    To illustrate this ``monopolization,'' consider a graph with two buyers and a single seller. Buyer $b_1$ has value $1$ for the item and is connected to the seller via a world edge while buyer $b_2$ has value $1 + \epsilon$ and has no incident world edges. In this case, the platform can add a single platform edge from $b_2$ to the seller. Now, buyer $b_2$ transacts instead of buyer $b_1$, and the platform obtains revenue $1 + \epsilon$ while providing only $\epsilon$ in added welfare.
\end{example}

The second way is by facilitating new high-value transactions that were not possible without the assistance of the platform. In these cases, the platform has the potential to add a substantial amount of welfare to the world graph and claim some of the value for itself. The question is how much of this increase in welfare the platform is able to retain as revenue.
 We give a polynomial-time process to convert the potential increase in welfare into revenue below.
\begin{theorem}
\label{thm:gen_welfare_conversion}
    In a general market, suppose there exists a set of $k$ platform edges that increase social welfare by $\Delta W=W(G_p)-W(G_w)$. Then one can find a subset of these edges that yields revenue at least $\Delta W/H_k$ in polynomial time, where $H_k=\sum_{i=1}^k 1/i$ is the $k$th harmonic number.
\end{theorem}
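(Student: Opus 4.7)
The plan is to enumerate $k$ carefully chosen candidate subsets of $E_p$ and invoke the classical harmonic-threshold argument to guarantee that at least one has revenue $\ge\Delta W/H_k$.

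First, by \cref{lem:at_most_one_edge} I reduce to the case where every platform edge in $E_p$ transacts in the maximum matching $M^*$ of $G_p=G_w\cup E_p$; non-transacting edges can be dropped without changing revenue or $\Delta W$. Since $M^*\cap E_w$ is a feasible matching in $G_w$, summing values on the transacting platform edges yields
\[
\sum_{e\in E_p}v_{b_e,s_e}\;=\;W(G_p)-\sum_{e\in M^*\cap E_w}v_e\;\ge\;W(G_p)-W(G_w)\;=\;\Delta W.
\]

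Next, attach to each $e\in E_p$ a per-edge proxy $\tau_e$, sort so $\tau_1\ge\tau_2\ge\cdots\ge\tau_k$, and define $E'_i:=\{e_1,\dots,e_i\}$. The algorithm enumerates $i\in\{1,\dots,k\}$, computes $R(E'_i)$ by a maximum-weight matching on $G_w\cup E'_i$, and returns the best $E'_i$. If one can arrange that $\sum_i\tau_i\ge\Delta W$ and $R(E'_i)\ge i\,\tau_i$ for every $i$, then since $\tau_i\le\max_jR(E'_j)/i$, summing yields $\Delta W\le H_k\max_jR(E'_j)$ and hence $\max_jR(E'_j)\ge\Delta W/H_k$, completing the proof.

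The main obstacle is the choice of $\tau$. The naive $\tau_e=v_{b_e,s_e}$ satisfies the first property (by the inequality above) but can violate the second: world edges that provide outside options for the same buyer can depress the competitive price of the top edge strictly below its value. I plan to address this by refining $\tau_e$ to the single-edge Gul--Stacchetti price $W(G_w\cup\{e\})-W(G_w\setminus s_e)$, which subtracts out the world's residual welfare contribution from $s_e$'s removal, and by verifying the per-subset bound via a monotonicity argument: augmenting $G_w\cup\{e_i\}$ with the top $i-1$ edges raises $W$ at least as much as it raises $W(\,\cdot\,\setminus s_{e_i})$, because \cref{lem:at_most_one_edge} ensures the extra edges are incident to sellers distinct from $s_{e_i}$. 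The accompanying claim $\sum_i\tau_i\ge\Delta W$ for this refined proxy I plan to establish by decomposing $M^*\oplus M_w^*$ (where $M_w^*$ denotes a maximum matching of $G_w$) into alternating paths and cycles and verifying within each component that the $\tau$'s of its platform edges cover that component's welfare gain. Should the direct top-$i$ construction still leave corner cases where $R(E'_i)<i\,\tau_i$ (as can happen when $e_i$ fails to transact in $G_w\cup\{e_i\}$ alone), the fallback is to replace the top-$i$ sort by a greedy construction --- in the spirit of the $H_k$-analyses for set cover and facility location --- that adds edges one by one in an order which, by construction, preserves the per-subset price bound required for the harmonic inequality.
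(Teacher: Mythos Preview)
Your proposal is not a proof: the two crucial claims --- that your refined proxy satisfies $\sum_i\tau_i\ge\Delta W$, and that $R(E'_i)\ge i\,\tau_i$ --- are only ``planned,'' and the monotonicity you sketch for the second (that adding the other $i-1$ platform edges, being incident to sellers other than $s_{e_i}$, cannot decrease $s_{e_i}$'s price) is actually false. A two-buyer, three-seller example suffices: with world edges $(b_1,s_2)$ of value $3$ and $(b_2,s_2)$ of value $5$, and platform edges $e_1=(b_1,s_1)$ of value $10$ and $e_2=(b_2,s_3)$ of value $6$, the price of $s_1$ drops from $10$ (with $e_1$ alone) to $7$ (with both edges), even though both platform edges transact and share no endpoint. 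So the argument you outline for $R(E'_i)\ge i\,\tau_i$ does not go through, and your fallback is described only at the level of an analogy.

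The idea you are missing is much simpler and needs no proxy at all. Work with the \emph{current} prices in $G_p$ and use the one-line inequality
\[
W(G_p)-W(G_p\setminus\{e_{ij}\})\;\le\;W(G_p)-W(G_p\setminus\{s_j\})\;=\;p_j,
\]
i.e., an edge's price in the current graph upper-bounds its marginal contribution to welfare. Hence if every edge has price at least $\Delta W/(kH_k)$, the revenue is already $\ge\Delta W/H_k$; otherwise delete an edge with price below $\Delta W/(kH_k)$, which costs less than $\Delta W/(kH_k)$ in welfare, leaving $k-1$ edges that still add at least $\Delta W\cdot H_{k-1}/H_k$, and induct. Algorithmically this is: repeatedly delete the edge of smallest current price, record the revenue after each deletion, and return the best intermediate set. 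No precomputed sort, no submodularity, and no alternating-path decomposition are needed --- the paper's entire argument fits in a paragraph once you see that ``low revenue $\Rightarrow$ low welfare contribution'' lets you discard edges cheaply.
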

\begin{proof}
    We prove this inductively. When $k = 1$, there is a single platform edge $(b_1, s_1)$ that can be added to increase welfare by $W$. As removing $s_1$ is weakly worse than removing edge $(b_1, s_1)$, the revenue/price of this edge is given by $W(G_p) - W(G_p \setminus \{s_1\})\geq W(G_p) - W(G_p \setminus \{(b_1,s_1)\})\geq \Delta W$.

    Now, suppose that it holds for $k = n - 1$. Consider a set of $k = n$ platform edges that adds welfare $W$. If every edge yields revenue at least $\frac{\Delta W}{n \cdot H_n}$, then it follows that we are guaranteed revenue at least $n \cdot \frac{\Delta W}{n \cdot H_n} = \frac{\Delta W}{H_n}$ and we are done. If not, then we can remove an edge $(b_i,s_j)$ that yields revenue strictly less than $\frac{\Delta W}{n \cdot H_n}$. By removing this edge, we can only decrease the added welfare by at most $W(G_p)-W(G_p\setminus\{(b_i,s_j)\})\leq W(G_p)-W(G_p\setminus\{s_j\}) \leq   \frac{\Delta W}{n \cdot H_n}$. Thus, we are left with a set of $n - 1$ platform edges that adds welfare at least
    $$
        \Delta W - \frac{\Delta W}{n \cdot H_n} = \frac{n H_n - 1}{n \cdot H_n} \cdot \Delta W = \frac{H_n - \frac{1}{n}}{H_n} \cdot \Delta W = \frac{H_{n-1}}{H_n} \cdot \Delta W.
    $$
    Applying our inductive hypothesis, we can guarantee revenue at least
    $$
        \frac{1}{H_{n-1}} \cdot \frac{H_{n-1}}{H_n} \cdot \Delta W = \frac{\Delta W}{H_n}
    $$
    from some subset of these $n-1$ platform edges, concluding the proof.
\end{proof}

The proof argument in Theorem~\ref{thm:gen_welfare_conversion} yields the following simple greedy algorithm.
 Given a set of $k$ platform edges that increase social welfare by $\Delta W$, iteratively remove the edge that offers the least revenue to the platform until left with a single edge. 
Keep track of the revenue at each step, and select the set of edges that yields the highest revenue overall. The proof argument
 shows that this set will guarantee revenue 
 at least $\Delta W/H_k$, and this process  takes
 polynomial time.

In order to achieve $\Delta W/H_k$ as revenue, the platform adds a subset of all edges that contribute to the $\Delta W$ welfare increase, not necessarily all of the edges. Proposition~\ref{prop:tight_welfare_conversion} further shows the $H_k$ conversion rate from the potential welfare increase to revenue is tight.
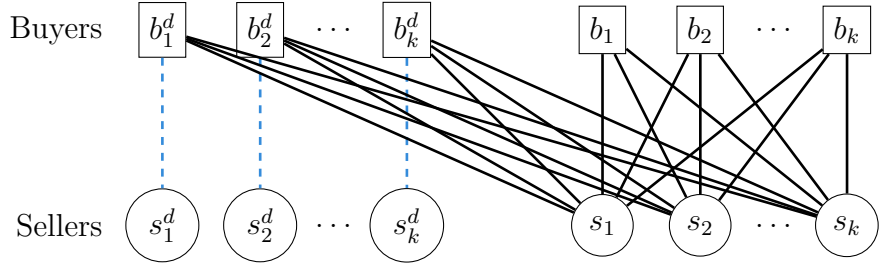
\begin{figure}[t]
    \centering
    \begin{tikzpicture}[scale=1.3]
        \foreach \i/\label in {1/$b_1^d$, 2/$b_2^d$, 3.5/$b_k^d$}
            \node[draw, shape=rectangle, minimum size=0.6cm] (\label) at (\i, 2) {\label};

        \foreach \i/\label in {5.5/$b_1$, 6.5/$b_2$, 8/$b_k$}
            \node[draw, shape=rectangle, minimum size=0.6cm] (\label) at (\i, 2) {\label};
            
        \foreach \i/\label in {1/$s_1^d$, 2/$s_2^d$, 3.5/$s_k^d$}
            \node[draw, shape=circle, minimum size=0.6cm] (\label) at (\i, 0) {\label};

        \foreach \i/\label in {5.5/$s_1$, 6.5/$s_2$, 8/$s_k$}
            \node[draw, shape=circle, minimum size=0.6cm] (\label) at (\i, 0) {\label};

        \foreach \x/\y in {$b_1^d$/$s_1^d$, $b_2^d$/$s_2^d$, $b_k^d$/$s_k^d$}
            \draw[line width=1pt, customblue, dashed] (\x) -- (\y);

       \foreach \x/\y in {$b_1^d$/$s_1$, $b_1^d$/$s_2$, $b_1^d$/$s_k$, $b_2^d$/$s_1$, $b_2^d$/$s_2$, $b_2^d$/$s_k$, $b_k^d$/$s_1$, $b_k^d$/$s_2$, $b_k^d$/$s_k$, $b_1$/$s_1$, $b_1$/$s_2$, $b_1$/$s_k$, $b_2$/$s_1$, $b_2$/$s_2$, $b_2$/$s_k$, $b_k$/$s_1$, $b_k$/$s_2$, $b_k$/$s_k$}
            \draw[line width=1pt, black, solid] (\x) -- (\y);
     
        \node at (2.75, 2) {$\ldots$};
        \node at (7.25, 2) {$\ldots$};
        \node at (2.75, 0) {$\ldots$};
        \node at (7.25, 0) {$\ldots$};

        \node[left] at (0.5, 2)  {Buyers};
        \node[left] at (0.5, 0)  {Sellers};
    
    \end{tikzpicture}
        \caption{The market used in the proof of Proposition~\ref{prop:tight_welfare_conversion}. Buyers $B_k=\{b_1,b_2,...,b_k\}$ and dummy buyers $B^d_k=\{b^d_1,b^d_2,...,b^d_k\}$ are fully connected to sellers $S_k=\{s_1, s_2,...,s_k\}$ through world edges, denoted by solid black edges. Dummy sellers $S^d_k=\{s^d_1, s^d_2,...,s^d_k\}$ are not connected to any buyers through world edges.
        Buyer $b_i\in B_k$ values all sellers in $S_k$ at $1/i$. Buyer $b^d_i\in B^d_k$ values all sellers at $1$. All other valuations are zero. The maximum welfare the platform can add through platform edges is $\Delta W = H_k$, indicated in dashed blue edges. The optimal revenue is $1$, obtained by adding any non-empty subset of the blue edges.} \label{fig:conv_tight}
\end{figure}

\begin{restatable}{proposition}{tightWelfareConversion}
\label{prop:tight_welfare_conversion}
     For all $k$, there exists a world graph $G_w$ and a set of $k$ edges adding welfare $\Delta W$ such that the optimal platform revenue is precisely $\Delta W/H_k$.
\end{restatable}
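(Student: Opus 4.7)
The plan is to verify that the construction depicted in Figure~\ref{fig:conv_tight} realizes the stated ratio, with the candidate platform-edge set $E_p^\star = \{(b^d_i, s^d_i) : 1 \leq i \leq k\}$.

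First I would compute the two welfares. In $G_w$, every seller in $S_k$ is preferred by a dummy buyer (value $1$) over any $b_i$ (value $1/i \leq 1$), so the welfare-optimal matching pairs $b^d_1,\ldots,b^d_k$ into $S_k$, giving $W(G_w)=k$. Once $E_p^\star$ is added, the dummy buyers can instead absorb $S^d_k$ at value $1$ each, freeing the $k$ sellers of $S_k$ to be matched to $b_1,\ldots,b_k$ at values $1,1/2,\ldots,1/k$. Hence $W(G_p) = k + H_k$ and $\Delta W = H_k$.

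Next I would argue that the platform's optimal revenue over every possible matching equals exactly $1$. Any edge of the form $(b_i, s^d_j)$ carries value $0$ and, by Lemma~\ref{lem:at_most_one_edge}, is WLOG dropped; and edges from $B_k \cup B^d_k$ to $S_k$ are already world edges. So the only revenue-bearing platform edges are of the form $(b^d_i, s^d_j)$, and by Lemma~\ref{lem:at_most_one_edge} the platform WLOG adds a matching. By the full symmetry of both dummy sides, WLOG this matching is $(b^d_1,s^d_1),\ldots,(b^d_j,s^d_j)$ for some $j \in \{0,1,\ldots,k\}$. For each $j \geq 1$, I would price every active platform edge via $p_{s^d_i} = W(G_p) - W(G_p \setminus \{s^d_i\})$. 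A direct computation gives $W(G_p) = k + H_j$ (use the $j$ platform edges, route the remaining $k-j$ dummy buyers into $S_k$ at value $1$ each, then fill the remaining $j$ slots of $S_k$ with the top-valued $b_1,\ldots,b_j$ contributing $H_j$), while $W(G_p \setminus \{s^d_i\}) = k + H_{j-1}$ for each $i \leq j$ (deleting $s^d_i$ pushes $b^d_i$ back into $S_k$ and displaces $b_j$). Each platform edge is therefore priced at $H_j - H_{j-1} = 1/j$, for total revenue $j \cdot (1/j) = 1$.

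Combining the two steps, the maximum revenue across all platform matchings equals $1 = H_k/H_k = \Delta W/H_k$, as required. The main obstacle is the welfare-difference computation $W(G_p \setminus \{s^d_i\}) = k + H_{j-1}$, which requires carefully tracking which dummy buyer is displaced into $S_k$ and which $b_i$ is bumped out; the remainder of the argument is pure symmetry plus the First Welfare Theorem.
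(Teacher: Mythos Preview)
Your proof is correct and follows essentially the same approach as the paper: you use the identical construction from Figure~\ref{fig:conv_tight}, rule out platform edges to $B_k$ via their zero value, and show that matching any $j$ dummy buyers to dummy sellers yields revenue exactly $1$. The only difference is that you compute each price explicitly via $W(G_p)-W(G_p\setminus\{s^d_i\}) = (k+H_j)-(k+H_{j-1})=1/j$, whereas the paper simply asserts the price $1/\ell$ without the intermediate welfare calculation.
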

\begin{proof}
    For any $k$, Figure~\ref{fig:conv_tight} presents such a market.
 World edges are solid in black and platform edges are dashed in blue. There are buyers $B_k=\{b_1,b_2,...,b_k\}$, dummy buyers $B^d_k=\{b^d_1,b^d_2,...,b^d_k\}$, sellers $S_k=\{s_1, s_2,...,s_k\}$ and dummy sellers $S^d_k=\{s^d_1, s^d_2,...,s^d_k\}$. All buyers $B_k$ and $B^d_k$ are fully connected to sellers $S_k$ through world edges. Dummy sellers $S^d_k$ have no connections in the world graph. Buyer $b_i\in B_k$ values all sellers in $S_k$ at $1/i$. Buyer $b^d_i\in B^d_k$ values all sellers at $1$. All other valuations are zero. By adding platform edges to form a perfect matching between the dummy buyers and the dummy sellers, the platform can add welfare $\Delta W = \sum_{i=1}^k 1/i = H_k$ to the world graph. However,   the platform's optimal revenue is bounded above by $1$. The platform has no incentive to add platform edges to $B_k$ as they have value $0$ for all dummy sellers. Thus, the only thing the platform can do is match $\ell\in \{1,\ldots,k\}$ dummy buyers to the dummy sellers. If the platform matches $\ell$ of the dummy buyers to the dummy sellers, each transaction occurs at price $1/\ell$, and total revenue is $\ell \cdot \frac{1}{\ell} = 1$. Thus, we 
 have that the optimal revenue is given by $\Delta W/H_k = H_k/H_k = 1$.
\end{proof}

\subsection{Logarithmic Revenue Guarantees in Inefficient Markets}\label{sec:log_rev_guarantee}

Theorem~\ref{thm:gen_welfare_conversion} shows that the platform can convert a potential welfare increase into revenue in polynomial time. Consider now a minimal set of platform edges such that optimal welfare $W^\star$ is achieved in the resulting platform graph. If we apply Theorem~\ref{thm:gen_welfare_conversion} to this set of platform edges, then we can lower bound the platform's revenue as a function of the maximum increase in welfare.
\begin{corollary}
\label{corr:gen_welf_conv}
    In a general market where the platform's maximum contribution to welfare is $W^\star-W(G_w)$, the platform can obtain revenue $\frac{W^\star - W(G_w)}{H_{\min(n, m)}}$ in polynomial time.
\end{corollary}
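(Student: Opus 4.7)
The plan is to exhibit a concrete set of platform edges to which \cref{thm:gen_welfare_conversion} can be applied, and then bound the size of that set by $\min(n,m)$. First, compute a maximum-weight matching $M^\star$ on the complete bipartite graph $C_{B,S}$ with weights given by $\mathbf{v}$; this can be done in polynomial time via the Hungarian algorithm, and by definition the total weight of $M^\star$ equals $W^\star$. Let $E_p := M^\star \setminus E_w$, i.e., those edges of the optimal matching that are not already world edges, and consider the platform graph $G_p = (B,S,E_w \cup E_p)$.

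The next step is to check that this choice of $E_p$ realizes the full welfare increase. Since $M^\star \subseteq E_w \cup E_p$, the matching $M^\star$ is feasible in $G_p$, so $W(G_p) \geq W^\star$. On the other hand, $G_p$ is a subgraph of $C_{B,S}$, so $W(G_p) \leq W(C_{B,S}) = W^\star$. Hence $W(G_p) = W^\star$ and
\[
\Delta W \;=\; W(G_p) - W(G_w) \;=\; W^\star - W(G_w).
\]
Moreover, $|E_p| \leq |M^\star| \leq \min(n,m)$, since any matching in a bipartite graph with $n$ buyers and $m$ sellers has at most $\min(n,m)$ edges.

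Now apply \cref{thm:gen_welfare_conversion} to the set $E_p$ with $k := |E_p|$. This produces, in polynomial time, a subset of $E_p$ whose induced revenue is at least $\Delta W / H_k$. Since the harmonic numbers satisfy $H_k \leq H_{\min(n,m)}$ whenever $k \leq \min(n,m)$, the revenue is at least
\[
\frac{\Delta W}{H_k} \;\geq\; \frac{W^\star - W(G_w)}{H_{\min(n,m)}},
\]
as claimed. The overall procedure is polynomial: computing $M^\star$ is polynomial, and the greedy edge-removal process underlying \cref{thm:gen_welfare_conversion} is polynomial.

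There is essentially no conceptual obstacle here beyond \cref{thm:gen_welfare_conversion} itself; the only thing to be careful about is that the set $E_p$ we hand to that theorem genuinely attains welfare $W^\star$ in $G_p$ (so that $\Delta W$ matches the corollary's statement) and has size at most $\min(n,m)$ (so that the harmonic factor can be replaced by $H_{\min(n,m)}$). Both of these follow immediately from standard facts about maximum-weight bipartite matchings, so the argument is short.
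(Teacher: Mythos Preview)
Your proof is correct and follows the same approach as the paper: construct a set of at most $\min(n,m)$ platform edges achieving the full welfare $W^\star$, then invoke \cref{thm:gen_welfare_conversion}. You are simply more explicit than the paper about how to obtain such a set (via $M^\star \setminus E_w$) and why its size is bounded by $\min(n,m)$.
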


If the optimal welfare $W^\star$ is large relative to welfare in the world graph $W(G_w)$, then the platform can add a substantial amount of welfare. In this case Corollary~\ref{corr:gen_welf_conv} implies that the platform can also obtain a substantial amount of revenue. Theorem \ref{thm:log_approx} makes this  precise.
\begin{theorem}
\label{thm:log_approx}
    In general markets where $W^\star - W(G_w) = \Omega(W(G_w))$, the platform can obtain $\frac{W^\star}{O(\log(\min(n, m)))}$ in revenue. This implies an $O(\log(\min(n, m)))$ polynomial-time approximation algorithm for revenue.
\end{theorem}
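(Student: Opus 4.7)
The plan is to combine Corollary~\ref{corr:gen_welf_conv} with the standing hypothesis $W^\star - W(G_w) = \Omega(W(G_w))$, and then close the loop by noting that optimal revenue is upper bounded by optimal welfare. The argument is essentially a chain of inequalities, so there is no serious obstacle; the only care needed is in justifying the bound $\mathrm{Rev}^\star \leq W^\star$ that turns a welfare-relative guarantee into a revenue-approximation guarantee.

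First, I would invoke Corollary~\ref{corr:gen_welf_conv} to obtain, in polynomial time, a set of platform edges yielding revenue at least $(W^\star - W(G_w))/H_{\min(n,m)}$. Since $H_k = O(\log k)$, this already gives revenue $\Omega((W^\star - W(G_w))/\log \min(n,m))$.

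Next, I would convert the welfare-gap bound into a bound relative to $W^\star$. The hypothesis states that there exists a constant $c > 0$ with $W^\star - W(G_w) \geq c \cdot W(G_w)$. Writing $W^\star = W(G_w) + (W^\star - W(G_w)) \leq \tfrac{1}{c}(W^\star - W(G_w)) + (W^\star - W(G_w)) = \tfrac{c+1}{c}(W^\star - W(G_w))$, we get $W^\star - W(G_w) \geq \tfrac{c}{c+1}\,W^\star = \Omega(W^\star)$. Plugging this into the previous bound yields revenue at least $\Omega(W^\star/\log \min(n,m))$, which is the first conclusion of the theorem.

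Finally, I would argue that this is an $O(\log \min(n,m))$-approximation to $\mathrm{Rev}^\star$. For this it suffices to observe that $\mathrm{Rev}^\star \leq W^\star$: at the maximum competitive prices, each buyer's utility $v_{ij} - p_j$ at their allocated item is nonnegative, so summing over transacting pairs gives that the total price paid across all transactions (on world and platform edges alike) is at most the realized welfare in $G_p$, which in turn is at most $W^\star$. The platform's revenue is only the portion of this total collected on platform edges, so $\mathrm{Rev}^\star \leq W^\star$. Combining this with our lower bound on achievable revenue, the polynomial-time algorithm from Corollary~\ref{corr:gen_welf_conv} achieves revenue at least $\mathrm{Rev}^\star / O(\log \min(n,m))$, completing the proof.
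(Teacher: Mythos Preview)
Your proposal is correct and follows essentially the same approach as the paper: invoke Corollary~\ref{corr:gen_welf_conv}, use the hypothesis to deduce $W^\star - W(G_w) = \Theta(W^\star)$, and finish with $\mathrm{Rev}^\star \leq W^\star$. Your write-up is simply a more detailed version of the paper's terse argument, including an explicit justification of the inequality $\mathrm{Rev}^\star \leq W^\star$ that the paper merely asserts.
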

\begin{proof}
    We have $W^\star = (W^\star - W(G_w)) + W(G_w) = \Theta(W^\star - W(G_w))$. The rest of the proof follows from Corollary~\ref{corr:gen_welf_conv} and the fact that $\mathrm{Rev}^\star \leq W^\star$.
\end{proof}

\begin{remark}
  Theorem~\ref{thm:log_approx} does not just imply a logarithmic approximation to the optimal revenue in inefficient markets. It shows a stronger result: the platform can efficiently obtain a logarithmic fraction of the optimal welfare as revenue in this class of markets. 
\end{remark}

Though Theorem~\ref{thm:log_approx} only provides a revenue guarantee for markets with large potential welfare improvements, these are the markets that we care about and the ones platforms typically aim to enter and disrupt. This theorem can also be seen as a motivation for why revenue-interested platforms tend to disrupt markets with large inefficiencies. In markets that are already relatively efficient, optimal revenue can indeed be much larger than the maximum welfare increase, as shown in Example~\ref{exam:mono_welfare}, but sellers and buyers may have few incentives to use the platform in the first place.

\subsection{Bounding the Welfare Loss from Revenue Maximization}\label{sec:PRM}

Having shown how to lower bound revenue with optimal welfare, we now explore the converse question: how much welfare is guaranteed under the revenue-optimal matching? In other words, we want to understand the welfare loss, compared to the optimal welfare, due to the platform inefficiently matching buyers and sellers in order to maximize its own revenue.
 To formalize this loss in welfare, we define the {\em price of revenue maximization}.
\begin{definition}[Price of Revenue Maximization (PRM)]
    For any market $G_w=(B,S,E_w)$ and valuation profile $\textbf{v}$, let $\Pi_p^\star(G_w, \textbf{v})$ denote the set of all revenue-optimal platform edge configurations. Each element $E^\star_p\in \Pi_p^\star(G_w, \textbf{v})$ is a configuration of revenue-optimal platform edges in market $G_w$.
    The {\em price of revenue maximization} is the largest ratio of the optimal welfare to the welfare of revenue-maximizing platform graph, across any market and valuation profile:
    \begin{align*}
        \mathrm{PRM} = \max_{G_w, \textbf{v}, E^\star_p \in \Pi_p^\star(G_w, \textbf{v})} \frac{W^\star(G_w)}{W(B,S,E_w\cup E^\star_p)}.
    \end{align*}
\end{definition}

To better understand the extent to which the platform's incentives can negatively affect social welfare, we present the following bounds on the PRM for general markets.
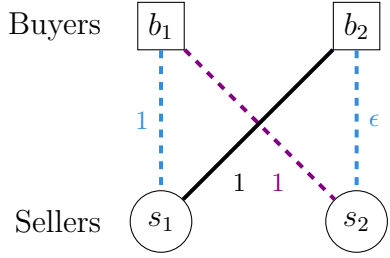
\begin{figure} 
    \centering
    \begin{tikzpicture}[scale=1.3]
        \foreach \i/\label in {1/$b_1$, 3/$b_2$}
            \node[draw, shape=rectangle, minimum size=0.6cm] (\label) at (\i, 2) {\label};
            
        \foreach \i/\label in {1/$s_1$, 3/$s_2$}
            \node[draw, shape=circle, minimum size=0.6cm] (\label) at (\i, 0) {\label};

        \foreach \x/\y in {$b_2$/$s_1$}
            \draw[line width=1pt, blue, dashed] (\x) -- (\y);

        \foreach \x/\y/\w/\pos/\loc in {$b_1$/$s_1$/1/midway/left, $b_2$/$s_2$/$\epsilon$/midway/right}
            \draw[line width=1.5pt, customblue, dashed] (\x) -- node[\pos, \loc, font=\footnotesize] {\w} (\y);

        \foreach \x/\y/\w/\pos/\loc in {$b_1$/$s_2$/1/near end/below left}
            \draw[line width=1.5pt, violet, dashed] (\x) -- node[\pos, \loc, font=\footnotesize] {\w} (\y);

        \foreach \x/\y/\w/\pos/\loc in {$b_2$/$s_1$/1/near end/below right}
            \draw[line width=1.5pt, black, solid] (\x) -- node[\pos, \loc, font=\footnotesize] {\w} (\y);

        \node[left] at (0.5, 2)  {Buyers};
        \node[left] at (0.5, 0)  {Sellers};
    
    \end{tikzpicture}
        \caption{A two-buyer, two-seller market in which the $\mathrm{PRM}$ approaches $2$. The world edge is marked as a black solid edge. Buyer values are annotated next to edges. The welfare-optimal platform edge is in purple while the revenue-optimal set of platform edges is shown in blue. \label{fig:poa_lower_bound}}
\end{figure} 

\begin{restatable}{proposition}{poaLowerBound}
\label{prop:poa_lower_bound}
    The price of revenue maximization is bounded below by $2$.
\end{restatable}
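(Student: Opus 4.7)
The plan is to verify that the explicit two-buyer, two-seller market in Figure~\ref{fig:poa_lower_bound} achieves a ratio approaching $2$, and then let $\epsilon \to 0$. The market has world edge set $E_w = \{(b_2, s_1)\}$ and valuations $v_{1,1} = v_{1,2} = v_{2,1} = 1$ and $v_{2,2} = \epsilon$ for some small $\epsilon > 0$, with all other valuations being zero. The three possible platform edges are $(b_1, s_1)$, $(b_1, s_2)$, and $(b_2, s_2)$.

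First I would compute the optimal welfare: the maximum-weight matching on the complete bipartite graph is $b_1 \mapsto s_2, b_2 \mapsto s_1$, giving $W^\star = 2$. Next I would go through each of the $2^3 = 8$ choices of $E_p$, compute the competitive equilibrium (via the First Welfare Theorem and the seller-pricing formula $p_j = W(G_p) - W(G_p \setminus \{s_j\})$), and record the platform's revenue. The key observation is that the subset $E_p = \{(b_1,s_1),(b_2,s_2)\}$ yields matching $b_1 \mapsto s_1, b_2 \mapsto s_2$ with welfare $1+\epsilon$; here $s_1$'s price is $(1+\epsilon) - \epsilon = 1$ and $s_2$'s price is $(1+\epsilon) - 1 = \epsilon$, so both platform edges transact and revenue is $1 + \epsilon$. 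By contrast, every other choice of $E_p$ (including the welfare-optimal choice $E_p = \{(b_1,s_2)\}$, and any choice containing $(b_1,s_2)$, which forces the matching $b_1 \mapsto s_2, b_2 \mapsto s_1$ with a single transacting platform edge at price $1$) yields revenue at most $1$.

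I would then conclude that the revenue-optimal edge set is $E_p^\star = \{(b_1,s_1),(b_2,s_2)\}$, under which the realized welfare is $W(B,S,E_w \cup E_p^\star) = 1 + \epsilon$, giving
\[
\mathrm{PRM} \geq \frac{W^\star}{W(B,S,E_w \cup E_p^\star)} = \frac{2}{1+\epsilon}.
\]
Taking $\epsilon \to 0$ yields the claimed bound.

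The main (and only real) obstacle is the case analysis: verifying exhaustively that no other subset of platform edges beats revenue $1 + \epsilon$. I expect this to be purely mechanical once the seller-pricing formula is applied. A minor subtlety to check is the tie-breaking assumption: when the platform is indifferent between matchings with equal welfare (e.g., when $(b_1,s_1)$ is added alone, where either $b_1$ or $b_2$ could match to $s_1$), the platform chooses the matching that maximizes its own revenue, which still does not exceed $1$. Since this subtlety only affects cases whose revenue is already at most $1 < 1+\epsilon$, it does not alter the conclusion.
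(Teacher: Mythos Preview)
Your proposal is correct and follows the same approach as the paper: both use the identical two-buyer, two-seller market with world edge $(b_2,s_1)$ and valuations $v_{1,1}=v_{1,2}=v_{2,1}=1$, $v_{2,2}=\epsilon$, and both identify $E_p^\star=\{(b_1,s_1),(b_2,s_2)\}$ as revenue-optimal with revenue $1+\epsilon$ versus the welfare-optimal choice $\{(b_1,s_2)\}$ with revenue $1$, yielding the ratio $2/(1+\epsilon)\to 2$. The only difference is that you propose an explicit exhaustive check over all $8$ subsets of platform edges, whereas the paper simply asserts the comparison between these two configurations.
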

\begin{proof}
    Consider the  two-seller, two-buyer market in Figure~\ref{fig:poa_lower_bound}. The welfare-optimal matching adds edge $(b_1, s_2)$ for a total welfare of $2$ and revenue of $1$. The revenue-optimal matching adds edges $(b_1, s_1), (b_2, s_2)$ for a total welfare and revenue of $1 + \epsilon$. This yields a ratio of $\frac{2}{1 + \epsilon} \to 2$ as $\epsilon \to 0$. Here, we again see the two primary ways in which the platform can obtain revenue. In the revenue-optimal strategy, the platform forgoes adding the socially optimal edge $(b_1, s_2)$ in favor of "monopolizing" the revenue from seller $s_1$ by adding edge $(b_1, s_1)$ instead. This allows it to facilitate another transaction, namely $(b_2, s_2)$, that would not be possibly under the welfare-optimal configuration.
\end{proof} 

While Proposition~\ref{prop:poa_lower_bound} demonstrates that the platform's strategic behavior may come at the cost of social welfare, the procedure of Theorem~\ref{thm:gen_welfare_conversion} allows us to bound this cost from above. Namely, the difference in optimal welfare and the welfare of revenue-optimal matching cannot exceed a logarithmic factor; otherwise, we could convert this difference in welfare into strictly more revenue than we obtain from the revenue-optimal matching.
\begin{theorem}
\label{thm:poa_upper_bound}
    In general markets, the PRM is bounded above by $H_{\min(n, m)} + 1 = O(\log(\min(n, m)))$.
\end{theorem}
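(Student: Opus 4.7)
The plan is to combine Theorem~\ref{thm:gen_welfare_conversion} (conversion of welfare increase into revenue) with the observation that platform revenue is always upper bounded by the welfare of the induced equilibrium. Fix any world graph $G_w$ and valuations $\mathbf{v}$, let $E_p^\star$ be a revenue-optimal configuration of platform edges, and let $W^\star$ be the optimal welfare. Write $W_p^\star := W(G_w \cup E_p^\star)$ for the welfare under the revenue-optimal matching. The goal is to show $W^\star \leq (H_{\min(n,m)} + 1) \cdot W_p^\star$.

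First, I would pick a minimal set of platform edges $E_p^W$ (disjoint from $E_w$) such that the platform graph $G_w \cup E_p^W$ attains optimal welfare $W^\star$; without loss of generality these are exactly the transacting platform edges in a welfare-maximizing matching, so $|E_p^W| \leq \min(n,m)$. Applying Theorem~\ref{thm:gen_welfare_conversion} to $E_p^W$, some subset yields revenue at least
\[
\frac{W^\star - W(G_w)}{H_{|E_p^W|}} \geq \frac{W^\star - W(G_w)}{H_{\min(n,m)}}.
\]
Since $E_p^\star$ is revenue-optimal, this lower bound also applies to $\mathrm{Rev}^\star$, i.e., $\mathrm{Rev}^\star \geq (W^\star - W(G_w))/H_{\min(n,m)}$.

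Next, I would use two simple monotonicity facts about $W_p^\star$. Adding edges to a market can only weakly increase welfare, so $W_p^\star \geq W(G_w)$. Moreover, the platform's revenue is a sum of a subset of equilibrium prices, which is at most the total welfare of the transacting allocation, so $W_p^\star \geq \mathrm{Rev}^\star$. Chaining these with the bound from the previous step,
\[
W^\star \leq W(G_w) + H_{\min(n,m)} \cdot \mathrm{Rev}^\star \leq W_p^\star + H_{\min(n,m)} \cdot W_p^\star = (H_{\min(n,m)} + 1) \cdot W_p^\star,
\]
which rearranges to the desired bound on the PRM.

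The main technical point to be careful about is the size of the welfare-optimal platform edge set: one must argue that $E_p^W$ can be chosen to have at most $\min(n,m)$ edges so that the harmonic number is over the right range. This is immediate because the welfare-optimal allocation is a matching on $B \cup S$ and we only need to add the platform edges that actually carry transactions. Everything else (monotonicity of welfare in the edge set, and revenue being dominated by welfare) is standard, so the only real ingredient is the welfare-to-revenue conversion from Theorem~\ref{thm:gen_welfare_conversion}.
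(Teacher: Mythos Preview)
Your proposal is correct and follows essentially the same approach as the paper: both apply Theorem~\ref{thm:gen_welfare_conversion} to a welfare-optimal set of at most $\min(n,m)$ platform edges, then combine the resulting revenue lower bound with the facts $W(G_w) \le W_p^\star$ and $\mathrm{Rev}^\star \le W_p^\star$. The only cosmetic difference is that the paper phrases the argument as a proof by contradiction whereas you give the direct chain of inequalities.
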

\begin{proof}
    We show $PRM \leq H_{\min(n, m)} + 1$. Suppose otherwise, and there exists a market $G_w$ that satisfies $W^\star > (H_{\min(n, m)} + 1) \cdot W(G_w \cup E_p^\star)$, where $E_p^\star$ is a set of revenue-optimal platform edges. Then, there would exist a set of $k\leq \min(n, m)$ edges $E'_P$ such that, by Theorem \ref{thm:gen_welfare_conversion}, we have
    \begin{align*}
        \mathrm{Rev}(E'_P) &\geq \frac{W^\star - W(G_w)}{H_{k}} \geq \frac{W^\star - W(G_w \cup E_p^\star)}{H_{\min(n, m)}} > \frac{(H_{\min(n, m)} + 1) W(G_w \cup E_p^\star) - W(G_w \cup E_p^\star)}{H_{\min(n, m)}}\\
        &= W(G_w \cup E_p^\star) \geq \mathrm{Rev}(E_p^\star),
    \end{align*}
    and this is a contradiction as $E_p^\star$ is revenue-optimal.
\end{proof}

\section{Guarantees for Homogeneous-Goods Markets}\label{sec:hom_markets}

While the results of Section \ref{sec:general_markets} offer  guarantees on both welfare loss and approximately optimal revenue, it is natural to consider if other markets might admit stronger still
 guarantees. As seen in the proof of Proposition \ref{prop:tight_welfare_conversion}, even when we turn to the case in which each buyer  values all desired items equally $(v_{ij} \in \{v_i, 0\})$, the bound on welfare conversion into revenue is already tight.
 This motivates us to consider the next most natural class of valuations $(v_{ij} = v_i)$.
 In this section, we develop very strong welfare and revenue guarantees in homogeneous-goods markets.

\subsection{Converting Potential Welfare Increase to Revenue without Loss}

We first consider whether the platform can more efficiently convert the potential welfare increase in homogeneous-goods markets into revenue. Theorem \ref{thm:hom_conversion} shows that the platform can achieve this conversion 
perfectly, without any loss.
\begin{theorem}
\label{thm:hom_conversion}
    In homogeneous-goods markets, the platform can always extract
 revenue at least $W^\star - W(G_w)$ in polynomial time.
\end{theorem}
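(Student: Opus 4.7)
My plan is to exhibit, in polynomial time, a set of platform edges $E_p$ whose induced maximum-price competitive equilibrium in $G_p = G_w \cup E_p$ both attains the optimal welfare $W^\star$ and yields platform revenue of at least $\Delta W := W^\star - W(G_w)$.

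First, I will compute a maximum-weight matching $M_w$ of the world graph and sort buyers by value. Since goods are homogeneous, any welfare-optimal matching allocates items to the top $\min(n,m)$ buyers; call this set $B^\star$, and let $B^+ = B^\star \setminus B_w$, $B^- = B_w \setminus B^\star$, so that $\Delta W = \sum_{b \in B^+} v_b - \sum_{b \in B^-} v_b$. A short augmenting-path argument using the optimality of $M_w$ yields the structural fact that any world-edge neighbor $s$ of a $b^+ \in B^+$ must be matched in $M_w$ to a buyer in $B^\star \cap B_w$ of value at least $v_{b^+}$; this will drive the price analysis later.

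Next, I will build a welfare-optimal matching $M^\star$ by processing $b^+ \in B^+$ in decreasing order of value: for each $b^+$, find a shortest augmenting path (relative to the current matching, in the complete bipartite graph) ending either at a currently unmatched seller or at a buyer in $B^-$, breaking ties in favor of unmatched sellers, and augment along it. The platform edge set $E_p$ consists of all edges of $M^\star$ not already present in $E_w$; by the structural fact above, all such platform edges lie outside $E_w$. Since $M^\star$ has weight $W^\star$, the First Welfare Theorem together with the platform's tie-breaking authority ensures $M^\star$ is realized as the equilibrium allocation.

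The hardest step is the revenue bound. The opportunity path theorem (Theorem~\ref{thm:oppo_path}) gives each platform-edge price $p_s$ as the minimum valuation along $b^+$'s opportunity path in $G_p$. I plan to argue that the decreasing-value, shortest-augmenting-path construction enforces a ``no-downward-crossing'' property: the opp path of any $b^+$ never reaches a $B^+$ buyer of strictly smaller value. Combined with the first-step structural fact, this will yield $p_s \geq v_{b^+}$ when $b^+$ is matched to a seller unmatched in $M_w$, and $p_s \geq v_{b^+} - v_{b^-(b^+)}$ when $b^+$ displaces some $b^-(b^+) \in B^-$, so summing over $b^+$,
\[
\text{Platform revenue} \;\geq\; \sum_{b^+ \in B^+} v_{b^+} \;-\; \sum_{b^- \in B^-} v_{b^-} \;=\; \Delta W.
\]
The main obstacle will be rigorously verifying the no-downward-crossing property: naive assignments can produce opp paths from multiple $b^+$'s that simultaneously drag their prices down to a shared low $v_{b^-}$ and push total revenue strictly below $\Delta W$, and the decreasing-value / unmatched-first rule is precisely what should prevent this pathology.
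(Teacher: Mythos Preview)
Your construction, once you observe that a shortest augmenting path from any $b^+$ in the complete bipartite graph has length at most two (a single non-world edge from $b^+$ to a seller that is either unmatched or currently held by some $b^-$; your own structural fact rules out a world edge here), collapses to the paper's much more direct construction: add exactly one platform edge from each $b^+\in B^+$ to a seller that $M_w$ assigns to some $b^-\in B^-$ or leaves unmatched, keeping every buyer in $B^\star\cap B_w$ on its $M_w$ seller. So at the level of the edge set $E_p$, you and the paper end up in the same place.

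Where you genuinely diverge---and where you leave a gap---is the price analysis. You target the weaker per-edge bound $p_s\geq v_{b^+}-v_{b^-(b^+)}$ and flag the ``no-downward-crossing'' property as the main unresolved obstacle. The paper instead proves the sharp bound $p_s=v_{b^+}$ for \emph{every} $b^+$, and the argument is just your structural fact pushed along the full alternating path rather than one step. Since $b^+$ is unmatched in $M_w$, every alternating path from $b^+$ in $(G_w,M_w)$ visits only buyers of value at least $v_{b^+}$ and never reaches an unmatched seller or a seller held by $B^-$ (otherwise $M_w$ could be augmented to higher weight). The new matching agrees with $M_w$ on all of those sellers and buyers, and the only new edge incident to $b^+$ is its transacting platform edge; hence the opportunity paths from $b^+$ in $(G_p,M^\star)$ coincide with its alternating paths in $(G_w,M_w)$ and in particular never reach \emph{any} other $B^+$ buyer. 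The decreasing-value ordering, the shortest-path tie-breaking, and the ``no-downward-crossing'' worry are therefore unnecessary: each $b^+$ already pays its full value, and $\sum_{b^+\in B^+}v_{b^+}\geq\sum_{b^+}v_{b^+}-\sum_{b^-}v_{b^-}=\Delta W$ finishes the proof.
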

\begin{proof}
    As goods are homogeneous, the optimal welfare is obtained by always matching the buyers with the top $\min(m, n)$ highest valuations. Fix some arbitrary maximum weight matching $M^\star$ in the complete graph, and let $M$ be the max weight matching in $G_w$. Let $B_{old} = \{\tilde{b}_1, \dots, \tilde{b}_{\ell}\}$ represent the set of buyers matched in $M$ but unmatched in $M^\star$, and let $B_{new} = \{b^\star_1, \dots, b^\star_k\}$ represent the set of buyers that are matched in $M^\star$ but are unmatched in $M$. Weakly more buyers are matched in $M^\star$ than in $M$; otherwise the leftover sellers could always match with the buyers in $B_{old}$ that they were matched to in $M$. Thus, we have $k \geq \ell$. We  show that the platform can add edges and extract full welfare as revenue from each buyer in $B_{new}$, implying a lower bound of revenue $W^\star - W(G_w)$.

    Let $v_{(\min(n, m))}$ denote the $\min(n, m)$-th largest valuation among all buyers. Since in $M^\star$ the largest $\min(n, m)$ buyers transact, it must be the case that every buyer in $B_{new}$ has valuation weakly greater than $v_{(\min(n, m))}$ and every buyer in $B_{old}$ has valuation strictly less than $v_{(\min(n, m))}$. For each buyer $b^\star_i \in B_{new}$, add a platform edge to the seller that $\tilde{b}_i$ transacted with in $M$, or to a seller who does not sell in $M$ if $i > \ell$. Since $k-\ell$ more buyers transact in $M^\star$ compared to $M$, there are at least $k-\ell$ sellers who do not sell in $M$. Denote the set of platform edges added as $E_p$. Then all buyers in $B_{new}$ still transact in the maximum matching of $G\cup E_p$ through $E_p$. We now prove that every buyer $b_i^\star\in B_{new}$ pays the full price $v_{i}^\star$ by reasoning about the buyers' opportunity paths.

    Since $b^\star_i$ does not transact in $M$, there are no sellers with strictly smaller values on any of its opportunity paths. Otherwise, transactions would change along the opportunity path and $b^\star_i$ would transacted in $M$. As buyers in $B_{old}$ have valuation strictly less than $v_{\min(n,m)}$, they are not on $b^\star_i$'s opportunity path. Thus, in the max weight matching of $G_w\cup E_p$, where $B_{old}$ no longer transacts, $b^\star_i$ remains the buyer with smallest valuation on its own opportunity path and pays $v^\star_i$.
\end{proof}

We make two observations about the procedure in Theorem~\ref{thm:hom_conversion} to extract $W^\star-W(G_w)$ in revenue.
First, the procedure  also maximizes social welfare. Second, to extract revenue $W^\star-W(G_w)$, the platform needs to carefully add platform edges. There are platform edges that increase total welfare by $W^\star-W(G_w)$ 
that do not yield the desired revenue. As an example, consider the market  in Figure~\ref{fig:conv_tight}
and used in the proof of Proposition~\ref{prop:tight_welfare_conversion}. If we modify the market such that all buyers have homogeneous valuations, then adding the dashed blue edges, while yielding the optimal welfare, still only gives revenue $1$. To extract the total amount of added welfare, $H_k$, the platform must add edges between $b_i$ and $s_i^d$. 

As in the case of general markets, Theorem~\ref{thm:hom_conversion} naturally leads to an approximation algorithm for optimal welfare (and hence optimal revenue) when the platform has a way to add
 a large amount of welfare. In these market settings, we can guarantee a constant-factor approximation to optimal revenue, improving on the logarithmic guarantee given for general valuations.
\begin{theorem}
\label{thm:hom_const_approx}
    In homogeneous-goods markets where $W^\star - W(G_w) = \Omega(W(G_w))$, there exists a polynomial-time algorithm that yields a constant-factor approximation of the optimal welfare as revenue to the platform. This implies a constant-factor approximation to the optimal revenue.
\end{theorem}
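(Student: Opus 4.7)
The plan is to mirror the argument used for Theorem~\ref{thm:log_approx} in the general setting, but invoking the much sharper conversion rate given by Theorem~\ref{thm:hom_conversion} instead of Corollary~\ref{corr:gen_welf_conv}. The polynomial-time algorithm is exactly the one implicit in the proof of Theorem~\ref{thm:hom_conversion}: fix a maximum weight matching $M^\star$ in the complete bipartite graph (computable in polynomial time since valuations are homogeneous --- just take the top $\min(n,m)$ buyers), fix a maximum weight matching $M$ in $G_w$, and then add one platform edge from each buyer in $B_{\text{new}}$ to an appropriate seller as described there.

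First, I would invoke Theorem~\ref{thm:hom_conversion} to conclude that this algorithm yields revenue at least $W^\star - W(G_w)$ in polynomial time. Next, I would use the assumption $W^\star - W(G_w) = \Omega(W(G_w))$ to rewrite
\[
W^\star \;=\; \bigl(W^\star - W(G_w)\bigr) + W(G_w) \;=\; \Theta\bigl(W^\star - W(G_w)\bigr),
\]
so that $W^\star - W(G_w) = \Omega(W^\star)$. Combining these two facts gives revenue $\Omega(W^\star)$, i.e., a constant-factor approximation of the optimal welfare.

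Finally, since revenue is always upper bounded by the total value of transactions occurring in the market and hence $\mathrm{Rev}^\star \leq W^\star$, the same lower bound $\Omega(W^\star)$ on the algorithm's revenue also gives a constant-factor approximation of the optimal revenue $\mathrm{Rev}^\star$, completing the proof.

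There is no real obstacle here: all of the substantive work has already been done in Theorem~\ref{thm:hom_conversion}, and the remaining argument is a one-line algebraic consequence of the hypothesis $W^\star - W(G_w) = \Omega(W(G_w))$ together with the trivial inequality $\mathrm{Rev}^\star \leq W^\star$. The only minor thing to be careful about is noting explicitly that the matchings $M^\star$ and $M$ used in the algorithm of Theorem~\ref{thm:hom_conversion} are computable in polynomial time (maximum weight bipartite matching), so that the end-to-end procedure is indeed polynomial time.
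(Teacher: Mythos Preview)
Your proposal is correct and essentially identical to the paper's own proof: invoke Theorem~\ref{thm:hom_conversion} for revenue at least $W^\star - W(G_w)$, use the hypothesis to conclude $W^\star - W(G_w) = \Omega(W^\star)$, and finish with $\mathrm{Rev}^\star \leq W^\star$. The only difference is that you spell out a bit more explicitly why the procedure is polynomial time, which is harmless.
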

\begin{proof}
    By Theorem \ref{thm:hom_conversion}, we can always guarantee $W^\star - W(G_w)$ in revenue in polynomial time. Since $W^\star - W(G_w) = \Omega(W(G_w))$, we have that $W^\star - W(G_w) = \Omega(W^\star)$. The final part of the theorem follows from the fact that $\mathrm{Rev}^\star \leq W^\star$.
\end{proof}

While we choose to focus on the case where $W^\star - W(G)$ is large, as we believe these are the markets platforms are most likely to enter, we show that we can more generally extract a $1/\min\{n,m\}$-fraction of the optimal revenue in polynomial time in any homogeneous-goods market. We defer the proof to Appendix~\ref{app:min_nm_approx_rev}.
\begin{restatable}{theorem}{HomoMinNM}
\label{thm:hom_min_nm_approx}
    In homogeneous-goods markets, there exists a polynomial-time algorithm that yields a $\min\{n,m\}$-approximation of the optimal revenue.
\end{restatable}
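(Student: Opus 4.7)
The plan is to combine Theorem~\ref{thm:hom_conversion} with a supplementary enumeration of small platform configurations. The starting point is the bound $\mathrm{Rev}^\star \leq W^\star$, so to obtain a $\min\{n,m\}$-approximation it suffices to guarantee revenue at least $W^\star/\min\{n,m\}$ in polynomial time.

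I would analyze two cases based on the welfare gap. In Case A, if $W^\star - W(G_w) \geq W^\star/\min\{n,m\}$, then Theorem~\ref{thm:hom_conversion} already produces revenue at least $W^\star/\min\{n,m\} \geq \mathrm{Rev}^\star/\min\{n,m\}$ in polynomial time. In Case B, $W(G_w) > W^\star \cdot (1 - 1/\min\{n,m\})$, so almost all of the optimal welfare is realized by the world graph alone, and the platform must extract revenue primarily by re-routing trades already present in a max matching $M_w$ of $G_w$.

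For Case B, I would enumerate, for every pair $(b_i,s_i),(b_j,s_j) \in M_w$ with $(b_i,s_j),(b_j,s_i) \notin E_w$, the 2-swap configuration $E_p = \{(b_i,s_j),(b_j,s_i)\}$. Using Theorem~\ref{thm:oppo_path} together with platform tie-breaking toward the all-platform max matching, the opportunity paths of $b_i$ and $b_j$ form a two-buyer cycle, so both platform edges transact at price $\min\{v_i,v_j\}$ and the configuration yields revenue $2\min\{v_i,v_j\}$. Combined with all single-edge configurations $\{(b,s)\}$ and the output of Theorem~\ref{thm:hom_conversion}, this gives $O(n^2 m^2)$ candidates, and the algorithm returns the best.

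The main obstacle is showing that in Case B at least one enumerated configuration attains revenue at least $\mathrm{Rev}^\star/\min\{n,m\}$. An averaging argument over the at most $\min\{n,m\}$ transacting platform edges in $E_p^\star$ (Lemma~\ref{lem:at_most_one_edge}) yields one edge of price at least $\mathrm{Rev}^\star/\min\{n,m\}$, but the delicate step is reproducing a comparable price in one of the enumerated configurations, whose opportunity paths can differ substantially from those in $E_p^\star$. I expect this will require identifying, within the cycle or chain of $E_p^\star$ that carries the high-price edge, two consecutive buyers whose 2-swap (or, at a chain endpoint, whose single-edge configuration) already induces an opportunity path whose minimum valuation is at least the target per-edge price, so that restricting to this local structure loses at most the desired $\min\{n,m\}$ factor.
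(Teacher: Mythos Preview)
Your proposal has a genuine gap in Case~B. The claim that in the 2-swap configuration $E_p=\{(b_i,s_j),(b_j,s_i)\}$ both platform edges transact at price $\min\{v_i,v_j\}$ is false in general. Opportunity paths are computed in the full graph $G_w\cup E_p$, not just in the two-edge cycle you create: if, say, $b_j$ has a world edge to some other seller $s_k$ who (in the chosen max matching) sells to a low-value buyer $b_k$, then $b_i$ has the opportunity path $b_i\text{ --- }s_i\to b_j\text{ --- }s_k\to b_k$, and by Theorem~\ref{thm:oppo_path} the price drops to $v_k$, not $\min\{v_i,v_j\}$. The same issue plagues the single-edge candidates $\{(b,s)\}$: adding one edge does nothing to control where $b$'s opportunity paths terminate in $G_w$, so the resulting price can be arbitrarily small compared to the price that edge carries inside the full optimal configuration $E_p^\star$.

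This is exactly the ``delicate step'' you flag, and it is not a matter of finding the right pair inside a chain or cycle of $E_p^\star$: the high price of an edge in $E_p^\star$ can depend on \emph{all} the other platform edges, which collectively route $b_i$'s opportunity paths away from low-value buyers and unsold sellers. Restricting to any bounded-size local piece destroys that routing. The paper's approach is accordingly quite different: for each candidate pair $(b_i,s_j)$ it asks for the maximum price achievable from that single transaction when the platform is free to add \emph{arbitrarily many} supporting edges, and it characterizes this via Hall violators in $G^{-j}$ containing $b_i$ (Lemma~\ref{lemma:hall_violator_prices}). Finding the best such Hall violator (the one whose minimum buyer value is largest) reduces to a sequence of max-deficiency computations, each solvable in polynomial time from a maximum matching. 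The averaging step you describe then finishes the argument: some edge of $E_p^\star$ carries price at least $\mathrm{Rev}^\star/\min\{n,m\}$, and the Hall-violator search over that edge's endpoints recovers at least that price.
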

In order to guarantee a $\min\{n,m\}$-approximation of the optimal revenue, we show that the platform can efficiently maximize the revenue it obtains from a single transaction. As the platform can facilitate at most $\min\{n,m\}$ transactions in total, this yields the desired approximation ratio.

\subsection{Zero Welfare Loss from Revenue Maximization}
As with general markets, we now turn to bounding the welfare loss incurred as a result of the platform's self-interest. Adopting the same approach as  used in proving Theorem~\ref{thm:poa_upper_bound} for general markets, Theorem~\ref{thm:hom_conversion} implies that the price of revenue maximization is bounded above by $2$. While this is already a strong result, we can in fact  show that the price of revenue maximization in homogeneous-goods markets is  $1$. That is, revenue maximization  aligns exactly with welfare maximization in homogeneous-goods markets. 
\begin{restatable}{theorem}{PRMhomo}
\label{thm:hom_poa}
    In homogeneous-goods markets, the price of revenue maximization equals $1$.
\end{restatable}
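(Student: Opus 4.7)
The plan is to prove $\mathrm{PRM} = 1$ by contradiction: assume there is a revenue-optimal configuration $E_p^\star$ whose induced equilibrium has welfare strictly less than $W^\star$, and construct $E_p'$ with strictly higher revenue. Because goods are homogeneous, $W^\star$ equals the sum of the top $\min(n,m)$ buyer valuations and is attained exactly when those buyers all transact. Hence there must exist a top-$\min(n,m)$ buyer $b^\star$, with value $v^\star$, who does not transact in the equilibrium on $G_w \cup E_p^\star$.

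I split into two cases, based on whether every seller is matched in this equilibrium. In Case A, some seller $s'$ is unmatched; set $E_p' = E_p^\star \cup \{(b^\star, s')\}$. Since $b^\star$ and $s'$ were both unmatched, the new maximum-weight matching simply appends the trade $(b^\star, s')$, so welfare increases by exactly $v^\star$. Applying the maximum-competitive-price formula $p_s = W(G) - W(G \setminus \{s\})$ term by term, the price of $s'$ in the new equilibrium equals $v^\star > 0$, while for any other seller $s$ the $+v^\star$ in both $W(G)$ and $W(G \setminus \{s\})$ cancels, so its price is unchanged. Hence the revenue from every surviving edge in $E_p^\star$ is preserved and the new edge contributes $v^\star$, contradicting optimality.

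In Case B, every seller is matched. Because $b^\star$ is unmatched and a top-$\min(n,m)$ buyer, some matched buyer $\tilde{b}$, at some seller $\tilde{s}$, satisfies $v_{\tilde{b}} < v^\star$. Set $E_p' = E_p^\star \cup \{(b^\star, \tilde{s})\}$. In the new maximum-weight matching, $b^\star$ displaces $\tilde{b}$ at $\tilde{s}$, possibly triggering an alternating cascade of displacements along edges of $G_p^\star$ that ends at some buyer $\tilde{b}_k$ who becomes unmatched; since the swap must strictly improve welfare, $v_{\tilde{b}_k} < v^\star$. By \cref{thm:oppo_path}, the old price $p_{\tilde{s}}^{\mathrm{old}}$ equals $v_{\tilde{b}_k}$, since the cascade identifies $\tilde{b}_k$ as the minimum-value buyer on $\tilde{b}$'s opportunity path. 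A direct computation via $p_s = W(G) - W(G \setminus \{s\})$ then yields $p_{\tilde{s}}^{\mathrm{new}} = v^\star$, increasing the contribution of $\tilde{s}$ to platform revenue by $v^\star - v_{\tilde{b}_k} > 0$ regardless of whether the previous trade at $\tilde{s}$ was on a world edge or on an edge of $E_p^\star$. The same $+v^\star - v_{\tilde{b}_k}$ cancellation shows that sellers entirely off the cascade have unchanged prices.

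The main obstacle is tracking how the cascade in Case B reroutes opportunity paths and affects the prices at the intermediate sellers $\tilde{s}_2, \dots, \tilde{s}_k$, since an old platform trade at one of these sellers may be replaced by a world trade (or vice versa) after the swap. A cleaner alternative is to apply the construction of \cref{thm:hom_conversion}, but with $G_w \cup E_p^\star$ playing the role of the baseline world graph: this produces additional platform edges $E_p^\Delta$ such that $E_p^\star \cup E_p^\Delta$ attains welfare $W^\star$ and each newly matched buyer $b \in B_{\mathrm{new}}$ pays her full value $v_b$, contributing $\sum_{b \in B_{\mathrm{new}}} v_b > 0$ in new revenue. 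An opportunity-path argument then verifies that the prices of surviving edges of $E_p^\star$ are weakly preserved under the addition of $E_p^\Delta$, so $\mathrm{Rev}(E_p^\star \cup E_p^\Delta) > \mathrm{Rev}(E_p^\star)$, yielding the required contradiction and hence $\mathrm{PRM} = 1$.
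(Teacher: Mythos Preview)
Your outline shares the paper's contradiction setup and the case split on whether some seller is unmatched, but both branches have gaps as written. In Case~A, the cancellation claim is false: removing another seller $s$ from $G_p^\star$ may already cause $b^\star$ or $s'$ to be matched in the max-weight matching of $G_p^\star\setminus\{s\}$, so adding $(b^\star,s')$ need not raise $W(\cdot\setminus\{s\})$ by exactly $v^\star$. What does hold (and suffices) is the inequality $W(G_p'\setminus\{s\}) \le W(G_p^\star\setminus\{s\}) + v^\star$, giving $p_s^{\mathrm{new}} \ge p_s^{\mathrm{old}}$; the paper gets the analogous monotonicity via opportunity paths. In Case~B, your fallback via \cref{thm:hom_conversion} does not account for revenue \emph{lost} from the edges of $E_p^\star$ that are displaced (those that carried trades of $B_{\mathrm{old}}$ buyers). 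To conclude $\mathrm{Rev}(E_p^\star \cup E_p^\Delta) > \mathrm{Rev}(E_p^\star)$ you must argue that this loss, at most $\sum_{\tilde b\in B_{\mathrm{old}}} v_{\tilde b}$, is strictly dominated by $\sum_{b\in B_{\mathrm{new}}} v_b$; this is true since each $v_{\tilde b} < v_{(\min(n,m))} \le v_b$ and $|B_{\mathrm{new}}|\ge|B_{\mathrm{old}}|$, but it needs to be said. You also hand-wave the ``opportunity-path argument'' for surviving edges, which is not entirely routine once several swaps are in play.

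The paper avoids all of this with one choice you do not make: in the all-sellers-matched case, take $\tilde b$ to be the transacting buyer of \emph{minimum} value $v_k$, at seller $s_\ell$, and add the single edge $(b_i,s_\ell)$. Then there is no cascade---$b_k$ simply becomes unmatched, since reassigning $b_k$ cannot raise welfare when $v_k$ is minimal among transacting buyers---so exactly one trade is displaced, costing at most $v_k$, while the new platform edge earns $v_i > v_k$. For every other seller, the only change in the allocation is at $s_\ell$, so an opportunity path that previously reached $b_k$ there now reaches $b_i$, and from $b_i$ (unmatched in the old welfare-maximizing allocation) all further opportunity paths lead only to buyers of value at least $v_i$; hence every price weakly rises. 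Picking the minimum-value transacting buyer is the missing idea that collapses your cascade into a one-step swap.
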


\begin{proof}
    Let $\tilde{B}$ represent the set of buyers with the top $\min(n, m)$ highest valuations – i.e. the buyers who transact in the welfare-optimal matching. Suppose for the sake of contradiction that the revenue-optimal matching is not welfare optimal and thus some buyer $b_i \in \tilde{B}$ with $v_i \neq 0$ does not transact in the revenue-optimal matching. All opportunity paths from $b_i$ must lead to buyers with weakly higher values. If not, transactions can change along the opportunity path, resulting in a matching with strictly larger weight. This contradicts with the first welfare theorem of competitive equilibrium. 

    If there is a non-transacting seller $s_j$ in the revenue-optimal matching, then adding the edge $(b_i, s_j)$ strictly increases revenue. All buyers with opportunity paths previously ending at $s_j$ paid a price zero as seller $s_j$ did not transact; after the addition of this edge, they pay price $v_i > 0$, leading to a strict increase in revenue.

    If all sellers transact in the revenue-optimal matching, let $b_k$ be the transacting buyer with the smallest valuation and suppose that they currently transact with seller $s_\ell$. As the revenue-optimal matching is assumed not to be welfare optimal, we must have that $v_k < v_i$. By adding the edge $(b_i, s_\ell)$, revenue again strictly increases. All buyers with opportunity paths previously leading to $s_\ell$ must pay a weakly higher price now, since $v_i > v_k$ and all opportunity paths from $b_i$ lead to buyers with weakly higher values. In addition, we obtain revenue $v_i$ from the edge $(b_i, s_\ell)$ and lose revenue at most $v_k < v_i$ from the exclusion of edge $(b_k, s_{\ell})$, in the event that this was a platform edge. Thus, this leads to a strict increase in revenue.

    In either case, there exists a modification to the revenue-optimal matching that yields strictly higher revenue, concluding the proof.
\end{proof}

\section{Discussion}

We have studied the incentives facing platforms when  facilitating transactions between buyers and sellers, modeled by a bipartite buyer-seller network to which the platform can choose to add new links, introducing buyers to sellers.
The general problem of maximizing the platform's revenue is computationally hard, even under restrictive structural assumptions. By imposing additional structure, we give polynomial-time algorithms for  special classes of markets, establishing a ``frontier of tractability.'' We also examine the relationship between social welfare and platform's revenue, and show that  where the platform can substantially increase social welfare, it
 can also extract substantial revenue. This yields an $O(\log(\min\{n, m\}))$-approximation algorithm for revenue in these kinds of inefficient
markets, as well as an upper bound of $O(\log(\min\{n, m\}))$ on the impact of platform self interest on welfare, relative to the optimal welfare in general markets. These bounds can be substantially improved in homogeneous-goods markets, where we prove that revenue maximization is perfectly aligned with welfare maximization, and give a constant-factor approximation algorithm for optimal revenue in  inefficient  markets. There are many interesting directions for future work:

\begin{itemize}
    \item The current gap between the lower and upper bounds for the effect of platform self interest
    on welfare (the {\em price of revenue maximization} or PRM) is quite large: $2$ vs $H_{\min(n, m)}=\Theta(\log_{\min(n, m)})$. We conjecture that $\mathrm{PRM} = 2$ in general markets. Closing this gap has important consequences for the impact of platforms on social welfare.
    \item  Can we resolve the complexity of  intermediate classes of markets; e.g.,  is it possible to exactly maximize revenue in homogeneous-goods markets in polynomial time? 
    \item We have assumed that sellers are indifferent between transacting off- and on-platform, whereas in real life sellers may choose to leave and transact with buyers off the platform if transaction fees are too high. It is interesting to model these kinds of considerations and understand which of our results still hold.
    \item Our model assumes full information with both valuations and world edges known to the platform. In reality, platforms only have partial information. Extending the model to a Bayesian setting, in which the platform faces uncertainty about buyer values and world edges, is also an interesting direction for follow up. 
\end{itemize}

\pagebreak
\bibliographystyle{ACM-Reference-Format}
\bibliography{main}

\newpage
\appendix
\section{Missing Proofs in Section~\ref{sec:hardness}}
\subsection{Proof of NP-Hardness (Lemma~\ref{lem:reduction_proof})}\label{app:sat_np_proof}
\SATReduction*
\begin{proof}
Suppose there is a valid assignment to the original CNF $\varphi$. Then there exists a matching between buyers $U_i$ and items $\alpha_{i,j}, \tau_{i,j}$ such that no buyer $U_i$ is matched to an item $\alpha_{i,j}$ where $x_i$ is false or an item $\tau_{i,j}$ when $x_i$ is true. Connect these as platform edges. For each buyer $A_{i,j}, T_{i,j}$, add platform edges to items $\gamma_{i,j}$ and $\delta_{i,j}$ respectively. Finally, add platform edges between each dummy item and a single unsold item so that all items are sold on platform.

The maximum weight matching assigns buyers $U_i$ their corresponding items via the platform edges. Each $A_{i,j}$ gets item $\gamma_{i,j}$ and each $T_{i,j}$ gets item $\delta_{i,j}$. Finally, the dummy buyers are matched to the remaining $(2tq - k)$ items that have not already been sold.
    
The total revenue of this matching is $D$. Each buyer $U_i$ pays price $Z$, resulting in revenue $kZ$. All dummy buyers pay price $H$, resulting in revenue $H(2tq-k)$. If $x_i$ is true, $\alpha_{i,0}$ is sold to a buyer $U_i$, so buyer $A_{i,j}$ pays price $Z$ as it has an opportunity path $A_{i,j} \mbox{ --- } \alpha_{i,0} \to U_i$ to this buyer. However, for $T_{i,j}$, both $\tau_{i,0}$ and $\tau_{i,j}$ are sold to dummy buyers – thus, $T_{i,j}$ only has opportunity paths to buyers with weakly higher values and must pay $Z + 1$. Thus, $\forall j, A_{i,j}$ has price $Z$ and $T_{i,j}$ has price $Z+1$. The converse holds when $x_i$ is False. In conclusion, from each $A_{i,j}$ and $B_{i,j}$, we get revenue $(t-1)Z + (t-1)(Z+1) = (t-1)(2Z+1)$. Thus, in total we get revenue $kZ + q(t-1)(2Z+1) + H(2tq - k) = D$ as desired. \\

Now, suppose that there exists a set of platform edges and a matching that generates revenue at least $D$. Then it must be the case that each buyer $U_i$ gets an item as otherwise we could have maximal revenue $(k-1)Z + 2q(t-1)(Z+1) + H(2tq - k) < D$. For each $U_i$ corresponding to clause $d_i = (x_i \vee \bar{x}_i)$, they must receive either $\alpha_{i,0}$ or $\tau_{i,0}$. We construct our satisfying assignment as follows:
\begin{itemize}
    \item If $U_i$ receives $\alpha_{i,0}$, set $x_i$ to True.
    \item If $U_i$ receives $\tau_{i,0}$, set $x_i$ to False.
\end{itemize}
It suffices to show that this is a satisfying assignment. To do this, we show that no buyer $U_k$ receives an item $\alpha_{i,j}$ when $x_i$ is False or an item $\tau_{i,j}$ when $x_i$ is True.

If $x_i$ is True, each $A_{i,j}$ has an opportunity path to buyer $U_i$, so we can get at most revenue $Z$ from them. Similarly, if $x_i$ is False, each $T_{i,j}$ has an opportunity path to buyer $U_i$, so we can get at most revenue $Z$ from them. Thus, the platform makes revenue at most $Z$ from each buyer $A_{i,j}$ or $B_{i,j}$ that corresponds to the $q$ true literal. For the $q$ false literals, note that the platform has to get maximal revenue $Z+1$ from the corresponding buyer because $D-kZ-H(2tq-k)-q(t-1)Z=q(t-1)(Z+1)$. If $U_k$ receives an item corresponding to the false literal, then the buyer $A_{i,j}$ or $T_{i,j}$ that corresponds to the false literal will have an opportunity path to this $U_k$, meaning they will only pay price $Z$, contradicting the above. It follows that our assignment is indeed a satisfying assignment as desired, concluding the proof.
\end{proof}

\subsection{Proof of APX-Hardness (Theorem \ref{thm:vc_apx_proof})}\label{app:vc_apx_proof}
The APX-hardness proof is similar to that in \citet{guruswami2005profit} for revenue maximizing envy-free pricing, and reduces from a version of the vertex cover problem. Given a connected graph $G=(V,E)$ where vertices have maximum degree $K$, the problem asks to find the minimum set of vertices to cover all edges. This problem is known to be APX-hard for $K \geq 3$. Given $G$, we construct the following instance of the platform's problem:
\begin{itemize}
    \item For each vertex $v \in V$, we create a vertex buyer $B_v$ and item $S_v$. We also create items $\alpha_{v, i}$ for $i \in \{1, \dots, \mathrm{deg}(v)\}$.
    \item For each edge $(u, v) \in E$, we create an edge buyer $B_{(u, v)}$.
    \item Finally, we add $|E|$ dummy buyers $D_1, \dots, D_{|E|}$.
    \item Each buyer $B_v$ has value $2$ for items $S_v, \alpha_{v, \cdot}$. Each buyer $B_{(u, v)}$ has value $1$ for items $\alpha_{u, \cdot}$ and $\alpha_{v, \cdot}$. Each dummy buyer $D_i$ has value $H\geq 2$ for all $\alpha$ items.
    \item We add world edges from $B_v$ to all $\alpha_{v, \cdot}$.
\end{itemize}
We complete the reduction by showing the following.
\begin{lemma}
\label{lem:vertice_cover_reduction}
    The optimal revenue is given by $2|V| + (H+1)|E| - q$ where $q$ is the size of the smallest vertex cover.
\end{lemma}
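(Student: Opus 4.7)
The plan is to prove the lemma by establishing matching upper and lower bounds, both equal to $2|V| + (H+1)|E| - q$. First, by Lemmas~\ref{lem:all_transact} and \ref{lem:at_most_one_edge}, and because the construction has equal numbers of buyers and sellers ($|V| + 2|E|$ of each) with every pair $(B_v, S_v)$ having positive value, I may restrict attention to revenue-optimal strategies inducing a perfect matching in which each buyer and seller has at most one platform edge. Each $S_v$ is then sold to $B_v$ via a platform edge (since $B_v$ is the unique buyer valuing $S_v$), so all $|V|$ vertex buyers take $S_v$, and the $2|E|$ alpha items are partitioned between the $|E|$ edge buyers and the $|E|$ dummies, each served by its single platform edge.

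For the upper bound, define $g : E \to V$ by letting $g(e) = v$ when $B_e$ takes some alpha of $v$. Since $g(e) \in e$ for every edge $e$, $\mathrm{image}(g)$ is a vertex cover of $G$, and therefore $|\mathrm{image}(g)| \ge q$. I then apply the maximum competitive price formula $p_s = W(G_p) - W(G_p \setminus \{s\})$. Because edge buyers and dummies have no world edges and at most one platform edge each, removing an alpha held by an edge buyer (resp.~dummy) leaves that buyer unmatched with no feasible rearrangement: any detour that routes $B_v$ to a different alpha via a world edge leaves $S_v$ unsold and causes an even larger welfare loss. This yields $p_\alpha = 1$ (resp.~$H$). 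For $v \in \mathrm{image}(g)$, picking some $B_e$ with $g(e) = v$ and the alpha $\alpha_{v,i}$ it holds, the swap that gives $B_v$ the item $\alpha_{v,i}$ via a world edge while leaving $B_e$ unmatched witnesses $W(G_p \setminus \{S_v\}) \ge W(G_p) - 1$, so $p_{S_v} \le 1$. For $v \notin \mathrm{image}(g)$, all alphas of $v$ are held by dummies who cannot rematch, so the best post-removal option is to leave $B_v$ unmatched, giving $p_{S_v} = 2$. Summing,
\[
\mathrm{Rev} \le 2\bigl(|V| - |\mathrm{image}(g)|\bigr) + |\mathrm{image}(g)| + |E| + H|E| \le 2|V| + (H+1)|E| - q.
\]

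For the lower bound, fix a minimum vertex cover $C$ with $|C| = q$; by minimality, each $v \in C$ has some incident edge $(u, v)$ with $u \notin C$, so I can choose $f : E \to V$ with $f(e) \in e \cap C$ and $\mathrm{image}(f) = C$. The platform adds $(B_v, S_v)$ for each $v \in V$, a single platform edge $(B_e, \alpha_{f(e), i_e})$ per edge $e$ (using distinct $i_e$ at each vertex, feasible since $|f^{-1}(v)| \le \deg(v)$), and one platform edge from each of the $|E|$ dummies to one of the remaining $|E|$ alphas. The resulting perfect matching is max-weight, and the price computations from the upper bound become equalities, so the platform extracts revenue exactly $2|V| + (H+1)|E| - q$. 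The main obstacle lies in the price analysis, particularly in establishing $p_{S_v} = 2$ for $v \notin \mathrm{image}(g)$: this crucially relies on the single-platform-edge normalization from Lemma~\ref{lem:at_most_one_edge} to rule out any rearrangement chain that could save a displaced dummy.
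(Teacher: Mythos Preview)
Your approach is essentially the paper's: a constructive lower bound from a minimum vertex cover, and an upper bound that observes the edge-buyer assignments $g(e)$ form a vertex cover of $G$, forcing at least $q$ vertex buyers down to price~$1$. The main stylistic difference is that you compute prices via the formula $p_s = W(G_p) - W(G_p \setminus \{s\})$, whereas the paper reasons through opportunity paths; both give the same numbers here.

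There is one gap worth flagging. Your reduction to a perfect matching does not follow from Lemmas~\ref{lem:at_most_one_edge} and~\ref{lem:all_transact} together with ``every pair $(B_v,S_v)$ has positive value.'' Lemma~\ref{lem:all_transact} only guarantees that non-trading buyers and non-trading sellers have zero value \emph{for each other}; this is perfectly consistent with $S_v$ being unsold (only $B_v$ values it) while $B_v$ sits on some $\alpha_{v,i}$ via a world edge and some edge buyer or dummy goes unmatched (neither values $S_v$). In that configuration your function $g$ is undefined on the unmatched edge, and the vertex-cover argument does not directly apply. The paper closes this by an explicit swap: if $B_{(u,v)}$ is unmatched, one can route $B_u,B_v$ back to $S_u,S_v$ and give $B_{(u,v)}$ an $\alpha$, gaining revenue~$1$ and losing at most~$1$, so WLOG all edge buyers are matched. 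Once you insert that step (or any equivalent argument that unmatched edge buyers never help), the rest of your proof goes through as written.
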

\begin{proof}
Suppose there is a vertex cover $Q$ of size $q$. We add platform edges from $B_v$ to $S_v$ for each $v \in V$. Additionally, for each $B_{(u, v)}$, if $u \in Q$, we add a platform edge to some $\alpha_{u, i}$. Otherwise – note that this implies that $v \in Q$ – we add a platform edge to $\alpha_{v, i}$. We do so such that each $B_{(u, v)}$ is matched via a platform edge to a different item. Finally, we add a platform edge from each $D_i$ to an $\alpha$ item that is not incident on any already added platform edges. Thus, the platform edges act as a perfect matching between buyers and items.

The platform then picks all the platform edges as the max weight matching and get revenue $2|V|+(H+1)|E|-q$. All dummy buyers pay full price $H$ and all edge buyers $B_{(u,v)}$ pays full price $1$ because they are not connected to any other buyers through opportunity path. Buyers $B_v$ where $v\in Q$ has an opportunity path $B_v \mbox{ --- } \alpha_{v,i} \to B_{v,u}$ to some edge buyer $B_{(u,v)}$ and pays price 1. Buyers $B_v$ where $v\notin Q$ pays full price $2$ because they are connected through opportunity path to dummy buyers. So total revenue from vertex buyers is $q+2(|V|-q)=2|V|-q$. \\

Now, we show the optimal revenue is bounded above by $2|V| + (H+1)|E| - q$ when the minimum vertex cover has size $q$. By Lemma~\ref{lem:at_most_one_edge}, if an edge buyer $B_{(u,v)}$ is matched, it must be matched to some item $\alpha_{u,i}$ or $\alpha_{v, i}$ that it has positive value for. If $B_{(u,v)}$ is not matched, Lemma~\ref{lem:all_transact} says all items $\alpha_{u, i}$ and $\alpha_{v, i}$ are matched with vertex buyer $B_u, B_v$ or some dummy buyers. No matter which buyer $\alpha_{u, i}$ and $\alpha_{v, i}$ are matched with, the platform can always match $B_v$ to $S_v$, $B_u$ to $S_u$, and the dummy buyers to some other $\alpha$ buyers because the number of buyers and sellers are equal. In doing so, the platform gain revenue 1 from $B_{u,v}$, and loses at most revenue 1 from $B_u$ or $B_v$. So we can assume there is a revenue-optimal matching where all edge buyers $B_{(u,v)}$ are matched with the corresponding $\alpha$ seller. 

As the minimum vertex cover is of size $q$, edge buyers are matched to some items $\alpha_{u, i}$ with at least $q$ different vertex $u$. Thus, from each corresponding vertex buyer $B_u$, we can attain at most revenue $1$ because they are connected through opportunity path to $B_{(u,v)}$. It follows that the maximum revenue we can achieve is given by $2|V| + (H+1)|E| - q$, concluding the proof.
\end{proof}

\VertexCoverReduction*
\begin{proof}
    We have assumed the graph that we reduce from is connected. So $|E| \geq |V| - 1$. Further because of maximum degree $K$, $|E|\leq |V|K/2$. So $|E|=\Theta(|V|)$. Again because of maximum degree $K$, the minimum vertex cover has size $q$ at least $|E|/K = \Omega(|V|)$. 
    
    Let $\mathrm{Rev}^\star$ be optimal revenue and $q^\star$ be the minimum vertex cover. By Lemma~\ref{lem:vertice_cover_reduction}, $\mathrm{Rev}^\star=2|V|+(H+1)|E|-q^\star$. Then a constant factor approximation for $\mathrm{Rev}^\star$ translates into a constant factor approximation for $q^\star$, yielding a PTAS reduction and concluding the proof of APX-hardness.
\end{proof}

\section{Missing Proofs in Section~\ref{sec:special-case-homogeneous}}

\subsection{Proof of Theorem \ref{thm:opt_char}}
\label{app:swsh_char}

Here, we provide the proof of Theorem~\ref{thm:opt_char}, characterizing the optimal platform matching in SWSH markets.

\SWSHChar*

We start with the following lemma, which places a constraint of the length of any cycles in the optimal solution.
\begin{lemma}
    There exists an optimal set of transactions that can be decomposed into chains and/or cycles. Furthermore, there exists an optimal solution where all such cycles are of length at most $3$.
\end{lemma}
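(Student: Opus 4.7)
The decomposition claim follows almost immediately from earlier results: Lemma~\ref{lem:at_most_one_edge} gives a revenue-optimal matching in which every platform edge transacts, and Proposition~\ref{prop:cycles_chains} says that in a SWSH market any set of transacting platform edges uniquely partitions the seller subgraphs into chains and cycles. So the real work is in the cycle-length bound.

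To bound cycle lengths, I would show that any cycle of length $k \geq 4$ can be replaced by a collection of cycles of length $2$ or $3$ without decreasing the platform's revenue. I would first pin down the revenue of a $k$-cycle $S_{i_1} \to S_{i_2} \to \cdots \to S_{i_k} \to S_{i_1}$ using Theorem~\ref{thm:oppo_path}. Because each buyer in a SWSH market has at most one world edge, any opportunity path emanating from a transacting largest-valued buyer in the cycle can leave its subgraph only through that subgraph's seller, so the path traces around the cycle and reaches exactly the $k$ largest-valued buyers, with values $v(S_{i_1}), \dots, v(S_{i_k})$. Consequently every seller in the cycle charges $v_{\min} := \min_{j} v(S_{i_j})$, and the $k$ platform edges of the cycle together generate revenue $k \cdot v_{\min}$.

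Because every integer $k \geq 2$ admits a decomposition $k = 2a + 3b$ with $a, b \in \mathbb{Z}_{\geq 0}$, I would partition the $k$ subgraphs into groups of sizes $2$ or $3$ and replace the original platform edges by new platform edges that form a small cycle on the largest-valued buyers in each group (these edges are always available, since in a SWSH market a buyer's only world edge goes to its own subgraph's seller). The same opportunity-path reasoning restricted to each new sub-cycle of length $\ell$ shows that every seller in it charges at least $v_{\min}$ and thus contributes at least $\ell \cdot v_{\min}$ to revenue, for a grand total of at least $k \cdot v_{\min}$. Because only the platform edges inside the original $k$-cycle are rewired, no other chain or cycle in the configuration is affected. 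Iterating this replacement on any remaining cycle of length $\geq 4$ terminates after finitely many steps and yields an optimal configuration in which every cycle has length at most~$3$.

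The main obstacle will be justifying that each newly constructed sub-cycle actually ``closes up'' in the competitive equilibrium, i.e., that all its new platform edges really transact (so the revenue calculation above is valid) and that no opportunity path leaves a sub-cycle to pull a price below the claimed $v_{\min}$. The first point relies on the paper's tie-breaking assumption in favor of the platform: inside each sub-cycle, the welfare-maximizing matching that uses all new platform edges ties with the all-world-edges matching (both assign each subgraph's largest-valued buyer to a seller, with the same total welfare), so the platform is free to select the platform-edge matching. The second point is again a consequence of each buyer having at most one world edge, which forces any opportunity path out of a subgraph to travel via the sellers that are used inside the same sub-cycle and therefore keeps the path confined to that sub-cycle.
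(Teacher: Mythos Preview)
Your approach is essentially the paper's: split any $k$-cycle with $k\ge 4$ into $2$- and $3$-cycles using $k=2a+3b$, and argue that revenue can only go up because opportunity paths shrink. You are more explicit than the paper in computing the cycle's revenue as $k\cdot v_{\min}$ via Theorem~\ref{thm:oppo_path}, which is fine.

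There is one omission. You assert that ``no other chain or cycle in the configuration is affected'' because only the edges inside the $k$-cycle are rewired. That is not quite true: by the definition of chains, a chain may be \emph{attached} to the cycle, meaning the terminal seller of the chain sells to a (non-largest) buyer in one of the cycle's subgraphs. The opportunity paths of buyers in that chain then travel forward through the chain and \emph{into} the cycle, so the prices along the chain depend on the cycle's buyers. After you split the $k$-cycle, those chain opportunity paths now terminate inside whichever sub-cycle contains the attachment subgraph; the minimum value along the path is a minimum over a subset of the old cycle's buyers and hence weakly larger. So the chain's revenue weakly increases, and the overall argument still goes through---but you need to say this. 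The paper handles it in one sentence (``any chain connected to the original cycle can now be attached to the same subgraph in this new split set of cycles -- again, there are fewer opportunity paths, so each seller in the chain also obtains weakly higher revenue''); you should add the analogous line.
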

\begin{proof}
    Suppose that we had a cycle of length at least $4$. Then we could always split this cycle into smaller cycles of length at most $3$ (note that we can represent every integer larger than $3$ as a sum of multiples of $2$ and $3$). Furthermore, these cycles can only result in weakly higher revenue among sellers included in the original cycle as we are decreasing the number of opportunity paths. Finally, any chain connected to the original cycle can now be attached to the same subgraph in this new split set of cycles – again, there are fewer opportunity paths, so each seller in the chain also obtains weakly higher revenue. 

     It follows that there always exists an optimal solution where all cycles have length at most $3$.
\end{proof}

 We make the following simple observation about the optimal configuration of such cycles and chains.

\begin{remark}
    The cycle on $k$ subgraphs that yields optimal revenue connects the highest buyer in each subgraph together. The optimal-revenue chain connects subgraphs from smallest to largest in terms of their max-value bidder. The chain and cycle that yields optimal revenue is constructed by attaching the optimal chain to the largest second-highest bidder in any subgraph in the optimal cycle.
\end{remark}

 In fact, we can also show that there exists an optimal solution with only a single chain.
\begin{lemma}
    There exists an optimal solution with only a single chain. Additionally, this optimal solution can still restrict the cycle length to at most $3$.
\end{lemma}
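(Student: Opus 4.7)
The plan is an exchange argument: starting from an optimal configuration that already has all cycles of length at most $3$ (by the previous lemma), if there are two or more chains we exhibit a modification that reduces the chain count by one without decreasing revenue. Iterating yields an optimal configuration with a single chain and cycles of length at most $3$, as required.

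Concretely, take two chains $C_1 = (s_0^1)\to S_1^1 \to \cdots \to S_{k_1}^1$ and $C_2 = (s_0^2)\to S_1^2 \to \cdots \to S_{k_2}^2$ and splice them into a single chain on the union of their subgraphs. The splicing rule depends on the terminal behaviors: I would re-route the terminal seller of $C_1$ to sell instead into the starting subgraph of $C_2$ via a new platform edge, and reassign any now-freed dangling seller or buyer (e.g., the original $s_0^2$ or the terminal buyer of $C_1$) to absorb leftover transactions. Different combinations of terminal types (does-not-sell, sells to a dangling buyer, sells within the terminal subgraph, or attaches to a cycle) give slightly different reassignments, but in each case the end result is a single chain covering the subgraphs of both $C_1$ and $C_2$. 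If the reassignment accidentally creates a longer cycle, we invoke the earlier lemma to break it back down into cycles of length at most $3$.

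The revenue comparison leverages the Opportunity Path Theorem: each seller's price equals the lowest buyer valuation on the transacting buyer's opportunity path. Merging two chains converts two terminal positions into one terminal plus one additional interior position, so the subgraph that was previously a terminal of one chain now contributes (approximately) its max-value buyer $v(S_j)$ as added revenue. The offsetting effect is that opportunity paths in the merged chain extend into the formerly-separate chain, potentially lowering some interior prices. The key step is to choose which chain to splice into which so that the chain whose terminal has the smaller governing valuation is attached ``upstream'' of the other; then the minima along the extended opportunity paths in the merged chain are bounded below by the smaller of the two original chains' governing terminal values, and the gain from the promoted interior subgraph weakly dominates the aggregated drop in interior prices.

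I expect the main obstacle to be this revenue accounting across the full case analysis of terminal types. Each combination of terminal behaviors for $(C_1, C_2)$ produces a slightly different opportunity path structure after splicing, and the net revenue change must be verified to be non-negative in every case. The proof will likely proceed by first establishing a clean ``contiguous'' version of each chain (so the subgraphs within each chain are already ordered by $v(S_i)$), reducing the case count, and then comparing the two-chain and one-chain revenue expressions term-by-term using the monotone structure of the opportunity paths.
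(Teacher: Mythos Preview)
Your approach is essentially the paper's: merge two chains into one and keep the cycle-length bound intact. The paper's proof, however, dispatches the whole argument in two sentences rather than a case analysis. Its single observation is this: consolidate the two chains into one chain and attach it to whichever of the two terminal cycles has the \emph{larger} minimum opportunity-path value. Because every seller's price equals the minimum buyer value along its downstream opportunity path into the attached cycle, replacing the worse terminal cycle by the better one can only weakly raise every chain seller's price. The structural feasibility (``this is doable because each seller subgraph has at least one seller and one buyer to be connected into a chain'') and the fact that no cycle is lengthened are then immediate.

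So your plan is correct in spirit but over-engineered. You do not need a taxonomy of terminal types (does-not-sell, dangling buyer, sells within the subgraph, attaches to a cycle), nor do you need to first pass to contiguous chains, nor do you need a term-by-term revenue comparison. The one design choice --- attach the merged chain to the cycle with the larger minimum opportunity path --- is exactly your ``put the chain with the smaller governing valuation upstream,'' and once you make that choice the monotonicity of prices along opportunity paths gives the weakly-higher-revenue claim for all sellers at once. Your anticipated ``main obstacle'' largely evaporates.
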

\begin{proof}
    Suppose we had two chains, each attached to a different cycle, or potentially attached to the same cycle. We could always consolidate these two chains into a single chain and attach it to the cycle that yields the largest minimum opportunity path – all sellers in the chain now receive weakly more revenue than they were before. This is doable because each seller subgraph has at least one seller and one buyer to be connected into a chain. Note that this doesn't affect the restriction on cycle length. 
\end{proof}
To conclude the discussion of the structure of chains in the optimal solution, we show that these chains must be contiguous and moreover, chains ``fill out" the rest of the subgraphs once started.
\begin{lemma}
    If $S_i$ is part of a chain in the optimal solution, then $S_j$ is also part of this chain, for $j > i$, provided that $S_j$ is not part of the cycle that the chain attaches to.
\end{lemma}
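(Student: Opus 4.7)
The plan is to argue by contradiction. Suppose that in some optimal configuration $S_i$ lies in the (unique) chain but $S_j$ with $j>i$, hence $v(S_j)\le v(S_i)$, is neither in this chain nor in the cycle the chain attaches to. By the preceding lemma there is only one chain, so $S_j$ must belong to some other cycle $C'$, whose length is $1$, $2$, or $3$ by the earlier cycle-length lemma. I will modify the configuration by relocating $S_j$ into the chain and show that the total revenue weakly increases, contradicting optimality (or producing an equivalent optimal configuration that satisfies the lemma).

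By the structural remark, the optimal chain connects its subgraphs in strictly increasing max-value order from start to terminal. Write the chain as $S_{a_1}\to S_{a_2}\to\cdots\to S_{a_k}$ with $v(S_{a_1})<v(S_{a_2})<\cdots<v(S_{a_k})$. Since $v(S_j)\le v(S_i)\le v(S_{a_k})$, there is a unique insertion position: either at the very start (when $v(S_j)\le v(S_{a_1})$) or between consecutive chain subgraphs $S_{a_{\ell-1}}, S_{a_\ell}$ with $v(S_{a_{\ell-1}})\le v(S_j)\le v(S_{a_\ell})$. In the modification, the seller $s_{a_{\ell-1}}$ (or the dangling $s_0$, if present and $\ell=1$) is redirected to sell to $b^\star_j$, the newly-inserted $s_j$ sells to $b^\star_{a_\ell}$, and $C'$ is disassembled: its remaining subgraphs reassemble into a smaller (possibly empty) cycle.

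I verify the revenue change via the opportunity-path theorem. Since opportunity paths along the chain propagate forward, the prices of all sellers strictly after the insertion point are unchanged. The seller just before the insertion point previously collected the minimum of the forward path from $b^\star_{a_\ell}$, namely $v(S_{a_\ell})$; after the modification it collects the minimum of the forward path from $b^\star_j$, namely $v(S_j)$. The newly-inserted $s_j$ now collects $v(S_{a_\ell})$ from $b^\star_{a_\ell}$, whose forward path is unchanged. Hence the chain's revenue increases by exactly $v(S_j)$ in the middle-insertion case, and by $v(S_{a_1})$ when inserting at the start without a dangling $s_0$. A case analysis on $|C'|$ then shows that the revenue lost by removing $S_j$ from $C'$ is at most $v(S_j)$ (for a $1$-cycle), $2v(S_j)-v(S_{j'})$ (for a $2$-cycle), or $3v(S_j)-2v(S_{j'})$ (for a $3$-cycle), where $v(S_{j'})$ denotes the smallest max-value among the remaining subgraphs of $C'$, and satisfies $v(S_{j'})\ge v(S_j)$ because $S_j$ has the smallest max-value in $C'$ among those subgraphs we could have chosen to move. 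In each subcase the resulting net change is nonnegative, giving the required contradiction.

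The main obstacle is the combined bookkeeping across the three possible sizes of $C'$, insertion at the start versus in the middle of the chain, and the presence or absence of a dangling starting seller. All subcases reduce to a simple linear inequality among the relevant max-values, which holds because $v(S_j)$ is a lower bound on $v(S_{a_1})$ (in the start-insertion case) and on $v(S_{j'})$ (for the remaining $C'$ subgraphs), so the lost revenue from disassembling $C'$ is always compensated by the revenue gained by lengthening the chain.
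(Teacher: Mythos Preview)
Your approach—moving $S_j$ out of its cycle $C'$ and inserting it alone into the chain—is different from the paper's, and it breaks down in the $2$-cycle case. In a $2$-cycle $\{S_j, S_{j'}\}$ with $v(S_j)\le v(S_{j'})$, both platform edges transact at price $\min(v(S_j),v(S_{j'}))=v(S_j)$, so the cycle contributes $2v(S_j)$ to platform revenue. After you remove $S_j$, the remaining $S_{j'}$ becomes a $1$-cycle, where $s_{j'}$ sells to a buyer in its own subgraph \emph{via a world edge}; this contributes $0$, not $v(S_{j'})$, to platform revenue. Hence the loss is $2v(S_j)$, not $2v(S_j)-v(S_{j'})$ as you claim. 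Since your chain gain is only $v(S_j)$, the net change is $-v(S_j)<0$, and the contradiction fails. (Your $3$-cycle formula is correct precisely because the residual $2$-cycle still consists of platform edges; the residual $1$-cycle does not.)

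The paper avoids this by arguing in the opposite direction. Rather than extracting $S_j$ alone, it first considers merging \emph{all} of $C'$ into the chain. If that does not decrease revenue, we already have an optimal configuration with $S_j$ in the chain. If it does, one infers that the minimum opportunity-path value inside $C'$ exceeds that of the cycle to which the chain attaches; in that case the paper moves $S_i$ (not $S_j$) into $C'$. Because $v(S_i)\ge v(S_j)$, adding $S_i$ does not lower any price in $C'$, while $s_i$'s price strictly rises, yielding the contradiction. Your strategy could be repaired in the $2$-cycle case by also absorbing $S_{j'}$ into the chain, but that is essentially the paper's ``merge the whole cycle'' step, and you would still owe an argument for how $S_{j'}$ slots into the chain's value ordering.
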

\begin{proof}
    Suppose otherwise. Consider the optimal solution with at most one chain. Since $S_j$ is not part of a chain, it must be part of a cycle. As this solution is optimal, it must be the case that adding this whole cycle to the chain decreases the overall revenue. That is, the minimum opportunity path in $S_j$'s cycle is larger than the minimum opportunity path in the cycle that $S_i$'s chain is attached to. 

     However, this implies that we could add $S_i$ to $S_j$'s cycle, increasing the revenue generated from the seller in $S_i$ without affecting the revenue from any of the sellers in $S_j$ (as $S_i$ has a larger max-value buyer than $S_j$), which is a contradiction.
\end{proof}
\begin{lemma}
    Suppose that there is a chain in the optimal solution that connects to a cycle $\mathcal{C}$, and let $S_{min, \mathcal{C}}$ be the smallest subgraph belonging to $\mathcal{C}$. Then there exists an optimal solution where no subgraph larger than $S_{min, \mathcal{C}}$ is part of the chain.
\end{lemma}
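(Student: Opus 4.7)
The plan is to use an exchange argument: repeatedly move the largest chain subgraph into the cycle (swapping it for $S_{min, \mathcal{C}}$) and show that each such swap strictly increases revenue. Hence if the chain contains any subgraph larger than $S_{min,\mathcal{C}}$, the solution cannot be optimal. Concretely, start from an optimal solution with a single contiguous chain attached to cycle $\mathcal{C}$ of length at most $3$, ordered ascending by top value as $U_1 \to U_2 \to \cdots \to U_t$ (the revenue-optimal ordering, by the earlier remark). Set $c := v(S_{min,\mathcal{C}})$ and suppose for contradiction that $v(U_t) > c$. The swapped configuration places $U_t$ into $\mathcal{C}$ in place of $S_{min,\mathcal{C}}$ and inserts $S_{min,\mathcal{C}}$ into the chain, which is then reordered ascending.

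Next I would use the Opportunity Path Theorem to express and compare revenues. Before the swap, top $U_i$ pays $\min(v(U_{i+1}), c)$ for $i<t$ and top $U_t$ pays $c$, so the chain contributes $\sum_{i=1}^{t-1} \min(v(U_{i+1}), c) + c$; each cycle buyer together with the chain's attachment buyer (a non-top buyer in some cycle subgraph) pays $c$, contributing $(|\mathcal{C}|+1)c$. After the swap the new cycle minimum becomes $\tilde c := \min\bigl(v(U_t),\, \min_{S \in \mathcal{C}\setminus\{S_{min,\mathcal{C}}\}} v(S)\bigr) > c$, so the cycle-side contribution grows to $(|\mathcal{C}|+1)\tilde c$, and the same chain formula applies with $c$ replaced by $\tilde c$ throughout and with the chain multiset now $\{v(U_1), \dots, v(U_{t-1}), c\}$. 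A term-by-term comparison using $\tilde c > c$ shows that the new chain contribution is at least the old plus $\tilde c - c$ (each unchanged entry's $\min$ weakly grows, the slot that was $\min(v(U_t), c) = c$ is replaced by $\min(c, \tilde c) = c$, and the constant ``$+c$'' term becomes ``$+\tilde c$''); combined with the gain $(|\mathcal{C}|+1)(\tilde c - c)$ on the cycle side, the net increase is at least $(|\mathcal{C}|+2)(\tilde c - c) > 0$, contradicting optimality.

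The main obstacle is verifying that the swapped configuration respects the structural assumptions used in the revenue calculation: a single contiguous chain, cycles of length at most $3$, and a valid chain-to-cycle attachment via a non-top buyer of some cycle subgraph. Reordering the new chain ascending may place $S_{min,\mathcal{C}}$ into the interior rather than at the end, so I would invoke the preceding contiguity lemma to re-establish contiguity without losing revenue; the cycle length $|\mathcal{C}|$ is unchanged by the swap; and the attachment remains feasible since the non-top buyers of cycle subgraphs other than $S_{min,\mathcal{C}}$ are untouched. If after one swap the chain still contains a subgraph exceeding the new cycle minimum, I would iterate. Each iteration strictly increases the cycle minimum while preserving the structure, and since there are only finitely many subgraphs, the process terminates at an optimal solution whose chain contains no subgraph larger than $S_{min,\mathcal{C}}$.
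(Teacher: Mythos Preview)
Your swap-based exchange argument is a different route than the paper takes, and as written it has real gaps.

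First, the strict-increase claim hinges on $\tilde c := \min\bigl(v(U_t),\min_{S\in\mathcal{C}\setminus\{S_{\min,\mathcal{C}}\}}v(S)\bigr)>c$, but this fails whenever two cycle subgraphs are tied at the minimum value $c$: removing one copy of the minimum leaves another, so $\tilde c=c$. Your net-gain bound $(|\mathcal{C}|+2)(\tilde c-c)$ then equals zero, and no contradiction with optimality arises. The lemma only asserts \emph{existence} of an optimal solution with the stated property, so a weak-improvement construction (iterate, noting that the number of chain subgraphs exceeding $c$ strictly drops each swap) would suffice; the contradiction framing does not. Your termination sentence (``each iteration strictly increases the cycle minimum'') is likewise unjustified in the tie case.

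Second, the revenue bookkeeping is off. By the Opportunity Path Theorem, the price that the top buyer of $U_i$ pays equals the minimum value along \emph{her own} opportunity path, which is $\min\bigl(v(U_i),\,v_{2nd}(S^*),\,c\bigr)$, not $\min(v(U_{i+1}),c)$; you have both an index shift and an omitted term $v_{2nd}(S^*)$ (the attachment buyer's value, which can be strictly below $c$). You also appear to count the attachment buyer's payment once in the chain sum and again in the ``$(|\mathcal{C}|+1)c$'' cycle-side term. These errors may cancel in some cases but not in general, and in particular the case where the original attachment subgraph $S^*$ is $S_{\min,\mathcal{C}}$ itself is not handled: after the swap you must re-attach elsewhere, and the best remaining non-top buyer can have a strictly smaller value, so your ``attachment remains feasible'' sentence does not establish that revenue is preserved.

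For comparison, the paper's argument is much simpler and avoids all of this: rather than swapping $S_{\min,\mathcal{C}}$ into the chain, it simply \emph{removes} the offending subgraphs from the chain. If there are two or more such subgraphs, connect them into their own cycle (their minimum opportunity path, being the smallest among values all exceeding $c$, is weakly larger than what they received in the chain); if there is exactly one, append it to $\mathcal{C}$ (this leaves $\mathcal{C}$'s minimum and the remaining chain unchanged and weakly raises that subgraph's seller price to $c$). If any resulting cycle exceeds length three, split it. This gives a weak improvement directly, with no swap, no re-attachment analysis, and no delicate chain reordering.
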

\begin{proof}
    Suppose otherwise. If there are two or more such subgraphs, then we could always connect them in a cycle, and their minimum opportunity path would be weakly larger than that obtained by placing them in the chain. If there is a single such subgraph, we could add it to $\mathcal{C}$ – note that this does not affect the minimum opportunity path of the chain or of the sellers in $\mathcal{C}$ and weakly increases the revenue obtained from this subgraph. If at any point during this process, we obtain a cycle with more than three subgraphs, we can simply split it up into smaller cycles, preserving our desired property.
\end{proof}

Note that this implies that once a subgraph is part of a chain, all smaller subgraphs (in terms of max bidder value) must also be part of this chain. We make one final observation regarding the structure of the optimal solution – namely, all cycles in the optimal solution can be made to be contiguous.
\begin{lemma}
    There exists an optimal solution where all cycles are contiguous.
\end{lemma}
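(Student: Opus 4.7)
The plan is an exchange argument. From the preceding lemmas and the earlier remark on optimal cycle construction, we may assume our optimal solution has at most one chain, which is contiguous and occupies the lowest-value subgraphs, and all cycles have length at most $3$ and pass through the largest buyer of each constituent subgraph. By Theorem~\ref{thm:oppo_path} each cycle $C$ then earns revenue $|C| \cdot v(S_{\max C})$, where $\max C$ denotes the largest index in $C$ under the ranking $v(S_1) \geq v(S_2) \geq \cdots \geq v(S_\ell)$. It therefore remains to show that the non-chain subgraphs, which form a prefix $\{S_1, \ldots, S_p\}$ of this ranking for some $p$, can be partitioned into contiguous cycles.

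I would prove the following central claim: one can choose a revenue-optimal solution (subject to the previous constraints) in which no two cycles $C, C'$ are \emph{interleaved}, meaning we never have $\max C < \max C'$ together with some $b' \in C'$ satisfying $b' < \max C$. Suppose for contradiction such $C, C'$, and $b'$ exist. Consider the swap that exchanges $\max C$ and $b'$ between the two cycles, producing $C_{\mathrm{new}} = (C \setminus \{\max C\}) \cup \{b'\}$ and $C'_{\mathrm{new}} = (C' \setminus \{b'\}) \cup \{\max C\}$. Since all elements of $C_{\mathrm{new}}$ are strictly less than $\max C$, the new maximum of $C$ strictly decreases, and by monotonicity of $v$ the revenue from $C$ weakly increases. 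Because $\max C < \max C'$ and $b' \neq \max C'$, the old maximum of $C'$ is preserved in $C'_{\mathrm{new}}$, so revenue from $C'$ is unchanged. Hence total revenue weakly increases while the potential $\Phi := \sum_{C} \max C$ strictly decreases; selecting the revenue-optimal solution that additionally minimizes $\Phi$ precludes any such interleaving.

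Granted the claim, order the cycles so that $\max C_1 < \max C_2 < \cdots < \max C_r$. Non-interleaving gives $C_i \subseteq (\max C_{i-1}, \max C_i]$ with $\max C_0 := 0$. Since the cycles partition $\{1, \ldots, p\}$ and $\max C_r = p$, summing sizes gives $\sum_i |C_i| = p = \sum_i (\max C_i - \max C_{i-1})$; combined with the termwise bounds $|C_i| \leq \max C_i - \max C_{i-1}$, each inequality becomes an equality, forcing $C_i = \{\max C_{i-1}+1, \ldots, \max C_i\}$. Each cycle is thus contiguous. The main subtlety is verifying that the swap preserves the assumption that every cycle passes through max-buyers (so that the revenue formula $|C| v(S_{\max C})$ still applies); this holds because we may simultaneously update the platform edges so that each new cycle visits the max-buyer of every subgraph it contains.
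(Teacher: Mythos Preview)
Your swap argument is clean, but the revenue formula $|C|\cdot v(S_{\max C})$ only captures the platform edges \emph{inside} the cycle. Total revenue also includes (i) the single chain, whose opportunity paths run into whichever cycle it is attached to, and (ii) the $0$-chains---non-largest buyers of each subgraph matched to dangling sellers---whose price is $\min\bigl(v_b,\,v(S_{\max C})\bigr)$ for the cycle $C$ containing their subgraph. Both depend on the cycle partition, and your swap can strictly decrease them. Concretely, take $v(S_1)=10,\ v(S_2)=3,\ v(S_3)=2,\ v(S_4)=1$ and give $S_3$ ninety additional buyers each of value~$1.9$ (balanced by dangling sellers so $n=m$). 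The interleaved partition $C=\{S_1,S_3\},\ C'=\{S_2,S_4\}$ earns $4+2+90\cdot 1.9=177$. Your swap produces $\{S_1,S_2\},\{S_3,S_4\}$, where the $S_3$ $0$-chain prices collapse from $1.9$ to $v(S_4)=1$, giving only $6+2+90=98$. So the claim ``total revenue weakly increases'' is false, and the potential-minimizing selection does not rescue it.

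The paper's proof works instead with the adjacent pair $S_m,S_{m-1}$ (where $S_m$ is the smallest subgraph in any non-contiguous cycle) and branches on whether the second-highest bidder in $S_{m-1}$ dominates $v(S_m)$---exactly the quantity that controls chain and $0$-chain revenue routed through $S_{m-1}$. In one branch $S_m$ becomes a chain attached at $S_{m-1}$; in the other $S_m,S_{m-1}$ are paired in a $2$-cycle and the remaining subgraphs recombined. Either way the minimum value seen by each remaining subgraph's opportunity paths is preserved or raised, so the auxiliary revenue sources cannot drop. Your global swap between $\max C$ and an arbitrary $b'$ has no such guarantee; to repair the argument you would need to track these externalities, which effectively forces the local, case-based exchange the paper uses.
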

\begin{proof}
    Suppose we have an optimal solution satisfying all the above properties where at least one cycle is non-contiguous. Let $S_m$ be the smallest subgraph that belongs to a non-contiguous cycle (clearly $S_m \neq S_1$).

     Firstly, if $S_m$ is a part of a three-cycle $S_a - S_b -S_m$. If $a<b-1$, then we can connect $S_b - S_m$ in a two cycle. By Lemma 4.5, $S_{b-1}$ cannot be in the chain otherwise $S_b$ and $S_m$ would be in the chain. Thus, $S_{b-1}$ is in a cycle. We attach $S_a$ to the cycle that $S_{b-1}$ is in. This weakly increases $S_a$'s revenue because $S_{b-1}$ only connects to larger seller subgraphs.    

     Now we can focus on non-contiguous two-cycles $S_a - S_m$ and three-cycles $S_a - S_{a+1} - S_m$.
    Again, by Lemma 4.5, $S_{m-1}$ cannot be a part of the chain otherwise $S_m$ would have been in the chain as well.
    There are two cases. Suppose that the second highest bidder in $S_{m-1}$ has a weakly higher valuation than the highest bidder in $S_m$. Consider the following alternative configuration. Connect $S_m$ to $S_{m-1}$ as a chain. 
    (If a chain is already connected to $S_{m-1}$, add $S_m$ to this chain.)
    Combine the two cycles that $S_{m-1}, S_m$ belonged to (save for $S_m$, works also if $S_{m-1}$ is in a 1-cycle.) Additionally, keep any existing chains that were connected to any of the subgraphs involved in the two cycles. 
    
     Note that all subgraphs that were previously connected to $S_m$ via a cycle now have weakly increased revenue – the minimum opportunity path in the new cycle is now $S_{m-1}$ rather than $S_m$. All subgraphs that were previously connected to $S_{m-1}$ have the same revenue; all such subgraphs still have $S_{m-1}$ as their smallest opportunity path within the cycle. It is easy to verify that all chains also have weakly higher revenue. 

     Now, consider the second case where the second highest buyer in $S_{m-1}$ has a smaller valuation than the highest bidder in $S_m$, or $S_{m-1}$ does not have a second world buyer. There are now a couple of possible cases:
    \begin{itemize}
        \item Two or more subgraphs were previously connected to $S_m, S_{m-1}$. Connect $S_m, S_{m-1}$ in a two-cycle. In this case, connect all the subgraphs that were previously connected to $S_m, S_{m-1}$ together into a single cycle. Note that we lose $v(S_{m-1}) - v(S_m)$ in revenue from connecting $S_m$ and $S_{m-1}$ together, but we gain at least $v(S_{m-2}) - v(S_m)$ from the other subgraphs.
        \item $S_{m}$ belonged to a 2-cycle and $S_{m-1}$ belonged to a 1-cycle. If the other subgraph in the $S_m$ cycle was $S_{m-2}$, then connect $S_{m-2}, S_{m-1}$ together and leave $S_m$ as a 1-cycle. Note that we get weakly more revenue from $S_{m-2}$ in this case, and a 1-cycle with $S_m$ is less costly than a 1-cycle with $S_{m-1}$.
        \item $S_{m}$ belonged to a 2-cycle and $S_{m-1}$ belonged to a 1-cycle. Connect $S_m, S_{m-1}$ in a two-cycle. If the other subgraph in the $S_m$ cycle was not $S_{m-2}$, connect the other subgraph in $S_m$'s cycle to the cycle that $S_{m-2}$ is currently a part of. Note that we gain $v(S_m)$ from connecting $S_m$ and $S_{m-1}$ together, and at least $v(S_{m-2}) - v(S_m)$ from the other subgraphs.
    \end{itemize}
     In all the above cases, we keep all existing chains as they are – they are weakly better off under this new configuration because the second highest buyer in $S_{m-1}$ is smaller than $v(S_m)$. We continue this process until there are no non-contiguous cycles, at each step weakly increasing our revenue. If at step $t$, the smallest subgraph belonging to a non-contiguous cycle is $S_m$, then at step $t+1$, the process ensures that the smallest possible subgraph belonging to a non-contiguous cycle is $S_{m-1}$, meaning that this process must terminate.
\end{proof}

Combining the above lemmas, we have proved Theorem \ref{thm:opt_char}, which we restate below for completeness.

\SWSHChar*

\subsection{Proof of Theorem \ref{thm:swsh_reduction}}
\label{app:swsh_reduction}

Here, we complete the proof of Theorem~\ref{thm:swsh_reduction}, which states that there exists a polynomial-time algorithm to maximize the platform's revenue in general SWSH markets.

\SWSHReduction*

We prove the above theorem through two lemmas, addressing the case where $n > m$ and $n < m$ respectively. Note that the case in which $n = m$ is addressed by Theorem~\ref{thm: poly_AMOS}.

\begin{lemma}\label{lem:hom_sellers_transact_with_largest_buyers}
    When $n=|B|>|S|=m$, sort buyers by their valuation, and denote the set of the $m$ largest buyers by $B_m=\{b_1,b_2,...,b_m\}$. Then perform the revenue-optimal matching on $S, B_m$ while ignoring $B\setminus B_m$. This is revenue optimal.
\end{lemma}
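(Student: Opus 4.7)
The plan is to combine \cref{thm:hom_poa}, \cref{lem:at_most_one_edge}, and \cref{thm:oppo_path} to show that buyers in $B\setminus B_m$ do not affect the competitive prices induced by any revenue-optimal matching, and hence can be discarded without loss. Together with \cref{thm: poly_AMOS} for the equal-sized case, this will give the lemma.

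First, I will argue that any revenue-optimal set of platform edges $E_p^*$ on the full market may be taken to satisfy $E_p^* \subseteq B_m \times S$ and to induce exactly $B_m$ as the set of transacting buyers, with every seller transacting as well. By \cref{thm:hom_poa}, in homogeneous-goods markets the price of revenue maximization is $1$, so any revenue-optimal matching is welfare-optimal; in a homogeneous-goods market, welfare optimality is equivalent to having exactly the top $\min(n,m)=m$ buyers (namely $B_m$) transact, and hence all $m$ sellers as well. Combined with \cref{lem:at_most_one_edge} (which lets us assume every platform edge transacts), every edge in $E_p^*$ must be incident to a transacting buyer, giving $E_p^* \subseteq B_m \times S$.

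Next, I will invoke \cref{thm:oppo_path} to show that the competitive prices of $E_p^*$ are determined entirely by the world and platform edges within $B_m \times S$, so discarding $B\setminus B_m$ (and their world edges) leaves prices, and hence revenue, unchanged. Each transacting edge's price equals the minimum valuation over an opportunity path from its transacting buyer, where the path alternates between a transacting buyer's non-transacting world edge and the next seller's transacting partner. Since every transacting buyer lies in $B_m$, every internal buyer along any opportunity path also lies in $B_m$; since every seller transacts, no path can end at a non-transacting seller. Hence opportunity paths and the minima over them depend only on edges among $B_m$ and $S$. It follows that applying $E_p^*$ to the restricted market yields the same revenue, giving $\mathrm{Rev}^*_{\mathrm{restr}} \ge \mathrm{Rev}^*_{\mathrm{full}}$. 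Conversely, for any revenue-optimal $\tilde E_p^*$ on the restricted market, \cref{thm:hom_poa} applied to the restricted market shows $B_m$ is its transacting set; because buyers in $B \setminus B_m$ have (weakly) smaller values than any buyer in $B_m$, the max-weight matching on the full market with $\tilde E_p^*$ can still be chosen to have $B_m$ transact, and the opportunity-path argument again gives identical prices, so $\mathrm{Rev}^*_{\mathrm{full}} \ge \mathrm{Rev}^*_{\mathrm{restr}}$.

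Having established equality of the two optima, \cref{thm: poly_AMOS} applied to the restricted market (where $|B_m|=|S|=m$) yields a polynomial-time revenue-optimal matching that is also revenue-optimal on the full market. The main subtlety will be carefully justifying that $B\setminus B_m$ buyers never appear on any opportunity path of interest: this holds because only transacting buyers appear as internal vertices on these paths, and by welfare optimality of revenue-optimal matchings (\cref{thm:hom_poa}) only buyers in $B_m$ transact, so the low-value buyers in $B \setminus B_m$ are inert for pricing purposes.
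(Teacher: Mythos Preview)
Your proposal is correct and takes a different route from the paper. To show that some revenue-optimal configuration has exactly $B_m$ transacting, the paper uses an elementary swapping argument: it first invokes \cref{lem:all_transact} so that all sellers trade, and then, as long as some buyer in $B\setminus B_m$ trades, it takes the smallest-value trading buyer $b_{\min}$ and the largest-value non-trading buyer $b_{\max}$, adds a platform edge from $b_{\max}$ to $b_{\min}$'s seller, and argues that every seller's price weakly increases; iterating yields a revenue-optimal configuration in which only $B_m$ trades. You instead invoke \cref{thm:hom_poa} to conclude in one stroke that any revenue-optimal matching is welfare-optimal and hence has the top $m$ buyers transacting, and then use \cref{lem:at_most_one_edge} to force $E_p^*\subseteq B_m\times S$. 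Both proofs finish with the same observation, which the paper states only in its last sentence: non-transacting buyers never appear on opportunity paths, so removing $B\setminus B_m$ and their world edges leaves all competitive prices (and hence revenue) unchanged. Your argument is shorter and also makes both directions $\mathrm{Rev}^*_{\mathrm{full}}\ge\mathrm{Rev}^*_{\mathrm{restr}}$ and $\mathrm{Rev}^*_{\mathrm{restr}}\ge\mathrm{Rev}^*_{\mathrm{full}}$ explicit, whereas the paper only argues one direction and leaves the other implicit. The cost is a forward reference: \cref{thm:hom_poa} is stated and proved in Section~6, while this lemma supports a Section~4 result; there is no logical circularity since \cref{thm:hom_poa} is proved independently of anything here, but the paper's self-contained swap keeps the sections in order.
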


\begin{proof}
    By Lemma \ref{lem:all_transact}, there exists a revenue-optimal transaction where all sellers trade. Otherwise, all sellers connecting to this non-trading seller through an opportunity path have a price of zero. By adding an edge between this seller and a non-trading buyer (guaranteed to exist as $n > m$), other sellers' prices weakly increase.
    
     Further, no buyer in $B\setminus B_m$ trades. Suppose otherwise and take any configuration where a buyer in $B \setminus B_m$ trades. Let $b_{min}$ be the smallest trading buyer, and let $b_{max}$ be the largest non-trading buyer, noting that $v(b_{min}) < v(b_{max})$. Suppose that $b_{min}$ currently transacts with seller $s$. 

     As the market clears according to the maximum weight matching, it must be that $s$ is not connected to $b_{max}$ – otherwise, matching them would yield a strictly higher welfare matching. Thus, we can add a platform edge from $s$ to $b_{max}$, ensuring that they transact in the new matching. Note that $b_{min}$ no longer transacts as it was the minimum value transacting buyer, and thus the maximum welfare matching can no longer include it. 
    
     We claim that all sellers' prices weakly increase. The only modified opportunity paths that previously existed are those which previously went through $b_{min}$ – as previously mentioned, there are no opportunity paths that terminate at a non-trading seller, so this was the minimum possible opportunity path. Now, no opportunity paths terminate at $b_{min}$ (as they no longer trade), meaning that the minimum opportunity path weakly increased (in the event of a tie with $b_{min}$). 

     We can continue this process until no buyers in $B \setminus B_m$ trade. Buyers not trading will not be in any opportunity path hence not affect the characterization results nor optimality of the procedure in Theorem~\ref{thm: poly_AMOS}.
\end{proof}

\begin{lemma}
\label{lemma:discard_dangling_sellers}
    When $n = |B| < |S| = m$, discard $m - n$ of the sellers who do not have world edges, and perform the revenue-optimal matching on the induced graph. This is revenue optimal.
\end{lemma}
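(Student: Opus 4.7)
The plan is to prove $R^*(G) = R^*(G \setminus D')$ for an arbitrary choice $D'$ of $m - n$ dangling sellers to discard; such a $D'$ exists because in SWSH each buyer has at most one world edge, forcing $d \geq m - n$. The direction $R^*(G \setminus D') \leq R^*(G)$ is immediate, since any platform matching on the reduced graph is also valid on $G$ with the same revenue. The reverse direction will be established by transforming a revenue-optimal matching on $G$ into one that avoids every seller in $D'$.

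Starting from a revenue-optimal matching on $G$, I would invoke Lemmas~\ref{lem:at_most_one_edge} and~\ref{lem:all_transact} to assume WLOG that each node is incident to at most one platform edge, every platform edge transacts at a positive price, and every buyer transacts (the last because $n<m$ rules out the ``all sellers sell'' branch of Lemma~\ref{lem:all_transact}; value-zero buyers can still trade via world edges at price zero, which is unaffected by Lemma~\ref{lem:at_most_one_edge}). The pivotal structural claim is that in such a matching every non-dangling seller must transact. Suppose for contradiction a non-dangling seller $s^{**}$ does not transact, and let $b^{**}$ be a world neighbor of $s^{**}$ (which exists because $s^{**}$ is non-dangling). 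Under SWSH, $b^{**}$'s unique world edge is to $s^{**}$; since $s^{**}$ is non-transacting, $b^{**}$ cannot trade via that world edge, and so must trade via some platform edge $(b^{**}, s^{\text{trade}})$. But then the non-transacting world edge $b^{**}\mbox{ --- }s^{**}$ exhibits an opportunity path from $b^{**}$ ending at the non-selling $s^{**}$, and Theorem~\ref{thm:oppo_path} then forces $p_{s^{\text{trade}}}=0$, contradicting Lemma~\ref{lem:at_most_one_edge}.

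Given this claim, exactly $m-d$ non-dangling sellers transact, and so $n-(m-d)=d+n-m$ dangling sellers transact, leaving precisely $m-n$ dangling sellers non-transacting. Because dangling sellers have no world edges, they never appear on any opportunity path and act as interchangeable transaction endpoints: swapping the identity of a transacting dangling seller with that of a non-transacting one inside the platform edges preserves every equilibrium price and hence the platform's revenue. We may therefore assume the $m-n$ non-transacting dangling sellers coincide with $D'$. Each seller in $D'$ then has neither world edges (being dangling) nor platform edges (by Lemma~\ref{lem:at_most_one_edge}), so $D'$ is a set of isolated vertices whose removal leaves the matching and all prices unchanged, yielding $R^*(G \setminus D') \geq R^*(G)$. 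The resulting reduced graph has $n$ buyers and $n$ sellers, to which Theorem~\ref{thm: poly_AMOS} applies in polynomial time. The main obstacle is the structural claim, and SWSH is used essentially through the at-most-one world edge per buyer, which is what turns a non-transacting non-dangling seller into a platform edge forced to have price zero.
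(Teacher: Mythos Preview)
Your proof is correct and follows essentially the same approach as the paper: both argue that in a suitably chosen revenue-optimal matching every non-dangling seller must transact, via the observation that a buyer whose sole world seller fails to sell would have an opportunity path forcing its transaction price to zero. The paper phrases this as removing that zero-price platform edge without loss, whereas you invoke Lemma~\ref{lem:at_most_one_edge} to obtain a direct contradiction, and you additionally spell out details (existence of $m-n$ dangling sellers, their interchangeability, the easy inequality $R^*(G\setminus D')\le R^*(G)$) that the paper leaves implicit.
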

\begin{proof}
    By Lemma \ref{lem:all_transact}, note that there exists a revenue optimal matching in which all buyers transact. Suppose that there is a seller $s$ who has at least one adjacent world edge who does not transact. Consider any buyer $b$ who is in seller $s$'s subgraph. 

     Clearly, buyer $b$ must transact with some seller $s'$ via a platform edge (as they do not transact with the seller in their subgraph). Consider eliminating this platform edge $(b, s')$. No revenue is lost from $b$ as they had an opportunity path to $s$, who did not transact, so they paid price $0$. Secondly, any sellers who previously had opportunity paths to $b$ also received price $0$, so removing this edge does not decrease their price either. 

     It follows that there exists a revenue optimal matching where all sellers with an adjacent platform edge transact, which implies that we can eliminate $m-n$ sellers with no adjacent world edges, as they would not transact in this revenue optimal matching regardless.
\end{proof}

Together, Lemmas~\ref{lem:hom_sellers_transact_with_largest_buyers} and \ref{lemma:discard_dangling_sellers} show that general SWSH markets can be reduced to the case in which $n = m$. Combining this with Theorem~\ref{thm: poly_AMOS} concludes the proof of Theorem~\ref{thm:swsh_reduction}.

\section{Missing Proofs in Section~\ref{sec:special-case-identical}}
\subsection{Proof of Lemma \ref{lemma:identity_char}}
\label{app:hv_identity}

Here, we prove Lemma~\ref{lemma:identity_char}, connecting the maximum revenue attainable in an identity-good market to the problem of finding buyer sets of non-positive surplus.

\IdentityChar*

We begin by first proving the following lemma, which characterizes the revenue attainable between a particular set of buyers and a set of sellers.

\begin{lemma}\label{cor:price_k_hall_violator}
    Given $k$ buyers and sellers $B_k=\{b_1,b_2,...,b_k\}, S_k=\{s_1,s_2,...,s_k\}$ where $B_k$ and $S_k$ have a perfect matching through only platform edges, total revenue $k$ can be attained if and only if $B_k$ belongs to a Hall violator $B_{vk}$ for graph $G^{-k}=(S\setminus S_k \cup B, E)$ of deficiency at least $k$.
\end{lemma}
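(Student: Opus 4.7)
\textbf{Proof plan for Lemma~\ref{cor:price_k_hall_violator}.} The plan is to invoke Theorem~\ref{thm:oppo_path} (Opportunity Path Theorem) to convert the revenue condition into a purely structural statement, then recognize the structural statement as a Hall-type deficiency condition on the world graph $G^{-k}$. Since we are in an SHGB market, there is a single common valuation $c$, so each of the $k$ platform transactions has price exactly $c$ or $0$. Setting $c=1$ for convenience, ``total revenue $k$ is attained'' is equivalent to saying that every one of the $k$ platform edges $(b_i,s_i)$ transacts at price $1$; by Theorem~\ref{thm:oppo_path} this is in turn equivalent to the condition that no $b_i \in B_k$ is connected, via an opportunity path in the chosen equilibrium of $G_p$, to an unsold seller.

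For the ($\Rightarrow$) direction, assume all $k$ platform edges transact at price $1$. Define $B_{vk}$ to be the closure of $B_k$ under opportunity-path reachability, i.e.\ $B_k$ together with every buyer reachable from some $b_i \in B_k$ by alternating between non-transacting world edges and transacting edges. I will show that every seller $s \in N_{G^{-k}}(B_{vk})$ must transact with some buyer in $B_{vk}$: either $s$ transacts with the very buyer $b \in B_{vk}$ that witnesses $s \in N_{G^{-k}}(B_{vk})$ (so $s$'s partner lies in $B_{vk}$), or $s$ does not transact with $b$, in which case the edge $b \mbox{---} s$ is a non-transacting world edge; the price-$1$ hypothesis then forces $s$ to transact, and its partner $b'$ extends the opportunity path, placing $b' \in B_{vk}$. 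Thus $|N_{G^{-k}}(B_{vk})|$ is bounded above by the number of buyers in $B_{vk}$ that transact with sellers in $S \setminus S_k$. Since the $k$ buyers of $B_k \subseteq B_{vk}$ transact with $S_k$ via their platform edges (and so do not count), we obtain $|N_{G^{-k}}(B_{vk})| \leq |B_{vk}| - k$, which is exactly deficiency at least $k$.

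For the ($\Leftarrow$) direction, assume $B_k \subseteq B_{vk}$ with $|N_{G^{-k}}(B_{vk})| \le |B_{vk}| - k$, and add only the $k$ platform edges forming the perfect matching $B_k \to S_k$. I need to exhibit a competitive equilibrium in $G_p$ in which all $k$ platform edges transact at price $1$. Observe that $N_{G_p}(B_{vk}) \subseteq N_{G^{-k}}(B_{vk}) \cup S_k$, so $|N_{G_p}(B_{vk})| \le |B_{vk}|$, and because the only edges between $B_{vk}$ and $S_k$ in $G_p$ are the $k$ platform edges perfectly matching $B_k \to S_k$, any maximum matching of $G_p$ that saturates as many vertices of $B_{vk}$ as possible must include the entire platform matching $B_k \to S_k$, while the remaining buyers of $B_{vk} \setminus B_k$ are matched within $N_{G^{-k}}(B_{vk})$ (invoking the Hall/K\"onig machinery together with Lemma~\ref{lem:at_most_one_edge} and Lemma~\ref{lem:all_transact} to break ties in the platform's favor). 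In this equilibrium, every opportunity path from a $b_i \in B_k$ stays inside the bipartite set $B_{vk} \cup N_{G^{-k}}(B_{vk}) \cup S_k$; the deficiency bound forces every seller reached in $N_{G^{-k}}(B_{vk})$ to be matched (otherwise one could augment and contradict maximality of the matching), so no opportunity path from $b_i$ terminates at an unsold seller. Hence every platform edge transacts at price $1$, yielding revenue $k$.

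\textbf{Main obstacle.} The delicate part is the ($\Leftarrow$) direction: we must guarantee both that a max-weight matching in $G_p$ uses all $k$ platform edges \emph{and} that the induced opportunity paths never escape $N_{G^{-k}}(B_{vk})$. The first is where the deficiency $\geq k$ bound is crucial (it rules out world-only saturations of $B_{vk}$ that would leave some $s_i \in S_k$ unsold and force the corresponding platform edge out of the matching), while the second needs an augmenting-path argument showing that every seller in $N_{G^{-k}}(B_{vk})$ is saturated. Handling the edge cases where $B_{vk}$ has buyers with world edges to $S_k$, and choosing the correct maximum matching among several possibilities using the platform's tie-breaking power, is the remaining technical hurdle.
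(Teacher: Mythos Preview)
Your ($\Rightarrow$) direction matches the paper's: defining $B_{vk}$ as the opportunity-path closure of $B_k$ and counting is exactly what the paper does.

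Your ($\Leftarrow$) direction has a genuine gap. You commit to adding \emph{only} the $k$ platform edges $B_k\to S_k$ and then assert that some maximum matching of $G_p$ must include all of them. This is false in general. Consider $B_k=\{b_1\}$, $S_k=\{s_1\}$, world edges $b_1\mbox{--}s_2$, $b_2\mbox{--}s_1$, $b_2\mbox{--}s_3$, $b_3\mbox{--}s_3$, and take $B_{vk}=\{b_1,b_2,b_3\}$. Then $N_{G^{-1}}(B_{vk})=\{s_2,s_3\}$, so the deficiency is $3-2=1$ as required. After adding only the platform edge $(b_1,s_1)$, the \emph{unique} maximum matching of $G_p$ is $\{b_1\mbox{--}s_2,\ b_2\mbox{--}s_1,\ b_3\mbox{--}s_3\}$ (size $3$); any matching containing $(b_1,s_1)$ has size at most $2$ because $s_2$'s only neighbor is $b_1$. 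So the platform edge cannot transact, there are no ties to break, and the approach fails precisely in the edge case you flag (a buyer in $B_{vk}$ with a world edge into $S_k$).

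The paper's ($\Leftarrow$) argument avoids this by not restricting to the $k$ given edges: it uses the deficiency bound $|B_{vk}\setminus B_k|\ge |N_{G^{-k}}(B_{vk})|$ to add \emph{additional} platform edges matching all of $N_{G^{-k}}(B_{vk})$ into $B_{vk}\setminus B_k$. With those extra edges the intended allocation (each $b_i$ with $s_i$, every seller of $N_{G^{-k}}(B_{vk})$ with a buyer of $B_{vk}$) is itself a maximum-cardinality matching, so the platform can select it; then every opportunity path from $B_k$ stays inside $B_{vk}$ and meets only transacting sellers, giving price $1$ on each platform edge. Your plan becomes correct once you allow these extra platform edges; without them it does not go through.
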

\begin{proof}
Assume that $B_k$ belongs to a Hall violator $B_{vk}$ for graph $G^{-k}=(S\setminus S_k \cup B, E)$ of deficiency at least $k$. Since $B_{vk}$ has deficiency at least $k$, there exists a set of platform edges such that in the final maximum weight matching, $S_k$ transacts with $B_k$ and all sellers in $N(B_{vk})$ transact with buyers in $B_{vk}$. In this case, all of the buyers in $B_k$ only have opportunity paths to those sellers in $N(B_{vk})$, who all transact, so the platform obtains revenue $1$ from all these buyers, yielding revenue $k$ in total.

Now, suppose that total revenue $k$ can be attained from $B_k$ and $S_k$. Let $G_P^{-k}$ denote the corresponding platform graph, with the set of sellers $S_k$ removed. Let $B_{vk}$ be the set of all buyers on an opportunity path from any of the buyers in $B_k$. Note that all the sellers in $N_{G_P^{-k}}(B_{vk})$ must transact as otherwise at least one buyer in $B_k$ would transact at price zero, thus yielding less than $k$ revenue in total. Additionally, all these sellers must transact with a buyer in $B_{vk}$ as they are reachable via an opportunity path from $B_k$. Furthermore, all the sellers in $B_k$ transact with $S_k$. Thus, we have that $|B_{vk}| \geq |N_{G_P^{-k}}(B_{vk})| + k$. By Lemma \ref{lem:all_transact}, all platform edges transact, so it is also the case that $|B_{vk}| \geq |N_{G^{-k}}(B_{vk})| + k$, concluding the proof.
\end{proof}

Given $B_k, S_k$, finding such a Hall violator is equivalent to the max difference hall violator problem defined in Definition~\ref{def:max_dif_hall_violator}. The question for a revenue-optimizing platform is then to pick the sets $B_k, S_k$. The platform can first identify the buyers as follow.

\begin{corollary}\label{cor:price_k_hall_violator_buyer}
    The platform can extract revenue $k$ from a set of buyers $B_k$ if and only if there exists a buyer group $B_v$ where $B_k\subseteq B_v$ that satisfies the following two conditions. Let $E^{\neg}$ be the complement edges of the world graph, let $k_v\leq k$ be the value of maximum matching for $(B_k, N(B_v), E^{\neg})$.
    \begin{enumerate}
        \item Enough sellers to match with $B_k$ buyers $|S|-|N(B_v)|\geq k-k_v$
        \item $B_v$ has deficiency at least $k-k_v$
    \end{enumerate}
\end{corollary}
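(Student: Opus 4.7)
The plan is to derive the corollary directly from Lemma~\ref{cor:price_k_hall_violator} by taking an appropriate parameter sweep over how the sellers $S_k$ (which are matched to $B_k$ via platform edges) split across the neighborhood $N(B_v)$ of a candidate Hall violator. The key observation is that the deficiency condition in Lemma~\ref{cor:price_k_hall_violator} is stated in the reduced graph $G^{-k}$ (world graph with $S_k$ removed), whereas the corollary states its condition in the original world graph via $N(B_v)$; the gap between these two quantities is exactly $|S_k \cap N(B_v)|$.

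For the forward direction, I would assume the platform extracts revenue $k$ from $B_k$ through some choice of sellers $S_k$ with $|S_k|=k$, matched perfectly to $B_k$ via platform edges. Lemma~\ref{cor:price_k_hall_violator} gives a set $B_{vk} \supseteq B_k$ with $|B_{vk}| - |N_{G^{-k}}(B_{vk})| \geq k$. Set $B_v := B_{vk}$ and let $j := |S_k \cap N(B_v)|$. Since $N_{G^{-k}}(B_v) = N(B_v) \setminus S_k$, the inequality rearranges to $|B_v| - |N(B_v)| \geq k - j$, which is deficiency (at least) $k-j$ in the original world graph. Because $S_k \setminus N(B_v)$ has size $k-j$ and is contained in $S$, we also get $|S| - |N(B_v)| \geq k - j$. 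Finally, since $S_k \cap N(B_v)$ is matched to a subset of $B_k$ via platform edges (i.e.\ edges in $E^{\neg}$), this witnesses a matching of size $j$ between $B_k$ and $N(B_v)$ in $E^{\neg}$, so $j \leq k_v$. Both displayed conditions then follow from $j \leq k_v$.

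For the backward direction, given $B_v \supseteq B_k$ satisfying the two conditions with maximum-matching value $k_v$, I would construct $S_k$ explicitly. Take a maximum matching $M_1$ of size $k_v$ between $B_k$ and $N(B_v)$ through platform edges (i.e.\ $E^{\neg}$), and let $S_1 \subseteq N(B_v)$ be its seller side. The remaining $k - k_v$ buyers in $B_k$ are matched to an arbitrary set $S_2 \subseteq S \setminus N(B_v)$ of size $k - k_v$, which exists by condition~(1); these edges are automatically in $E^{\neg}$ because $S_2 \cap N(B_v) = \emptyset$. Set $S_k := S_1 \cup S_2$. Then $N_{G^{-k}}(B_v) = N(B_v) \setminus S_k = N(B_v) \setminus S_1$, which has size $|N(B_v)| - k_v$, so the deficiency of $B_v$ in $G^{-k}$ is $|B_v| - |N(B_v)| + k_v \geq (k - k_v) + k_v = k$ by condition~(2). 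Lemma~\ref{cor:price_k_hall_violator} then yields revenue $k$ from $B_k$ via the chosen $S_k$.

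The main subtlety I would watch out for is the interaction between the \emph{maximum} matching value $k_v$ and the value $j$ achieved by the specific $S_k$ used by the platform in the forward direction: the inequality only goes one way ($j \leq k_v$), so we have to argue that the conditions with the larger quantity $k_v$ still follow from what is known with $j$. This is immediate here because both conditions are monotone in $k - j$ (smaller $k - j$ is easier), so replacing $j$ with $k_v$ only weakens the required bound. A second minor issue is ensuring that in the backward direction the constructed $S_k$ has exactly size $k$ and that the two pieces $S_1, S_2$ are disjoint, but this follows because $S_2 \subseteq S \setminus N(B_v)$ while $S_1 \subseteq N(B_v)$. No other obstacle appears.
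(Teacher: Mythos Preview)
Your proposal is correct and follows essentially the same approach as the paper: both directions invoke Lemma~\ref{cor:price_k_hall_violator}, with the backward direction constructing $S_k$ by splitting it into a piece inside $N(B_v)$ (via the maximum $E^{\neg}$-matching) and a piece outside, and the forward direction tracking how many of the chosen $S_k$ land in $N(B_v)$ to translate the deficiency bound from $G^{-k}$ back to the world graph. Your write-up is in fact somewhat more explicit than the paper's, particularly in flagging and handling the monotonicity in $j \leq k_v$ and the disjointness of $S_1, S_2$.
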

\begin{proof}
    If direction. Match the $k_v$ sellers in $N(B_v)$ through platform edges to $k_v$ buyers in $B_v$. For the rest of the $k-k_v$ buyers, match them with arbitrary sellers outside of $N(B_v)$. Denote the union of these two sets of sellers by $S_k$. Then $B_v$ is a Hall violator for graph $G^{-k}=(S\setminus S_k\cap B,E)$ of deficiency at least $k$. This satisfies Lemma~\ref{cor:price_k_hall_violator}. 
    
    Only if direction. Given a seller set $S_k$ that sells to $B_k$, let $B_v$ be the set of buyers reachable through opportunity paths from $B_k$, unioned with $B_k$ itself. $B_v$ satisfies the first criterion: at most $k_v$ buyers from $B_k$ can be matched with sellers from $N(B_v)$, the rest of the sellers are from $S\setminus N(B_v)$. All sellers in $N(B_v)$ transact with a unique buyer in $B_v$. If only $k'$ such sellers transact with $k'$ buyers in $B_k$, the rest of the $k-k'$ buyers must transact with sellers in $S\setminus N(B_v)$. Thus, $|B_{v}|=|N(B_{v})|+(k-k')$ and $B_v$ has deficiency $k-k'$. As $k'\leq k_v$, $B_v$ has deficiency weakly larger than $k-k_v$ in the platform graph. This does not change in the original world graph.
\end{proof}

\begin{corollary}\label{cor:price_k_hall_violator_vertex_set_revenue}
    Given a buyer set $B_v$ with non-positive surplus in $G$, 
    let $E^{\neg}$ be the complement edges of the world graph, and let $k_v$ be the value of maximum matching for $(B_v, N(B_v), E^{\neg})$. The platform can extract $$k = k_v + \min\{|B_v|, |S|\}-|N(B_v)|$$
    revenue from the buyers in $B_v$, and this is the maximum possible revenue achievable from the buyers in $B_v$. 
\end{corollary}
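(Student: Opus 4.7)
The plan is to invoke Corollary~\ref{cor:price_k_hall_violator_buyer} with the given $B_v$ playing the role of the enclosing buyer group, and to optimize over admissible transacting subsets $B_k \subseteq B_v$. The maximum-matching parameter in that corollary is taken over $(B_k, N(B_v), E^{\neg})$, so a key first step is to choose $B_k$ to contain the $B_v$-side of a fixed maximum complement matching on $(B_v, N(B_v), E^{\neg})$; this forces the matching value for $(B_k, N(B_v), E^{\neg})$ to equal the $k_v$ in the present statement.

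For the upper bound, let $k := |B_k|$ be any achievable revenue through Corollary~\ref{cor:price_k_hall_violator_buyer}. The corollary requires (i) $|S| - |N(B_v)| \geq k - k_v'$ and (ii) deficiency of $B_v$ at least $k - k_v'$, where $k_v'$ is the max complement matching for $(B_k, N(B_v), E^{\neg})$; since $k_v' \leq k_v$, both inequalities remain valid with $k_v'$ replaced by $k_v$. Because $B_v$ has non-positive surplus, its deficiency is $|B_v| - |N(B_v)|$, and combining (i) and (ii) gives
\begin{equation*}
k \;\leq\; k_v + \min\{|B_v|, |S|\} - |N(B_v)|,
\end{equation*}
which establishes the claimed maximality.

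For achievability, I would construct $B_k \subseteq B_v$ of size exactly $k = k_v + \min\{|B_v|, |S|\} - |N(B_v)|$ by starting from the $k_v$ buyers of a maximum complement matching $M^\ast$ on $(B_v, N(B_v), E^{\neg})$ and padding with arbitrary further buyers from $B_v$. Feasibility of the padding follows from $k \leq |B_v|$, which in turn follows from $k_v \leq |N(B_v)|$ and $\min\{|B_v|, |S|\} \leq |B_v|$. Because $B_k \supseteq M^\ast$, the maximum complement matching value for $(B_k, N(B_v), E^{\neg})$ is precisely $k_v$, condition (i) rearranges to $\min\{|B_v|, |S|\} \leq |S|$, and condition (ii) rearranges to $\min\{|B_v|, |S|\} \leq |B_v|$, both trivially true. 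Corollary~\ref{cor:price_k_hall_violator_buyer} then delivers revenue $k$ from $B_k$, concluding the proof.

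The main subtlety---hence the main obstacle---is the mismatch between the matching parameter used in Corollary~\ref{cor:price_k_hall_violator_buyer}, which is defined for the transacting subset $B_k$, and the parameter $k_v$ in the present statement, which is defined for $B_v$. Insisting that $B_k$ contain the $B_v$-side of a maximum complement matching is precisely what closes this gap and lets the two quantities coincide; once that is done, the rest of the argument reduces to plugging the surplus hypothesis into the two hall-type inequalities.
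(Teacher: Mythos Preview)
Your proposal is correct and follows essentially the same approach as the paper: both construct $B_k$ by taking the $k_v$ buyers from a maximum complement matching and padding with $\min\{|B_v|,|S|\}-|N(B_v)|$ additional buyers from $B_v$, then verify the two conditions of Corollary~\ref{cor:price_k_hall_violator_buyer} and argue that any larger $k$ violates one of them. Your treatment is in fact slightly more careful than the paper's, since you explicitly flag and resolve the mismatch between the matching parameter $k_v'$ (defined on $B_k$) in Corollary~\ref{cor:price_k_hall_violator_buyer} and the $k_v$ (defined on $B_v$) in the present statement---the paper glosses over this by silently identifying the two.
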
 
\begin{proof}
    Firstly, note that the stated revenue is weakly smaller than the number of buyers in $B_v$: $k\leq k_v+|B_v|-|N(B_v)|\leq |B_v|$ since $k_v\leq |N(B_v)|$. We pick the buyer set $B_k$ as follows: 1) Take $k_v$ buyers in the maximum matching in $(B_v, N(B_v), E^{\neg})$, 2) Additionally, take $\min\{|B_v|, |S|\}-|N(B_v)|\geq 0$ buyers in $B_v$ besides the $k_v$ ones. We show that $B_k$ satisfies the two conditions in Corollary ~\ref{cor:price_k_hall_violator_buyer}.
    \begin{eqnarray*}
        k-k_v = \min\{|B_v|, |S|\}-|N(B_v)| =  \min\{|B_v|-N(B_v), |S|-N(B_v)\}
    \end{eqnarray*}
    The first and second terms in the min correspond to the first and second conditions of Corollary~\ref{cor:price_k_hall_violator_buyer} respectively. Additionally, a larger $k$ will violate at least one of the conditions so $k$ is the maximum attainable revenue.
\end{proof}

Combining the above results, we state and prove Lemma~\ref{lemma:identity_char}.

\IdentityChar*
\begin{proof}
    Suppose that the optimal revenue is $k$ and the platform matches $S_k$ to $B_k$. Corollary~\ref{cor:price_k_hall_violator_buyer} states that there is a set $B_k\subseteq B_v$ with non-positive surplus. Corollary~\ref{cor:price_k_hall_violator_vertex_set_revenue} gives the max revenue from $B_v$. As $k$ is the optimal revenue in the entire graph, it is also the maximum revenue from $B_v$. Thus, $k=k_v + \min\{|B_v|, |S|\}-|N(B_v)|$. If $k\geq x$, then $k_v + \min\{|B_v|, |S|\}-|N(B_v)|\geq x$. Conversely, if for all buyer sets $B_v$ with non-positive surplus, $k_v + \min\{|B_v|, |S|\}-|N(B_v)|< x$, then the optimal revenue is given by $k<x$.
\end{proof}

\subsection{Proof of Lemma \ref{lemma:max_cardinality}}
\label{app:max_cardinality}

In this section, we give the proof of Lemma~\ref{lemma:max_cardinality}, simplifying the platform's problem in SHGB markets to the problem of finding the maximum cardinality set of buyers with non-positive surplus. To do so, we first show in sparse graph with buyer degree at most 2, any set of buyers $B_v$ with non-positive surplus satisfies $k_v = |N(B_v)|$, except for certain special cases.

\begin{lemma}
\label{lemma:full_matching}
    Consider any set of buyers $B_v$ with non-positive surplus in SHGB markets. if $|B_v| \geq 3$ and there are more than 2 degree-two buyers in $B_v$, then $k_v = |N(B_v)|$.
\end{lemma}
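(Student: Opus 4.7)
The plan is to prove $k_v = |N(B_v)|$ by invoking Hall's theorem on the bipartite graph $H = (B_v, N(B_v), E^{\neg})$ whose edges are the non-world pairs. Saturating $N(B_v)$ in $H$ is equivalent, by Hall's condition on the $N(B_v)$ side, to the statement that for every $S' \subseteq N(B_v)$ the set $T(S') := \{b \in B_v : N(b) \not\supseteq S'\}$ (the non-world neighborhood of $S'$ inside $B_v$) satisfies $|T(S')| \geq |S'|$. I would verify this by case analysis on $|S'|$.

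The cases $|S'| \geq 3$ and $|S'| = 2$ follow directly from the SHGB hypotheses and should be dispatched first. When $|S'| \geq 3$, every buyer has world-degree at most $2$, so no buyer can satisfy $S' \subseteq N(b)$; hence $T(S') = B_v$ and $|T(S')| = |B_v| \geq |N(B_v)| \geq |S'|$ by non-positive surplus. When $|S'| = 2$, the SHGB sparsity assumption ensures that at most one buyer in $B_v$ knows both sellers in $S'$, so $|T(S')| \geq |B_v| - 1 \geq 2 = |S'|$ using $|B_v| \geq 3$. Notably, neither of these two cases uses the ``more than two degree-two buyers'' hypothesis; that hypothesis and the degree bound together are what force the remaining case.

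The main obstacle is the case $|S'| = 1$. For $S' = \{s\}$ I would argue by contradiction: suppose every buyer in $B_v$ knows $s$. Each of the more than two degree-two buyers then knows $s$ plus one further seller, and sparsity forces those further sellers to be pairwise distinct from each other and from $s$. This already yields $|N(B_v)| \geq 1 + (\text{\#\,degree-two buyers}) \geq 4$, and combined with non-positive surplus it forces $|B_v| \geq 4$ with at least one degree-one buyer whose only world-neighbor is $s$. The remaining work is to extract a contradiction from this configuration, which is the technically heaviest step.

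The cleanest path I see is to record that Hall failure in $H$ can happen only through such a single-seller $S' = \{s\}$ violator — a consequence of the first two cases above — and then to rule out that single-seller violator using the precise ``more than two degree-two buyers'' hypothesis together with the SHGB structural constraints, or to absorb it into the handful of small configurations that the surrounding discussion says are checked separately in polynomial time. In writing the proof, I would organize the exposition around this reduction (Hall fails $\Rightarrow$ common seller $\Rightarrow$ forbidden configuration), so that the first two cases set up the reduction and the third case carries the substantive SHGB-specific argument.
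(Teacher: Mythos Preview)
Your Hall's-theorem approach is essentially dual to the paper's min-cut argument: a seller-side Hall violator $S'$ in the complement graph corresponds to a finite cut with $N_p = S'$ (so that $|T(S')| < |S'|$ translates to $|B_p| < |N_p|$, i.e., cut value below $|N(B_v)|$). Your cases $|S'|\ge 3$ and $|S'|=2$ are handled correctly and parallel the paper's treatment.

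The gap is the case $|S'|=1$, which you explicitly leave open. You set up the contradiction hypothesis (every buyer in $B_v$ knows a common seller $s$), correctly observe that the degree-two buyers then have pairwise distinct second sellers and that at least one degree-one buyer must be present, but you never extract a contradiction---only a promise that it is ``the technically heaviest step'' or can perhaps be ``absorbed'' elsewhere. In fact no contradiction is available: the configuration is realizable and defeats the lemma as stated. Take $B_v=\{b_1,b_2,b_3,b_4\}$ with world edges $(b_i,s)$ for $i=1,\dots,4$ and $(b_i,s_i)$ for $i=1,2,3$. Then $|B_v|=|N(B_v)|=4$ (surplus zero), there are three degree-two buyers, and sparsity holds; yet in the complement graph the seller $s$ is isolated, so $k_v=3<4=|N(B_v)|$. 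The paper's proof has the same lacuna at the corresponding point: its assertion that $|N_p|=0$ whenever $B_q$ contains at least three degree-two buyers does not exclude $|N_p|=1$, and in the example above the cut with $N_p=\{s\}$, $B_q=B_v$ has value $3$. So your difficulty with the single-seller case is not a presentational issue but reflects that the claimed conclusion fails there.
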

\begin{proof}
    Construct the flow network $G^{\neg}_{\infty}$  from the complement graph  $G^{\neg}=(B_v, N(B_v),E^{\neg})$ in such a way that there are edges of capacity 1 from the source $p$ to every vertex in $N(B_v)$ and from every vertex in $B_v$ to the sink $q$, and of capacity $+\infty$ from $s$ to $b$ for any $(s,b)\in E^{\neg}$.
    \begin{figure}[!h]
        \centering
        \includegraphics[width=0.5\textwidth]{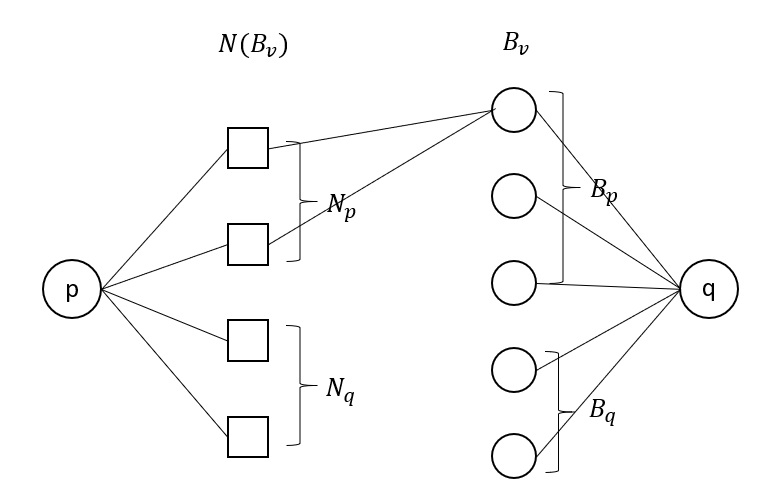}
    \end{figure}
    The size of the maximum matching in $G^{\neg}$ equals to that of the max flow/min cut in $G^{\neg}_{\infty}$. Let $(P,Q)$ be a min cut and $C\leq |N(B_v)|$ be the cut value. Let $N(B_v)=N_{p}\cup N_{q}, B_v=B_{p}\cup B_{q}$, such that $N_{p},B_{p}\subset P, N_{q}, B_{q}\subset Q$. The minimum cut is composed of edges going from $p$ to $N_{q}$ and from $B_{p}$ to $q$, as any edges between $N_q, B_p$ and $N_p, B_q$ would make the size of the cut infinite.

    We proceed by casework:
    \begin{itemize}
        \item Firstly, note that if any degree-zero buyer is in $B_q$, then it must be that $N_p = \emptyset$ as degree-zero buyers are adjacent to all sellers in $N(B_v)$ in the complement graph. Thus, $C \geq |N_q| = |N(B_v)|$.
        \item If $|B_q| = 0$, then we have that $C \geq |B_v| \geq |N(B_v)|$.
        \item Suppose that $|B_q| = 1$. As all buyers have degree at most $2$, it must be that $|N_p| \leq 2$. thus, we have that $C \geq |N_q| + |B_p| \geq (|N(B_v)| - 2) + (|B_v| - 1) \geq |N(B_v)|$ since $|B_v| \geq 3$.
        \item Suppose that $|B_q| = 2$. In any case, it must be that $|N_p| \leq 1$ since either we have a degree-one buyer present in $B_q$ or two degree-two buyers (in which case they cannot share the same two sellers). Thus, $C = |N_q| + |B_p| \geq  (|N(B_v)| - 1) + 1 = |N(B_v)|$.
        \item Suppose that $|B_q| \geq 3$. If there are at least $3$ degree-two buyers in $B_q$, then since a pair of sellers knows at most one common buyer, it follows that $|N_p| = 0$. Thus, we have that $C \geq |N(B_v)|$.

        Otherwise, there must be at least one degree-one buyer in $B_q$ so $|N_p| \leq 1$. Additionally, at least one degree-two buyer must be excluded from $B_q$. Thus, we have that $C = |N_q| + |B_p| \geq  (|N(B_v)| - 1) + 1 = |N(B_v)|$.
    \end{itemize}

    Thus, if $|B_v| \geq 3$ and there are more than $2$ degree-two buyers in $B_v$, it follows that $k_v = |N(B_v)|$, concluding the proof.
\end{proof}

We need one more lemma before proving Lemma~\ref{lemma:max_cardinality}. Lemma~\ref{lem:all_degree} shows that there exists an optimal set of buyers $B_v$, in the sense of Lemma~\ref{lemma:identity_char}, in which all buyers of degree zero and degree one are included.

\begin{lemma}
\label{lem:all_degree}
    There exists a set of buyers $B_v$ maximizing
    \begin{align*}
        \min(|B_v|, |S|) - |N(B_v)| + k_v
    \end{align*}
    such that all buyers of degree zero and degree one are included in $B_v$.
\end{lemma}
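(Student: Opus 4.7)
The plan is to start from an arbitrary set $B_v^\star$ maximizing $f(B_v) := \min(|B_v|, |S|) - |N(B_v)| + k_v$ and show, by casework on the world edges incident to a candidate buyer $b$, that $f(B_v^\star \cup \{b\}) \geq f(B_v^\star)$ whenever $b \notin B_v^\star$ has degree zero or one. Iterating this observation over all such buyers then yields an optimal set containing every degree-zero and degree-one buyer. Throughout, I write $B' := B_v^\star \cup \{b\}$ and let $k'_v$ denote the value of the maximum matching between $B'$ and $N(B')$ using only complement-graph edges $E^\neg$, and I compare the three terms of $f$ term-by-term.

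If $b$ has degree zero, or degree one with its unique world neighbor $s \in N(B_v^\star)$, then $N(B') = N(B_v^\star)$, the original complement-graph matching of size $k_v$ remains feasible in the enlarged graph (so $k'_v \geq k_v$), and $\min(|B'|, |S|) \geq \min(|B_v^\star|, |S|)$; hence $f(B') \geq f(B_v^\star)$. If instead $b$ has degree one with world neighbor $s \notin N(B_v^\star)$ and $|B_v^\star| < |S|$, then $|N(B')| = |N(B_v^\star)| + 1$ while $\min(|B'|, |S|) = \min(|B_v^\star|, |S|) + 1$; the old matching is still feasible, giving $k'_v \geq k_v$, and the net change in $f$ is $1 - 1 + (k'_v - k_v) \geq 0$.

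The main obstacle is the remaining case, where $b$ has degree one with $s \notin N(B_v^\star)$ and $|B_v^\star| \geq |S|$. Here $|N(B')|$ grows by one while $\min(|B'|, |S|) = |S|$ is unchanged, so it suffices to show $k'_v \geq k_v + 1$. The key observation is that $s \in S \setminus N(B_v^\star)$ forces $|N(B_v^\star)| \leq |S| - 1 < |S| \leq |B_v^\star|$, hence $k_v \leq |N(B_v^\star)| < |B_v^\star|$, so every maximum matching leaves at least one buyer $b' \in B_v^\star$ unmatched. Moreover, since no buyer in $B_v^\star$ has $s$ as a world neighbor, $s$ is adjacent to every buyer of $B_v^\star$ in $E^\neg$; extending the old matching by the edge $(b', s)$ produces a matching of size $k_v + 1$, so $k'_v \geq k_v + 1$, as required. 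Combining the three cases gives $f(B_v^\star \cup \{b\}) \geq f(B_v^\star)$ in all scenarios, completing the plan.
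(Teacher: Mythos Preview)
Your proof is correct and follows essentially the same approach as the paper: a term-by-term comparison of $f$ under the same case split (degree zero; degree one with $s\in N(B_v^\star)$; degree one with $s\notin N(B_v^\star)$ and $|B_v^\star|<|S|$; degree one with $s\notin N(B_v^\star)$ and $|B_v^\star|\ge |S|$). In the final case the paper argues via a swap (re-match an arbitrary $b'$ to $s$ and give the newcomer $b$ whatever $b'$ had), whereas you directly exhibit an unmatched $b'\in B_v^\star$ and append the single edge $(b',s)$; this is a slightly cleaner variant of the same idea and, if anything, more transparent since it avoids checking that $b$ is adjacent in $E^{\neg}$ to $b'$'s old partner.
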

\begin{proof}
    Clearly, any degree-zero buyers can only contribute positively to the above expression. Thus, all buyers of degree zero are WLOG in the optimal set $B_v$. 
    
    Turning to the case of buyers of degree one, consider any alternative set of buyers $B'_v$. There are two cases for the change in the above expression when adding a buyer of degree one. Let $s_j$ be the seller adjacent to this buyer. If $s_j \in N(B'_v)$, note that the above expression can only increase.

    If $s_j \not \in N(B'_v)$, then there are again two cases. If $|B'_v| < |S|$, then the above expression can again only increase – there is an increase of $1$ from the first term and a decrease of $1$ from the second term. If $|B'_v| \geq |S|$, then since $s_j \not \in N(B'_v)$, we claim that the $k_v$ term must strictly increase. Consider any previous maximum matching between $B'_v$ and $N(B'_v)$. Take an arbitrary buyer $b'$ in $B'_v$ (note that one must exist since $|B'_v| \geq |S|$) and match them to $s_j$ while matching the new buyer to the seller $b'$ was matched to in the maximum matching (if one exists). This increases the cardinality of the maximum matching by $1$, so it follows that the value of above expression does not decrease. This concludes the proof.
\end{proof}

Given the above lemma, we now prove Lemma~\ref{lemma:max_cardinality}.

\MaxCard*
\begin{proof}
    Restricting our search to sets of buyers $B_v$ that satisfy the conditions of Lemma~\ref{lemma:full_matching}, we have that $k_v = |N(B_v)|$. This corresponds to a set of $B_v$ such that $|B_v| \geq 3$ and there are more than 2 degree-two buyers in $B_v$. By Lemma~\ref{lemma:identity_char}, the revenue from this given set of buyers $B_v$ is given by
    \begin{align*}
        \min(|B_v|, |S|) - |N(B_v)| + k_v = \min(|B_v|, |S|) - |N(B_v)| + |N(B_v)| = \min(|B_v|, |S|)
    \end{align*}
    so maximizing the platform's revenue over these sets is equivalent to finding the maximum cardinality such set of buyers.

    Now, consider the sets of buyers $B_v$ that do not satisfy the conditions of Lemma~\ref{lemma:full_matching}. These are sets of buyers with non-positive surplus $B_v$ such that either $|B_v| < 3$ or there are no more than $2$ degree-two buyers in $B_v$. Clearly, we can enumerate all $B_v$ such that $|B_v| < 3$ and check the revenue attainable from each of these sets. For those $B_v$ containing no more than $2$ degree-two buyers in $B_v$, Lemma~\ref{lem:all_degree} proves that it is WLOG to begin with all degree zero and degree one buyers in $B_v$. Thus, we can simply ennumerate over all $O\left(\binom{n}{2}\right)$ choices of two degree-two buyers to add, check the revenue attainable from each of these sets. This brute force check takes polynomial time, concluding the proof.
\end{proof}

\subsection{Proof of Theorem~\ref{thm:poly_identity}}
\label{app:poly_identity}

Here, we provide the proof of Theorem~\ref{thm:poly_identity}, showing that we can maximize the platform's revenue in SHGB markets in polynomial time. We begin with the following lemma, which will form the crux of our proof.

\begin{lemma}
\label{lemma:induced_edges}
    Consider a world graph where all buyers are of degree two and for every pair of sellers, at most one buyer knows them both. The maximum cardinality set $B_v$ of buyers with surplus at most $k$ – that is, $|N(B_v)| - |B_v| \leq k$ – can be identified in polynomial time.
\end{lemma}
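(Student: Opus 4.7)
The plan is to exploit the degree and sparsity hypotheses to recast the problem as a combinatorial optimization on an auxiliary simple graph, reduce via a ``whole components'' argument, and finish with a trivial greedy.

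First, since every buyer has degree exactly two and any pair of sellers is known by at most one common buyer, I would identify the buyers with the edges of a simple graph $H=(S,B)$ whose vertices are the sellers. Under this identification, each candidate set $B_v$ corresponds to an edge set $E'\subseteq E(H)$, the neighborhood $N(B_v)$ equals $V(E')$ (the endpoints of $E'$), and the surplus equals $|V(E')|-|E'|$. The task becomes: find $E'\subseteq E(H)$ of maximum cardinality subject to $|V(E')|-|E'|\le k$.

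Second, I would show that in an optimum each connected component $C_i$ of $H$ (with $v_i$ vertices, $e_i$ edges, and cyclomatic number $r_i:=e_i-v_i+1$) is either entirely contained in $E'$ or disjoint from it. Suppose $E'_i:=E'\cap E(C_i)$ is a proper non-empty subset; then the subgraph $(V(E'_i),E'_i)$ has $c\ge 1$ connected components and cyclomatic number $r'\le r_i$, so its surplus equals $c-r'$, and extending $E'_i$ to all of $E(C_i)$ changes the surplus by $(1-r_i)-(c-r')=1-c-(r_i-r')\le 0$ while strictly increasing $|E'|$. After this reduction, the problem becomes: choose a subfamily $T\subseteq\{1,\dots,p\}$ of components maximizing $\sum_{i\in T}e_i$ subject to $\sum_{i\in T}(1-r_i)\le k$.

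This selection admits a trivial greedy solution. Every cyclic component ($r_i\ge 1$) has non-positive cost, so all of them are included, consuming budget $s_C:=\sum_{r_i\ge 1}(1-r_i)\le 0$. Each remaining tree component costs exactly $+1$, so an optimum further picks the $\min(|T_{\mathrm{trees}}|,\max(0,k-s_C))$ largest trees by edge count. Computing components, classifying them as cyclic or tree, and sorting the trees all run in polynomial time, and $B_v$ is recovered as the union of buyers (edges) in the chosen components.

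The main obstacle I anticipate is in the whole-components reduction: the inequality $1-c-(r_i-r')\le 0$ requires $c\ge 1$, so the argument only improves \emph{non-empty} partial intersections. If $E'_i$ were empty, extending to a tree component would actually increase surplus by $1$. This is precisely why the correct statement is ``fully include or fully exclude each component'' rather than ``always fully include any touched component,'' and it dictates the knapsack-style selection that the greedy then solves in linear time thanks to the uniform unit cost of tree components and the non-positive cost of cyclic ones.
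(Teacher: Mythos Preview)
Your proposal is correct and follows essentially the same approach as the paper: identify buyers with edges of a simple graph on the sellers, reduce to selecting whole connected components, then greedily take all cyclic components plus the largest trees up to the surplus budget. Your cyclomatic-number justification of the whole-components step is in fact more careful than the paper's one-line argument, and your $s_C=\sum_{r_i\ge 1}(1-r_i)$ coincides with the paper's $-\ell$, so the two greedy rules are identical.
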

\begin{proof}
    Note that such world graphs are in direct correspondence with general graphs $G = (V, E)$ where each buyer corresponds to an edge and each seller corresponds to a vertex. Specifically, $b_i$ corresponds to an edge $(s_i, s_j)$ between the vertices representing sellers $s_i, s_j$ whom $b_i$ is connected to in the original world graph.

    Restated in these terms, finding the maximum cardinality set $B_v$ of buyers with surplus at most $k$ is equivalent to finding the maximum cardinality set of edges $\tilde{E} \subseteq E$ such that the induced subgraph of $G$ corresponding to $\tilde{E}$ has at most $|\tilde{E}| + k$ vertices.

    We claim that $\tilde{E}$ must correspond to a union of connected components of $G$. Suppose that $\tilde{E}$ contained only a subset of edges within a given connected component $C$ of $G$. By adding all the edges in $C$, the number of vertices in the induced subgraph can increase by at most the number of additional edges, not harming the surplus of the induced subgraph in the process. It follows that $\tilde{E}$ must correspond to a union of connected components of $G$ as desired.

    Thus, our algorithm to find the maximum cardinality set of edges $\tilde{E} \subseteq E$ such that the induced subgraph of $G$ corresponding to $\tilde{E}$ has at most $|\tilde{E}| + k$ vertices proceeds as follows:
    \begin{itemize}
        \item Find all the connected components of $G$. If a connected component is not a tree, it contributes more edges than vertices, so we can add it to $E_{cur}$.
        \item Having added all non-tree connected components, compute the difference between the number of edges and number of vertices in the induced subgraph of $E_{cur}$. Call this difference $\ell$.
        \item Take the $k + \ell$ largest (in terms of the number of edges) connected components of $G$ that are trees and add their edges to $E_{cur}$. Return the final set as $\tilde{E}$.
    \end{itemize}
    We now prove the correctness of our algorithm below. Clearly, $\tilde{E}$ must contain all connected components of $G$ that are not trees. Given this fact, the remaining edges in $\tilde{E}$ must correspond to connected components of $G$ that are trees. Our algorithm takes precisely those trees with the largest number of edges, completing the proof.
\end{proof}

\PolyIdentity*
\begin{proof}
    It suffices to show that we can find the maximum cardinality set $B_v$ of buyers of non-positive surplus in polynomial time in SHGB markets, which together with Lemma~\ref{lemma:max_cardinality} implies the desired result.

    All buyers of degree zero and degree one contribute at most one to $|N(B_v)|$ while contributing at least one to $|B_v|$. Thus, they must all be included. Upon including all these buyers, consider the induced subgraph obtained by removing all sellers adjacent to a buyer of degree one or zero. This induced subgraph may have more buyers of degree one or zero, which by the same argument as above, must be included in $B_v$. We can repeat this process until we are left with a current set of buyers $B'_v$ which must be included in $B_v$ as well as a graph in which all buyers are of degree two and for every pair of sellers, at most one buyer knows them both.

    Thus, our problem reduces to finding the maximum cardinality set of these buyers with surplus at most $|B'_v| - |N(B'_v)|$. Lemma~\ref{lemma:induced_edges} shows that this is solvable in polynomial time, concluding the proof.
\end{proof}

\section{Missing Proofs in Section~\ref{sec:hom_markets}}
\subsection{Proof of Theorem~\ref{thm:hom_min_nm_approx}}\label{app:min_nm_approx_rev}
Here, we provide the proof of the following theorem, giving a $\min\{n, m\}$ approximation algorithm for revenue in general homogeneous-goods markets.
\HomoMinNM*

In order to guarantee this approximation ratio, we consider a buyer-seller pair $(b_i, s_j)$ and aim to find the maximum possible revenue the platform could extract from the single transaction between these two agents. Asking this same question for all possible pairs, we are guaranteed to obtain the maximum possible revenue the platform could attain from any single transaction. As there at most $\min\{n, m\}$ transactions conducted via the platform, being able to achieve this maximum transaction in polynomial time would prove Theorem~\ref{thm:hom_min_nm_approx}.

We begin by noting that we can restrict our attention to transactions involving buyers among those with the top $\min\{n, m\}$ highest valuations.

\begin{lemma}
    Up to ties, the buyer-seller pair $(b_i, s_j)$ from which the platform can extract the maximum revenue always involves a buyer among those with the top $\min\{n, m\}$ highest valuations. Additionally, the maximum revenue is either zero or at least $v_{\min\{n, m\}}$, where this is the $\min\{n, m\}^{th}$ largest valuation.
\end{lemma}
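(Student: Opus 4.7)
The plan is to establish the second claim first---that the maximum single-transaction revenue is either zero or at least $v_{\min\{n,m\}}$---and derive the first claim as an immediate corollary. Throughout, the key tools will be the Opportunity Path Theorem (Theorem~\ref{thm:oppo_path}) and the First Welfare Theorem, together with the trivial upper bound that no buyer pays more than their own valuation.

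For the lower bound, suppose the maximum single-transaction revenue is positive, achieved at some pair $(b_i, s_j)$ under a platform-edge configuration $E_p$ containing $(b_i, s_j)$. Then $p_{s_j} > 0$, so no opportunity path from $b_i$ reaches a non-transacting seller, and the Opportunity Path Theorem gives $p_{s_j}$ as the smallest buyer valuation appearing along any such path. Every intermediate buyer on the path must transact in the induced competitive equilibrium; and since valuations are homogeneous across sellers, welfare equals the sum of transacting buyers' valuations. The First Welfare Theorem then forces the set of transacting buyers to coincide (up to ties) with the top $\min\{n,m\}$ buyers by valuation, so every buyer on the path has valuation at least $v_{\min\{n,m\}}$, yielding $p_{s_j} \geq v_{\min\{n,m\}}$.

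For the first claim, if the argmax pair were $(b_i, s_j)$ with $v_i < v_{\min\{n,m\}}$, the single-transaction revenue would be bounded above by $v_i < v_{\min\{n,m\}}$, contradicting the lower bound just established whenever the maximum is positive. When the maximum is zero, any pair involving a top-$\min\{n,m\}$ buyer trivially ties as argmax, handling the ``up to ties'' qualifier.

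The main obstacle will be justifying that, in a configuration producing positive single-transaction revenue, the set of transacting buyers coincides up to ties with the top $\min\{n,m\}$ buyers. The subtle case is when some top buyer $b_k$ is not matchable within $G_w \cup E_p$ and welfare-optimality admits a lower-value buyer in $b_k$'s place, allowing that buyer to appear on an opportunity path with small valuation. I plan to dispose of this case by an augmentation argument in the spirit of Lemma~\ref{lem:all_transact}: the platform can always introduce an additional platform edge incident to $b_k$, making $b_k$ matchable; reapplying the opportunity-path analysis then shows that this modification either weakly increases the attainable revenue at the current pair or exhibits a new pair involving $b_k$ that itself achieves revenue at least $v_{\min\{n,m\}}$, thereby reducing the problematic situation to the one already handled.
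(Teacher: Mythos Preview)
Your overall plan---prove the threshold claim first and deduce the first claim as a corollary---is sound and matches the paper's logical flow. However, your method for the lower bound differs substantially from the paper's and has a gap you have not closed.

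You start from an \emph{optimal} single-transaction configuration $E_p$ and try to argue that $p_{s_j} \geq v_{\min\{n,m\}}$ via the Opportunity Path Theorem. You correctly identify the obstacle: the First Welfare Theorem only guarantees a maximum-weight matching on $G_w \cup E_p$, not on the complete graph, so the transacting buyers need not be the top $\min\{n,m\}$. Your proposed fix---augment by adding an edge to an excluded top buyer $b_k$---is incomplete as stated. A single augmentation step may swap $b_k$ in for one low-value transacting buyer, but other low-value buyers can remain on opportunity paths from $b_i$; and the new pair $(b_k, s_\ell)$ you introduce is in the same situation, with no argument yet that \emph{it} achieves revenue $\geq v_{\min\{n,m\}}$. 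To make this work you would have to iterate the augmentation until \emph{all} transacting buyers are top-$\min\{n,m\}$, then observe that one of the newly added platform edges now transacts at the desired price. At that point you have essentially rederived the paper's construction through a detour.

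The paper's proof sidesteps all of this by giving a direct construction rather than analyzing a hypothetical optimum. The case $m \geq n$ is trivial. When $n > m$, either every top-$m$ buyer already has all world edges---then no platform edge can ever transact, so revenue is zero---or some top-$m$ buyer $b^\star$ is missing an edge to some seller $s^\star$. In the latter case the platform adds $(b^\star, s^\star)$ together with enough edges to match the remaining top $m-1$ buyers to the remaining $m-1$ sellers. In the resulting equilibrium only top-$m$ buyers transact, so every opportunity path from $b^\star$ terminates at a buyer of value at least $v_{(m)}$, giving $p_{s^\star} \geq v_{(m)}$ immediately. This exhibits a pair achieving the bound without any augmentation argument; the first claim then follows from $p_{s_j} \leq v_i$, exactly as you outline.
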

\begin{proof}
    The statement is trivial if $m \geq n$. Otherwise, if $n > m$, then either the top $m$ buyers already have all possible world edges, in which case we can extract $0$ revenue from the market or there exists a buyer among the top $m$ buyers that is missing at least one world edge. In this case, we can add this as a platform edge and match off the remaining top $m - 1$ buyers, guaranteeing us a revenue of at least $v_{(m)}$, where this is the $m^{th}$ largest valuation. As we can extract at most $v_i$ from buyer $b_i$, it follows that the revenue-optimal pair $(b_i, s_j)$ must include one of the top $m$ buyers.
\end{proof}

We will show that the maximum price for a $(b_i, s_j)$ pair is closely connected to the notion of Hall violators, most notable for their use in matching theory \citep{lovasz2009matching}.

\begin{definition}[Hall violator]
    A set $B_v$ of buyers is a Hall violator if $|B_v| > |N(B_v)|$, where $N(B_v)$ is the set of sellers connected to any buyer in $B_v$ via a world edge.
\end{definition}

Next, we state the relationship between Hall violators and the maximum price attainable from a buyer-seller pair. Let $\tilde{B}$ denote the set of buyers who have one of the top $\min\{n, m\}$ valuations. Let $v_{(\min\{n, m\})}$ denote the $\min\{n, m\}^{th}$ largest valuation. By Lemma 

\begin{lemma}
\label{lemma:hall_violator_prices}
    Consider a buyer-seller pair $(b_i, s_j)$, where $b_i \in \tilde{B}$ is not connected to $s_j$ via a world edge. There exists a set of platform edges the platform can add such that $s_j$ sells to $b_i$ at a positive price $k \geq v_{(\min\{n, m\})}$ if and only if $b_i$ belongs to a Hall violator $B_{vi}\subseteq \tilde{B}$ for $G^{-j} = (S \setminus \{s_j\} \cup \tilde{B}, E)$ where $k \leq \min_{b_i \in B_{vi}} v_i$.
\end{lemma}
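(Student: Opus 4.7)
My plan is to leverage the opportunity-path characterization of prices in homogeneous-goods markets (Theorem~\ref{thm:oppo_path}) on both directions: the opportunity-path structure in the platform graph encodes precisely which buyers' values bound the price, and Hall-violator conditions in the world graph are the right combinatorial way to realize and detect that structure.

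For the forward direction, I would assume that platform edges have been added so that $s_j$ sells to $b_i$ at a positive price $k$, and set
\[
B_{vi} := \{b_i\} \cup \{b_p : b_p \text{ is reachable from } b_i \text{ via an opportunity path in the platform graph}\}.
\]
By Theorem~\ref{thm:oppo_path} every buyer in $B_{vi}$ has valuation at least $k$, and since $k \geq v_{(\min\{n,m\})}$, we conclude $B_{vi} \subseteq \tilde{B}$. To prove $B_{vi}$ is a Hall violator in $G^{-j}$, I would construct an injection $\varphi : N_{G^{-j}}(B_{vi}) \to B_{vi} \setminus \{b_i\}$ sending each seller to its transacting partner. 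Three sub-claims make this map well-defined and injective: (i) every $s_q \in N_{G^{-j}}(B_{vi})$ must transact, since otherwise extending $b_i$'s opportunity path through $s_q$ would force the price to $0$, contradicting $k > 0$; (ii) $s_q$'s transacting partner lies in $B_{vi}$, because the arrow $s_q \to b_p$ prolongs an opportunity path reaching $s_q$; (iii) $b_i$ is never in the image of $\varphi$, since its partner is $s_j \notin N_{G^{-j}}(B_{vi})$. Together these yield $|N_{G^{-j}}(B_{vi})| \leq |B_{vi}| - 1$, exactly the Hall-violator property.

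For the backward direction, given a Hall violator $B_{vi} \subseteq \tilde{B}$ with $k \leq \min_{b \in B_{vi}} v_b$, I would construct platform edges achieving a price of at least $k$ for $(b_i, s_j)$. The heart of the construction is to add the platform edge $(b_i, s_j)$ and then augment with additional platform edges that pair every seller in $S \setminus (N_{G^{-j}}(B_{vi}) \cup \{s_j\})$ with some buyer outside $B_{vi}$. I would then exhibit a max matching of the resulting platform graph in which $(b_i, s_j)$ transacts and every seller in $N_{G^{-j}}(B_{vi})$ is paired, via a world edge, with a buyer in $B_{vi} \setminus \{b_i\}$ (using $|B_{vi}| > |N_{G^{-j}}(B_{vi})|$). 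Selecting this matching via tie-breaking in the platform's favor, all opportunity paths from $b_i$ stay within sellers in $N_{G^{-j}}(B_{vi})$ and buyers in $B_{vi}$, so Theorem~\ref{thm:oppo_path} yields a price of at least $\min_{b \in B_{vi}} v_b \geq k$.

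The main obstacle will be rigorously carrying out the backward construction, for two reasons. First, an arbitrary Hall violator need not admit a world-edge matching that saturates $N_{G^{-j}}(B_{vi})$ inside $B_{vi} \setminus \{b_i\}$, because Hall's condition applied on the seller side (to subsets of $N_{G^{-j}}(B_{vi})$) is not directly guaranteed; I would resolve this by replacing $B_{vi}$ with a carefully chosen sub-violator $B_{vi}' \subseteq B_{vi}$ containing $b_i$, obtained via a defect version of Hall's theorem, which admits the desired saturating matching while only strengthening the lower bound $\min_{b} v_b \geq k$. Second, the global max matching of the augmented platform graph might still route a seller in $N_{G^{-j}}(B_{vi}) \cup \{s_j\}$ to an outside buyer; I would address this by choosing the supplementary platform edges deliberately so that no alternative assignment can simultaneously retain maximum cardinality and break the local structure around $B_{vi}$, and then appeal to the platform's tie-breaking power to select the good max matching.
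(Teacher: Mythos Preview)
Your forward direction is correct and essentially identical to the paper's (your explicit injection $\varphi$ is a nice way to spell out what the paper states in one sentence).

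The backward direction has a genuine gap. You insist on matching every seller in $N_{G^{-j}}(B_{vi})$ to a buyer in $B_{vi}\setminus\{b_i\}$ \emph{via a world edge}, and you recognize this may fail for an arbitrary Hall violator. Your proposed fix---pass to a sub-violator $B_{vi}'\subseteq B_{vi}$ containing $b_i$ that does admit such a saturating world-edge matching---does not work in general. Here is a concrete obstruction: take $B_{vi}=\{b_i,b_1,b_2\}$ with world edges $(b_i,s_1),(b_i,s_2),(b_1,s_1),(b_2,s_1)$, so $N_{G^{-j}}(B_{vi})=\{s_1,s_2\}$ and $|B_{vi}|=3>2$. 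Any subset of $B_{vi}$ containing $b_i$ already has both $s_1,s_2$ as neighbors, so the only sub-violator containing $b_i$ is $B_{vi}$ itself; yet $s_2$ has no world edge to $b_1$ or $b_2$, so no world-edge matching saturates $N_{G^{-j}}(B_{vi})$ inside $B_{vi}\setminus\{b_i\}$. Your construction and its patch both fail on this instance.

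The paper's backward construction avoids this issue entirely by using \emph{platform} edges, not world edges, to pair the sellers in $N_{G^{-j}}(B_{vi})$ with (the largest $|N_{G^{-j}}(B_{vi})|$) buyers in $B_{vi}\setminus\{b_i\}$; remaining buyers of $\tilde B$ are then matched arbitrarily to the remaining sellers. Since goods are homogeneous, any perfect matching of $\tilde B$ to sellers is welfare-maximizing, so the platform can tie-break to this particular matching. The crucial point is that opportunity paths from $b_i$ still cannot leave $B_{vi}$: every buyer in $B_{vi}$ has world neighbors only in $N_{G^{-j}}(B_{vi})\cup\{s_j\}$, and every seller in $N_{G^{-j}}(B_{vi})$ now transacts with a buyer in $B_{vi}$ (and $s_j$ with $b_i$), so each step of any opportunity path stays inside $B_{vi}$. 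In the example above this amounts to adding a platform edge $(b_1,s_2)$, which immediately gives the desired price. Replacing ``world edge'' with ``platform edge'' in your construction (and dropping the sub-violator detour) yields the paper's argument.
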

\begin{proof}
    Suppose there exists a set of platform edges such that $b_i$ transacts with $s_j$ at a positive price $k \geq v_{(\min\{n, m\})}$. Consider the set of all buyers reachable via an alternating/opportunity path from $b_i$. As $k > 0$, this set of buyers $B_{vi}$ forms a Hall violator in $G^{-j}$; otherwise, there would be an opportunity path from $b_i$ to an unsold item. Additionally, $s_j$ transacts at price $\min_{b \in B_{vi}} v(b)$ by Theorem~\ref{thm:oppo_path}. Since $k \geq v_{(\min\{n, m\})}$, $B_{vi}$ cannot contain any buyers outside of $\tilde{B}$. 

    Now, suppose we have some Hall violator $B_{vi} \subseteq \tilde{B}$ of $G^{-j}$ that includes $b_i$, where $k \leq \min_{b_i \in B_{vi}} v_i$. We claim that the platform can add edges $E_P$ and induce platform graph $G_P=(S\cup B,E\cup E_P)$ such that in the max weight matching:
    \begin{itemize}
        \item Sellers in $N_{G_P}(B_{vi})$ all transact.
        \item The sellers in $N_{G_P}(B_{vi})$ only transact with the buyers in $B_{vi} \setminus b_i$.
        \item $s_j$ transacts with $b_i$.
    \end{itemize}

    Consider the graph $G^{-j}$. As we have a Hall violator $B_{vi}\subseteq \tilde{B}$ of this graph, it follows that $|N_{G^{-j}}(B_{vi})| < |B_{vi}| \implies |N_{G^{-j}}(B_{vi})| \leq |B_{vi} \setminus \{b_i\}|$. Note that any matching that perfectly matches the top $\min\{m, n\}$ buyers to the top $\min\{m, n\}$ sellers is a maximum weight matching.

    Thus, we add the following set of platform edges: connect $b_i$ to $s_j$, connect the largest $|N_{G^{-j}}(B_{vi})|$ buyers in $B_{vi} \setminus \{b_i\}$ to $N_{G^{-j}}(B_{vi})$, and connect any extra buyers in $\tilde{B}$ arbitrarily to the remaining sellers not in $N_{G^{-j}}(B_{vi})$. We can then arbitrarily break ties in favor of the max weight matching that matches $b_i$ with $s_j$, that matches $N_{G^{-j}}(B_{vi})$ with the buyers in $B_{vi} \setminus \{b_i\}$, and that matches the rest of the buyers in $\tilde{B}$ arbitrarily to sellers such that everyone in $\tilde{B}$ transacts. Note that this matching does indeed match the top $|S|$ buyers to the top $|S|$ sellers, so it must be a maximum weight matching.

    By construction, since $b_i \in B_{vi}$ and $N_{G}(B_{vi})$ only transacts with the largest buyers in $B_{vi}$, it follows that the minimum opportunity path must point to a buyer in $B_{vi}$ and the price is thus bounded below by $\min_{b_i \in B_{vi}} v_i \geq k$.
\end{proof}

As a subroutine that will be used in our final algorithm, we show that we can efficiently decide, given a graph $G$, whether there exists a Hall violator containing a buyer $b$. To begin, we define the following graph-theoretic problems.

\begin{definition}[vertex Hall violator problem]
    Given a bipartite graph $G=(B\cup S,E)$ and a vertex $b\in B$, is there a Hall violator that contains $b$?
    \label{def:hall_violator_vertex}
\end{definition}
\begin{definition}[max-difference Hall violator problem]
    Given a bipartite graph $G=(B\cup S, E)$ and a positive integer $k$, is there a Hall violator $B_v\subset B$ such that $|B_v|-|N_G(B_v)|\geq k$?
    \label{def:max_dif_hall_violator}
\end{definition}

\begin{lemma}
\label{lemma:hv_reduction}
    There is a polynomial-time reduction from the vertex Hall violator problem to the max-difference Hall violator problem.
\end{lemma}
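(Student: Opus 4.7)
The plan is to reduce by padding the graph with many ``twin copies'' of $b$, so that any Hall violator with sufficiently large deficiency is forced to contain $b$. Given a vertex Hall violator instance $(G = (B \cup S, E), b)$, I would construct $G' = (B' \cup S, E')$ by adjoining $N - 1$ new buyer vertices $b_2, \ldots, b_N$, each with neighborhood exactly $N_G(b)$, where $N := |B|$; I would then pose the max-difference Hall violator question on $G'$ with threshold $k := N$. The construction is plainly polynomial: the new graph has only $2|B| - 1$ buyers and at most $|E| + (N-1)\cdot |N_G(b)|$ edges.

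For the forward direction, given any Hall violator $B_v \ni b$ in $G$, I would take $B_v' := B_v \cup \{b_2, \ldots, b_N\}$ as the witness in $G'$. Because all the new twins share $b$'s neighborhood, $N_{G'}(B_v') = N_G(B_v)$, while $|B_v'| = |B_v| + (N-1)$, so
\[
|B_v'| - |N_{G'}(B_v')| \;=\; \bigl(|B_v| - |N_G(B_v)|\bigr) + (N - 1) \;\geq\; 1 + (N-1) \;=\; N \;=\; k.
\]

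The converse is the subtler direction and is where I expect the bulk of the work. Given $B_v' \subseteq B'$ with deficiency at least $k = N$ in $G'$, I would first establish a \emph{twin-closure} property: inserting a missing twin of $b$ into $B_v'$ strictly increases $|B_v'|$ while leaving $N_{G'}(B_v')$ unchanged, so deficiency weakly increases. I may therefore assume WLOG that $B_v'$ contains either all of $\{b, b_2, \ldots, b_N\}$ or none of them. The ``none'' case is impossible because it would force $B_v' \subseteq B \setminus \{b\}$, yielding deficiency at most $|B| - 1 < N$. So all twins must lie in $B_v'$, and setting $B_v := B_v' \cap B$ gives a subset of $B$ containing $b$ with $N_G(B_v) = N_{G'}(B_v')$ and $|B_v| = |B_v'| - (N-1)$, hence $|B_v| - |N_G(B_v)| \geq N - (N-1) = 1$, certifying that $b$ lies in a Hall violator of $G$.

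The main obstacle is ensuring the twin-closure step holds unconditionally, i.e.\ that no subtle interaction between the twins and other vertices already in $B_v'$ spoils the argument. But since the twins' neighborhoods are carved out to equal $N_G(b)$ and are independent of the rest of $B'$, the invariance of $N_{G'}(B_v')$ under adding twins is immediate from the set-theoretic definition of the neighborhood, and the calibration $N = |B|$ suffices precisely because it exceeds the trivial deficiency bound in the ``no-twin'' case.
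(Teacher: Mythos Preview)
Your reduction is correct, but it is genuinely different from the paper's. The paper goes in the opposite direction: rather than \emph{adding} vertices, it \emph{deletes} $b$ together with its neighborhood, forming $G' = \bigl((B\setminus\{b\}) \cup (S\setminus N_G(b)),\,E\bigr)$ and setting the threshold $k = |N_G(b)|$. The equivalence is then the identity $|N_G(B_v\cup\{b\})| = |N_{G'}(B_v)| + |N_G(b)|$, which holds because removing $N_G(b)$ from $S$ makes the two pieces disjoint; a Hall violator containing $b$ in $G$ thus corresponds exactly to a set of deficiency at least $|N_G(b)|$ in $G'$.

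Your cloning construction instead forces $b$ into any high-deficiency witness by a cardinality argument: with $N=|B|$ twins present, no twin-free subset can reach deficiency $N$, and once one twin is in, closure drags in all of them (including $b$). Both arguments are short and fully rigorous. The paper's version yields a strictly smaller instance and a threshold depending only on the local degree of $b$, which is arguably cleaner; your version is a nice illustration of the generic ``amplify the distinguished vertex'' trick and avoids having to reason about how neighborhoods decompose after deletion. Either buys the lemma at the same asymptotic cost.
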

\begin{proof}
    We show that the vertex Hall violator problem can be reduced in polynomial time to the max-difference Hall violator problem. Given an instance $(G, b)$ of the vertex Hall violator problem, let $G'=(B\setminus\{b\}\cup S\setminus\{N_G(b)\},E)$ and $k=|N_G(b)|$. Then, solve the max-difference Hall violator problem on the instance $(G', k)$. If there is a yes-certificate $B_{v}\subseteq B\setminus \{b\}$ for the max-difference Hall violator problem, then $B_v\cup\{b\}$ is a Hall violator that includes $b$ in $G$. This is because by the yes-certificate
    \begin{eqnarray*}
        |B_v| - |N_{G'}(B_v)| & \geq & k = |N_G(b)|\\
        |B_v\cup\{b\}| - 1 & \geq & |N_{G'}(B_v)|+|N_G(b)| = |N_G(B_v\cup \{b\})|
    \end{eqnarray*}
    If $(G', k)$ is a no-instance of the max-difference Hall violator problem, then there does not exist a Hall violator that contains $b$ in $G$.  Otherwise, if $B_v\cup\{b\}, b\notin B_v$ were such a Hall violator, then
    \begin{eqnarray*}
        |B_v\cup\{b\}| -1 & \geq & |N_G(B_v\cup\{b\})| = |N_{G'}(B_v)|+|N_G(b)|\\
        |B_v| - |N_{G'}(B_v)| & \geq & |N_G(b)| = k
    \end{eqnarray*}
    which contradicts with $(G',k)$ being a no-instance.
\end{proof}

Finding the maximum-difference Hall violator is closely related to the notion of the deficiency of a graph, defined below.

\begin{definition}
\label{def:deficiency}
    Define the deficiency of a subset $B_v$ as $\mathrm{def}(B_v) = \max \left(|B_v| - |N_G(B_v)|, 0 \right)$. Similarly, define the deficiency of $B$ as $\mathrm{def}(G; B) = \max_{B_v \subseteq B} \mathrm{def}(B_v)$.
\end{definition}

For bipartite graphs, we have the following well-known fact about the deficiency of a graph $G$.

\begin{lemma}
\label{lem:deficiency}
    Let $m$ be the size of the maximum-matching on $G$. Then
    \begin{align*}
    \mathrm{def}(G; B) = |B| - m
    \end{align*}
\end{lemma}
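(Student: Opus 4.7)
The plan is to prove this as the classical defect form of Hall's theorem (essentially König's theorem, rephrased). I will establish the equality by proving the two inequalities separately.

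For the easier direction, $\mathrm{def}(G;B) \leq |B| - m$, I would fix any subset $B_v \subseteq B$ and any matching $M$ in $G$. I would partition the $M$-matched buyers into those in $B_v$ and those in $B \setminus B_v$. Matched buyers in $B_v$ must be paired with distinct sellers in $N(B_v)$, contributing at most $|N(B_v)|$ to $|M|$, while matched buyers in $B \setminus B_v$ contribute at most $|B| - |B_v|$. Summing gives $|M| \leq |N(B_v)| + |B| - |B_v|$, and applying this to a maximum matching rearranges to $|B_v| - |N(B_v)| \leq |B| - m$. Since also $0 \leq |B| - m$, the maximum defining $\mathrm{def}(G;B)$ is bounded by $|B| - m$.

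For the reverse inequality $\mathrm{def}(G;B) \geq |B| - m$, the cleanest route is via König's theorem, which states that in a bipartite graph the size of a minimum vertex cover equals $m$. Let $C$ be such a cover, and write $C_B = C \cap B$, $C_S = C \cap S$, so $|C_B| + |C_S| = m$. I would then take $B_v \coloneqq B \setminus C_B$. Every edge incident to $B_v$ must be covered by $C$, and since no endpoint of such an edge lies in $C_B$, the edge is covered on the $S$-side, giving $N(B_v) \subseteq C_S$. Therefore
\begin{equation*}
|B_v| - |N(B_v)| \;\geq\; (|B| - |C_B|) - |C_S| \;=\; |B| - m,
\end{equation*}
which witnesses $\mathrm{def}(G;B) \geq |B| - m$ as required.

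The only subtle point, and thus the main obstacle, is justifying the use of König's theorem for the second direction; this is standard and can be cited, but I would mention it explicitly so the reader sees exactly why bipartiteness is being used. Everything else is a short counting argument. The two inequalities combine to give the stated equality.
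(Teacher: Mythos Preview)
The paper does not actually prove this lemma; it is stated there as ``the following well-known fact about the deficiency of a graph $G$'' and then used without argument. Your proposal supplies a correct, standard proof of this classical defect formula: the upper bound via the counting argument on any matching, and the lower bound by choosing $B_v = B \setminus C_B$ from a minimum vertex cover $C$ and invoking K\"onig's theorem. Both directions are sound as written, and flagging bipartiteness as the reason K\"onig applies is exactly the right emphasis. So your proof is correct and in fact goes beyond what the paper provides.
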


Moreover, one can find the subset of maximum deficiency in polynomial time.

\begin{theorem}
    There is a polynomial-time algorithm to find the max-difference Hall violator (or equivalently, the subset of $B$ with maximum deficiency).
\end{theorem}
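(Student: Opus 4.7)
My plan is to reduce the problem to maximum bipartite matching plus a single alternating-path search, exploiting the structure already captured by Lemma~\ref{lem:deficiency}. First, compute a maximum matching $M$ in $G = (B \cup S, E)$ using any standard polynomial-time algorithm (e.g., Hopcroft--Karp). Let $U \subseteq B$ denote the set of vertices on the $B$-side left unmatched by $M$; by Lemma~\ref{lem:deficiency}, $|U| = |B| - |M| = \mathrm{def}(G;B)$. Next, perform an alternating BFS/DFS from $U$: traverse unmatched edges from $B$ to $S$ and matched edges from $S$ back to $B$. Let $B_v \subseteq B$ be the set of $B$-vertices visited (including $U$ itself) and let $S_v \subseteq S$ be the set of $S$-vertices visited. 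Output $B_v$.

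The correctness argument proceeds as follows. Because $M$ is maximum, no alternating path starting at an unmatched $B$-vertex reaches an unmatched $S$-vertex (else it would be an augmenting path, contradicting maximality). Hence every vertex of $S_v$ is matched, and its matching partner lies in $B_v$ by construction of the search. Thus $M$ restricted to $S_v$ gives a perfect matching from $S_v$ into $B_v \setminus U$, which yields $|S_v| = |B_v| - |U|$. Moreover, every neighbor of a vertex in $B_v$ is reached by the BFS along an unmatched edge, so $N(B_v) \subseteq S_v$, and therefore
\begin{equation*}
|B_v| - |N(B_v)| \;\geq\; |B_v| - |S_v| \;=\; |U| \;=\; \mathrm{def}(G;B).
\end{equation*}
Since $\mathrm{def}(G;B)$ is the maximum possible deficiency of any subset of $B$, the inequality is in fact an equality, and $B_v$ attains it. So $B_v$ is a maximum-deficiency subset (equivalently, a max-difference Hall violator for the parameter $k = \mathrm{def}(G;B)$).

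For the decision version of the max-difference Hall violator problem on input $(G,k)$, simply compare $\mathrm{def}(G;B)$ to $k$: the answer is ``yes'' iff $k \leq \mathrm{def}(G;B)$, and in that case $B_v$ (or any subset of $B_v$ obtained by peeling off unmatched vertices down to the desired deficiency) serves as the certificate. The entire procedure runs in polynomial time, dominated by the maximum matching computation. The only subtle step is the observation that $N(B_v) \subseteq S_v$ and that $S_v$ is saturated by $M$ into $B_v \setminus U$; both follow directly from the definition of alternating reachability and the maximality of $M$, so I do not anticipate any real obstacle here.
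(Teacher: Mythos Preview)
Your proposal is correct and follows essentially the same approach as the paper: compute a maximum matching, take the unmatched $B$-vertices, grow the set by alternating-path reachability, and argue via ``no augmenting path'' plus Lemma~\ref{lem:deficiency} that the reached $B$-set has deficiency exactly $|U|=\mathrm{def}(G;B)$. The only cosmetic difference is that you phrase the key count as the inequality $|B_v|-|N(B_v)|\ge |B_v|-|S_v|=|U|$ and then invoke maximality to get equality, whereas the paper directly argues a bijection between $N_G(B_{max})$ and the matched part of $B_{max}$; these are the same argument.
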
\label{thm:poly_max_diff_violator}
\begin{proof}
    Note that finding the max-difference Hall violator is precisely equivalent to finding the subset of maximum deficiency. Start with a maximum matching $M$ on $G$, and let $B_{unmatched}$ be the subset of vertices in $B$ that are not saturated by $M$. If $B_{unmatched}=\emptyset$ then Hall theorem says there are no Hall violator. So we only consider $B_{unmatched}\neq \emptyset$. Let $B_{max}$ be the set of all vertices that are reachable via an alternating path from $B_{unmatched}$.

    We claim that the set $N_G(B_{max})$ is fully saturated by $M$. Consider any $s \in N_G(B_{max})$. By construction, $s$ is reachable from some unmatched vertex via an alternating path. If $s$ were not saturated by $M$, this would represent an augmenting path, which is a contradiction since $M$ is maximal.

    Additionally, we have that every $s \in N_G(B_{max})$ is matched to a vertex in $B_{max}$, again by construction of $B_{max}$ by the opportunity path. Thus, there is a bijection between matched vertices in $B_{max}$ and $N_G(B_{max})$, so by Lemma \ref{lem:deficiency}, it follows that $|B_{max}| - |N_G(B_{max})| = |B_{unmatched}| = \mathrm{def}(G; B)$, and $B_{max}$ is the max-difference Hall violator.
\end{proof}

As a corollary of Lemma~\ref{lemma:hv_reduction} and Lemma~\ref{thm:poly_max_diff_violator}, we can solve the vertex Hall violator problem in polynomial time.

\begin{corollary}
    The vertex Hall violator problem can be solved in polynomial time.
\end{corollary}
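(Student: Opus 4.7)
The plan is to simply chain together the two results that immediately precede the corollary. Given an instance $(G,b)$ of the vertex Hall violator problem, I would invoke Lemma~\ref{lemma:hv_reduction} to produce, in polynomial time, an equivalent instance $(G',k)$ of the max-difference Hall violator problem, where $G' = (B\setminus\{b\}\cup S\setminus N_G(b),E)$ and $k = |N_G(b)|$. Constructing $G'$ takes time linear in the size of $G$: remove the vertex $b$ and every vertex in its neighborhood (together with all incident edges), and read off $k$ as the degree of $b$.

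Next, I would run the polynomial-time algorithm from Theorem~\ref{thm:poly_max_diff_violator} on $(G',k)$. That algorithm returns the set $B_{\max}\subseteq B\setminus\{b\}$ of maximum deficiency. By Lemma~\ref{lem:deficiency}, this maximum deficiency equals $|B\setminus\{b\}|$ minus the size of a maximum matching in $G'$, so it can be computed (and the witness extracted) via one call to a bipartite maximum-matching routine followed by the alternating-path search used in the proof of Theorem~\ref{thm:poly_max_diff_violator}. I then compare $\mathrm{def}(G';B\setminus\{b\})$ against $k$: if it is at least $k$, the reduction in Lemma~\ref{lemma:hv_reduction} guarantees that $B_{\max}\cup\{b\}$ is a Hall violator of $G$ containing $b$, which I output as the yes-certificate; otherwise, the same lemma guarantees no such Hall violator exists, and I answer ``no.''

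Correctness is immediate from Lemma~\ref{lemma:hv_reduction}, and the total running time is polynomial because each of the two steps (constructing $G'$, and solving the max-difference Hall violator instance on it) is polynomial. There is no real obstacle here; the only thing to be slightly careful about is that Lemma~\ref{lemma:hv_reduction} is stated as a decision reduction, so I should observe explicitly that its proof in fact produces a certificate $B_v\cup\{b\}$ whenever one exists, which is what allows us to return a witness and not merely a yes/no answer.
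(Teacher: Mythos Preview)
Your proposal is correct and follows exactly the paper's approach: invoke Lemma~\ref{lemma:hv_reduction} to reduce to the max-difference Hall violator problem, then apply Theorem~\ref{thm:poly_max_diff_violator} to solve that in polynomial time. The paper's proof is just the two-line version of what you wrote; your additional remarks about constructing $G'$, comparing the deficiency against $k$, and recovering the certificate $B_{\max}\cup\{b\}$ are all sound elaborations of the same argument.
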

\begin{proof}
    By Lemma~\ref{lemma:hv_reduction}, the vertex Hall violator problem can be reduced to the max-difference hall violator problem, which can be solved in polynomial time using the algorithm described in Lemma~\ref{thm:poly_max_diff_violator}.
\end{proof}

Finally, we have all the tools necessary to conclude the proof of Theorem~\ref{thm:hom_min_nm_approx}. We restate the theorem, along with the proof below.

\begin{theorem}
    Given a buyer-seller pair $(b_i, s_j)$ not connected via a world edge, with $b_i \in \tilde{B}$, we can find the set of platform edges that maximizes the revenue generated from this pair in polynomial time.
\end{theorem}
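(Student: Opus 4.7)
The plan is to reduce the single-pair revenue problem to a sequence of vertex Hall violator queries, each of which is polynomial-time solvable by the algorithm underlying \Cref{thm:poly_max_diff_violator} combined with the reduction of \Cref{lemma:hv_reduction}. By \Cref{lemma:hall_violator_prices}, the maximum price at which $s_j$ can be sold to $b_i$ by any choice of platform edges is exactly
\[
    k^\star(b_i,s_j) \;=\; \max_{\substack{B_{vi}\subseteq \tilde B,\; b_i\in B_{vi}\\ B_{vi}\text{ a Hall violator of }G^{-j}}} \;\min_{b\in B_{vi}} v_b,
\]
with the convention that $k^\star(b_i,s_j)=0$ if no such Hall violator exists. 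Moreover, the ``if'' direction of that same lemma is constructive: once we identify a witness $B_{vi}$, we can explicitly write down the set $E_P$ of platform edges realizing the price, so the only real task is to compute $k^\star(b_i,s_j)$ together with a maximizing $B_{vi}$.

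The key observation is that the optimal price $k^\star(b_i,s_j)$ must equal $v_b$ for some $b\in\tilde B$, namely the buyer of minimum valuation inside the optimal $B_{vi}$. First I would sort the valuations in $\tilde B$ in decreasing order $v_{(1)}\geq v_{(2)}\geq\cdots\geq v_{(\min\{n,m\})}$. For each threshold level $t=v_{(r)}$, form the restricted bipartite graph $G^{-j}_t$ obtained from $G^{-j}$ by deleting every buyer in $\tilde B$ with valuation strictly less than $t$ (and of course all buyers outside $\tilde B$, since \Cref{lemma:hall_violator_prices} requires $B_{vi}\subseteq\tilde B$). Then invoke the vertex Hall violator oracle on $(G^{-j}_t,b_i)$: if it returns a Hall violator $B_{vi}$ containing $b_i$, then every buyer in $B_{vi}$ has valuation at least $t$ by construction, and so $B_{vi}$ certifies that $k^\star(b_i,s_j)\geq t$; conversely, if the oracle says ``no'', then no Hall violator of $G^{-j}$ containing $b_i$ can have its minimum valuation $\geq t$. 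Hence $k^\star(b_i,s_j)$ equals the largest threshold $t=v_{(r)}$ for which the oracle answers ``yes''.

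Executing this over all at most $\min\{n,m\}$ thresholds requires $O(\min\{n,m\})$ calls to the vertex Hall violator algorithm, each polynomial time by \Cref{lemma:hv_reduction} and \Cref{thm:poly_max_diff_violator}, yielding the desired maximizing Hall violator $B_{vi}^\star$. Plugging $B_{vi}^\star$ into the construction in the ``if'' direction of \Cref{lemma:hall_violator_prices} produces the explicit set of platform edges that realizes price $k^\star(b_i,s_j)$ on the transaction $(b_i,s_j)$. The whole procedure is polynomial, which is what we want.

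The main obstacle is to ensure the restriction to $\tilde B$ and to the threshold $t$ is made compatible with the Hall violator subroutine: the vertex Hall violator problem is stated for an arbitrary bipartite graph, but we need the violator to lie inside a prescribed vertex subset (here, $\tilde B\cap\{b:v_b\geq t\}$). This is handled cleanly by simply deleting the forbidden buyer vertices before calling the subroutine, since any Hall violator of the reduced graph is automatically a Hall violator of $G^{-j}$ whose buyers all satisfy the valuation constraint. The only point that needs a brief verification is that we have not inadvertently disabled some optimal witness: if the true optimum $B_{vi}^\star$ has $\min_{b\in B_{vi}^\star}v_b=v_{(r^\star)}$, then $B_{vi}^\star$ survives intact in $G^{-j}_{v_{(r^\star)}}$ and is detected at the threshold $t=v_{(r^\star)}$.
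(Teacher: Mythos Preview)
Your proposal is correct and takes essentially the same approach as the paper: reduce to \Cref{lemma:hall_violator_prices}, iterate over the $O(\min\{n,m\})$ candidate threshold valuations by deleting low-value buyers from $G^{-j}$, call the vertex Hall violator subroutine at each threshold, and then invoke the constructive direction of \Cref{lemma:hall_violator_prices} on the maximizing witness. The paper's proof is slightly terser but follows the same outline, including the observation that the optimal price must be one of the buyer valuations in $\tilde B$.
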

\begin{proof}
    By Lemma~\ref{lemma:hall_violator_prices}, we simply need to find the Hall violator $B_{vi} \subseteq \tilde{B}$ containing $b_i$ of $G^{-j} = (S \setminus \{s_j\} \cup \tilde{B}, E)$ that maximizes the minimum valuation among all buyers in $B_{vi}$.

    Note that $s_j$ sells to $b_i$ at one of $|\tilde{B}|$ prices. To find the max revenue, we can simply check these candidate prices by iteratively checking if there is a Hall violator $B_{vi}$ for $G^{-j} = (S \setminus \{s_j\} \cup \tilde{B}, E)$ such that $b_i \in B_{vi}$, removing the lowest-value bidders from the graph at each round.

    That is, we look for a Hall violator containing $b_i$, allowing all buyers in $\tilde{B}$ to be used. We repeat this process, allowing all buyers except for $b_{(m)}$ to be used. Next, we allow all buyers except for $b_{(m)}, b_{(m - 1)}$ to be used, continually removing the buyers with the smallest values until no such Hall violator exists. Note that this will indeed find us the Hall violator that includes $b_i$ with optimal maximin value.

    After finding the Hall violator with the optimal maximin value, the second statement of Lemma~\ref{lemma:hall_violator_prices} gives a way to construct platform edges to sell $s_j$ to $b_i$ at such price, concluding the proof.
\end{proof}

\HomoMinNM*
\begin{proof}
    The revenue optimal platform matching has at most $\min\{n,m\}$ edges that transact. One of the edges will generate more than $\frac{1}{\min\{n,m\}}$. By Lemma 13.8, the platform can iterate through all buyer-seller pairs $(b_i, s_j)$ and check for the maximum revenue attainable from this pair. The best pair guarantees a $\frac{1}{\min\{n, m\}}$-fraction of the optimal revenue.
\end{proof}
\end{document}